\documentclass{article}
\usepackage[utf8]{inputenc}
\usepackage{float}
\usepackage{jheppub}
\usepackage{amsmath}
\usepackage[english]{babel}
\usepackage{amssymb}
\usepackage{mathtools}
\usepackage{color}
\definecolor{ao}{rgb}{0.0, 0.5, 0.0}

\usepackage{graphicx}
\usepackage{longtable}
\usepackage{tikz}
\usepackage{comment}
\raggedbottom
\usepackage{capt-of}
\usepackage{amsthm}
\usepackage{hyperref}
\hypersetup{
    colorlinks=true,
    linkcolor=blue
    }

\usepackage{bm}
\usepackage{tikz}
\usetikzlibrary{backgrounds}
\usepackage{caption}
\usepackage{subcaption}

\usepackage{listofitems} 
\usetikzlibrary{arrows.meta} 
\usepackage[outline]{contour} 
\contourlength{1.4pt}

\tikzset{>=latex} 
\usepackage{xcolor}
\colorlet{myred}{red!80!black}
\colorlet{myblue}{blue!80!black}
\colorlet{mygreen}{green!60!black}
\colorlet{myyellow}{yellow!60!black}
\colorlet{myorange}{orange!85!red!90!black}
\colorlet{mydarkred}{red!30!black}
\colorlet{mydarkblue}{blue!40!black}
\colorlet{mydarkgreen}{green!30!black}
\tikzstyle{node}=[thick,circle,draw=myblue,minimum size=22,inner sep=0.5,outer sep=0.6]
\tikzstyle{node1}=[thick,rectangle,draw=myblue,minimum width = 1cm,  minimum height = 2cm,inner sep=-0.3,outer sep=0.3]
\tikzstyle{node in}=[node,green!20!black,draw=mygreen!30!black,fill=mygreen!8]
\tikzstyle{node hidden}=[node,blue!20!black,draw=myblue!30!black,fill=myblue!10]
\tikzstyle{node convol}=[node,orange!20!black,draw=myorange!30!black,fill=myorange!20]
\tikzstyle{node out}=[node,red!20!black,draw=myred!30!black,fill=myred!20]
\tikzstyle{connect}=[thick,mydarkblue] 
\tikzstyle{connect arrow}=[-{Latex[length=4,width=3.5]},thick,mydarkblue,shorten <=0.5,shorten >=1]
\tikzset{ 
  node 1/.style={node in},
  node 2/.style={node hidden},
  node 3/.style={node out},
}

\newtheorem{remark}{Remark}

\newtheorem{proposition}{Proposition}
\newtheorem{corollary}{Corollary}
\newtheorem{definition}{Definition}
\newtheorem{lemma}{Lemma}
\newtheorem{assumption}{Assumption}

\def\0{\mbox{\tiny $0$}}
\def\1{\mbox{\tiny $1$}}
\def\2{\mbox{\tiny $2$}}
\def\3{\mbox{\tiny $3$}}
\def\4{\mbox{\tiny $4$}}
\def\5{\mbox{\tiny $5$}}
\def\6{\mbox{\tiny $6$}}
\def\7{\mbox{\tiny $7$}}
\def\8{\mbox{\tiny $8$}}
\def\9{\mbox{\tiny $9$}}

\def\etaM{\hat{\eta}}

\def\k{k_{_{B}}}

\def\r{\rangle}
\def\R{\mathcal{R}}
\def\l{\langle}
\def\m{\bar{m}}

\def\q{\bar{q}}
\def\n{\bar{n}}

\def\b{\beta^{'}}

\def\bbt{\tilde{\beta}}

\newcommand{\NotAlpha}{\varphi}

\newcommand{\SOMMA}[2]{\displaystyle\sum\limits_{#1}^{#2}}

\newcommand{\resub}[1]{\textcolor{black}{#1}}

\usetikzlibrary{fit}

\long\def \beq#1\eeq {\begin{equation} #1 \end{equation}}
\long\def \beaq#1\eeaq {\begin{equation}\begin{aligned} #1 \end{aligned}\end{equation}}
\long\def \bes#1\ees {\begin{equation}\begin{split} #1 \end{split} \end{equation}}
\long\def \bea#1\eea {\begin{eqnarray} #1 \end{eqnarray}}
\long\def \bse[#1]#2\ese {\begin{subequations}\label{#1}\begin{align} #2 \end{align}\end{subequations}}

\setlength{\parindent}{0pt}


\newcommand{\si}{\sigma_i}

\title{Dense Hebbian neural networks: \\ a replica symmetric picture of unsupervised learning}

\author[a]{Elena Agliari\footnote{Given her role as Editor of this journal, EA had no involvement in the peer-review of articles for which she was an author and had no access to information regarding their peer-review. Full responsibility for the peer-review process for this article was delegated to another Editor.},}
\author[b,f,g]{Linda Albanese,}
\author[b,f]{Francesco Alemanno,}
\author[c]{Andrea Alessandrelli,}
\author[b,f]{Adriano Barra,}
\author[d,e]{Fosca Giannotti,}
\author[c,f]{Daniele Lotito,}
\author[c]{Dino Pedreschi.}

\affiliation[a]{Dipartimento di Matematica, Sapienza Universit\`a di Roma, Piazzale Aldo Moro, 5, 00185, Roma, Italy}


\affiliation[b]{Dipartimento di Matematica e Fisica,  Universit\`a  del Salento, Via per Arnesano, 73100, Lecce, Italy}

\affiliation[c]{Dipartimento di Informatica, Università di Pisa, Lungarno Antonio Pacinotti, 43, 56126, Pisa, Italy}

\affiliation[d]{Scuola Normale Superiore, Piazza dei Cavalieri 7, 56126, Pisa, Italy}

\affiliation[e]{Istituto di Scienza e Tecnologie dell' Informazione, Via Giuseppe Moruzzi, 1, 56124 Pisa, Italy}

\affiliation[f]{Istituto Nazionale di Fisica Nucleare, Campus Ecotekne, Via Monteroni, 73100, Lecce, Italy}

\affiliation[g]{Scuola Superiore ISUFI, Campus Ecotekne, Via Monteroni, 73100, Lecce, Italy}


\abstract{We consider dense, associative neural-networks trained with no supervision and we investigate their computational capabilities analytically, via statistical-mechanics tools, and numerically, via Monte Carlo simulations. In particular, we obtain a phase diagram summarizing their performance as a function of the control parameters (e.g. quality and quantity of the training dataset, network storage, noise) that is valid in the limit of large network size and structureless datasets. Moreover, we establish a bridge between macroscopic observables standardly used in  statistical mechanics and loss functions typically used in the machine learning.
\newline
As technical remarks, from the analytical side, we extend Guerra's interpolation to tackle the non-Gaussian distributions involved in the post-synaptic potentials  while, from the computational counterpart, we insert Plefka's approximation in the Monte Carlo scheme, to speed up the evaluation of the synaptic tensor, overall obtaining a novel and broad approach to investigate unsupervised learning in neural networks, beyond the shallow limit.}

\begin{document}

\maketitle

\section{Introduction} 

Since the seminal work on ``Hebbian Learning'' by John Hopfield in 1982 \cite{Hopfield} and the paradigmatic discoveries  on the landscape of spin-glasses by Giorgio Parisi around the 1980 \cite{MPV}, statistical mechanics has run as a theoretical tool to explain the emergent properties of large assemblies of neurons. The subsequent investigations by Amit, Gutfreund and Sompolinsky \cite{AGS} have definitively established statistical-mechanics as a leading discipline for theoretical studying the collective properties of systems of neurons. In fact, in the last decades, the scientific literature has hosted a large number of contributions on  {\em statistical mechanics of neural networks} (see e.g. \cite{LenkaJPA,LenkaNature,LenkaCarleo}), including countless variations on the original Hopfield model (see e.g. \cite{Auro1,Auro2,Anto1,Anto2,AAAF-JPA2021,Fachechi1,Pierlu1,Pierlu2,Longo}). Among these, Hebbian networks with higher-order interactions, also referred to as {\em dense neural networks} or $P$-spin Hopfield models, were promptly introduced already in the 80's by theoretical physicists (see e.g. \cite{Gardner,Baldi}) as well as by computer scientists (see e.g. \cite{Senio1}).

In the last few years, a renewed interest has raised for dense networks, in fact, on the one hand they have been shown to display intriguing properties as for applications (e.g., they turned out to be robust against adversarial attacks \cite{Krotov2018}, they can perform pattern recognition at prohibitive signal-to-noise level \cite{BarraPRLdetective}) and, on the other hand, many technical issues concerning their analytical and numerical investigation still deserve attention \cite{AuffingerCMP2013, SubagAP2017, SubagProb2017}. Further, recent advances in the usage of pair-wise Hebbian-like networks for learning tasks \cite{EmergencySN,prlmiriam} could be fruitfully extended to the higher-order case. In this work we aim to address these points. 
\newline
More precisely, as for the analytical investigation, we apply interpolating techniques (see e.g., \cite{guerra_broken,Fachechi1}) and extend their range of applicability to include the challenging case of non-Gaussian local fields as it happens for dense networks.
As for the numerical investigation, we propose a strategy based on Plefka expansion \cite{Plefka1,Plefka2} to overcome the strong difficulties implied by the update of a giant synaptic tensor in Monte Carlo (MC) simulations with a remarkable speed up.

Concerning the learning tasks, it should be recalled that, despite its name,  
``Hebbian learning'' (in its standard meaning in Statistical Mechanics, that is provided by the Amit-Gutfreund-Sompolinsky theory of the Hopfield model \cite{Amit}) is actually a \emph{storing} rule and there is no training dataset or inference activity underlying. Yet, recently, the Hebbian rule has been shown to be recastable into a genuine learning rule allowing for both supervised and unsupervised modes and the resulting pair-wise neural network is feasible for a full statistical-mechanics investigation \cite{EmergencySN,prlmiriam}. In this work, we extend this framework to dense neural networks focusing on the unsupervised setting and referring to \cite{super} for the supervised one; the analytical investigations are led under the replica-symmetry (RS) approximation and for structureless datasets.
There are several results stemming from this study.
\newline
First, from an analytical perspective, we are able to summarize the behavior of the network into a phase diagram, namely to highlight in the space of the control parameters of the network (e.g., noise, storage, training-set size, training-set quality) the existence of different regions corresponding to  different computational skills shown by the network. As we will deepen, this is a major reward of the statistical mechanics approach and yields pivotal information towards a {\em sustainable} artificial intelligence since its knowledge allows {\em a priori} setting the machine parameters in the best configuration for a given task. For instance, in this context we can assess the minimal size of the dataset, as function of the dataset quality and the amount of patterns to  store, necessary for a successful training of the machine and we can also estimate the largest amount of information that the machine can safely handle. 
\newline
Second, from a computational perspective, we inserted effective Plefka dynamics in Monte Carlo simulation to obtain a significant speed up for the update of the synaptic tensor (whose handling is otherwise cumbersome). This allowed us to test analytical prediction from the random theory confirming that these networks   -if suitably trained- enjoy pattern recognition capabilities remarkably robust w.r.t. vast amount of noise (as compared to their shallow counterpart) as well as a supra-linear storage of patterns. However, nothing comes for free and the price to pay to enjoy these enhanced information processing capabilities lies in the huge amount of examples that the network has to experience before a correct learning can take place. Thresholds for learning and  maximal storage capabilities also have all been confirmed numerically. 
\newline
Finally, more of conceptual interest, we link quantifiers for machine retrieval (e.g., cost function and Mattis magnetization) to quantifiers for machine learning (e.g., loss function), making these two capabilities of neural networks -learning and retrieval- two aspects of a unified cognitive process and yielding a cross-fertilization between the two related fields (i.e., statistical mechanics and machine learning).

The paper is structured as follows:  in Sec.~\ref{definitions} we briefly review the state-of-the-art on Hebbian learning and in Sec.~\ref{sec:dense_unsup} we introduce the unsupervised dense Hebbian networks and define the related macroscopic observables necessary for its investigation. Next, in Sec.~\ref{sec:loss}, we discuss the connection between performance quantifiers stemming from, respectively, statistical-mechanics and machine learning. Then, in Sec.~\ref{sec:solution}, we study the model exploiting the statistical-mechanics framework, while the numerical investigation is addressed in Sec.~\ref{sec:num}, by relying on Monte Carlo simulations. Finally, in Sec.~\ref{sec:conclusions} we summarize and discuss our results. Technical details are collected in the Appendices.

Finally we stress again that the present manuscript is dedicated to the study of unsupervised learning, while the supervised protocol for dense networks is addressed in a twin work \cite{super}.


\section{Prelude: from Hebbian storing to Hebbian learning}\label{definitions}
The standard Hopfield model is built upon $N$ binary neurons, denoted as $\boldsymbol \sigma = (\sigma_1, \sigma_2, ..., \sigma_N)  \in \{ -1, +1 \}^N$, that are employed to reconstruct the information encoded in $K$ binary vectors of length $N$, also called \emph{patterns} and denoted as $\boldsymbol \xi^{\mu} \in \{-1, +1 \}^N$ with $\mu= 1,\hdots,K$, whose entries are Rademacher random variables, namely, for the generic $(i,\mu)$ entry
\begin{equation} \label{eq:rademacher}
    \mathbb{P}(\xi_{i}^{\mu}) =  \frac{1}{2} \left [ \delta_{\xi_{i}^{\mu}, -1} + \delta_{\xi_{i}^{\mu}, +1} \right].
\end{equation}
This information is allocated in the synaptic matrix, namely in the couplings $\boldsymbol J = \{J_{ij}\}_{i,j=1,...,N}$ among neurons, defined according to Hebb's rule
\begin{equation}\label{eq:hebb}
J_{ij}=\frac{1}{N}\sum_{\mu=1}^K \xi_i^{\mu}\xi_j^{\mu},
\end{equation}
ensuring that, under suitable conditions, the system can play as an associative memory (\emph{vide infra}).
The network has a Hamiltonian cost-function representation 
$$
\mathcal H_{N,K}^{\textrm{(H)}}(\boldsymbol \sigma| \boldsymbol \xi)= -\sum_{i<j}^{N,N}J_{ij}\sigma_i\sigma_j = -\frac{N}{2} \sum_{\mu=1}^K m_{\mu}^2+\dfrac{K}{2},
$$
where in the last equivalence we used \eqref{eq:hebb} and we introduced the Mattis magnetization 
\begin{equation} \label{eq:Mattis}
m_{\mu}:=\frac{1}{N} \sum_{i=1}^N \xi_i^{\mu}\sigma_i.
\end{equation}
The occurrence of a neural configurations $\boldsymbol \sigma$ is ruled by the Boltzmann-Gibbs probability $\propto e^{-\beta \mathcal H_{N,K}^{\textrm{(H)}}(\boldsymbol \sigma | \boldsymbol \xi) }$
where $\beta:=1/T$ tunes the degree of stochasticity and, in a physical jargon, represents the inverse of the temperature.
\newline
The relaxation to a state where $m_{\mu}$ is close to $1$ is interpreted as the \emph{retrieval} of the pattern $\boldsymbol \xi^{\mu}$. 

As anticipated, the statistical-mechanical analysis allows summarizing the performance of a network into a phase diagram, which highlights the existence of qualitatively different behaviors of the system as its control parameters are tuned. Here the control parameters are the above-mentioned temperature $T$, also referred to as ``fast noise'', and the load $\alpha$ defined as $\alpha:= \lim_{N \to \infty} K/N$, also referred to as ``slow noise''. The phase diagram for the Hopfield model (see e.g., \cite{AGS,Coolen}) is reported in Fig.~\ref{fig:diagramma} (left panel): one can notice that this machine is able to work as an associative memory solely in the retrieval region, corresponding to loads $\alpha<\alpha_c \approx 0.14$. Thus, it is pointless trying to allocate a larger amount of patterns as the machine will not be able to retrieve them: the knowledge of the phase diagram allows us to use this information \emph{a priori}, before any trial is performed, thus potentially saving energy and CPU time\footnote{Optimized protocols in AI are especially longed for as, at present, training AI on large scale can result in conflicts with green policies \cite{MITpress}.}.

 \begin{figure}[t]
 \centering
    \includegraphics[scale=0.8]{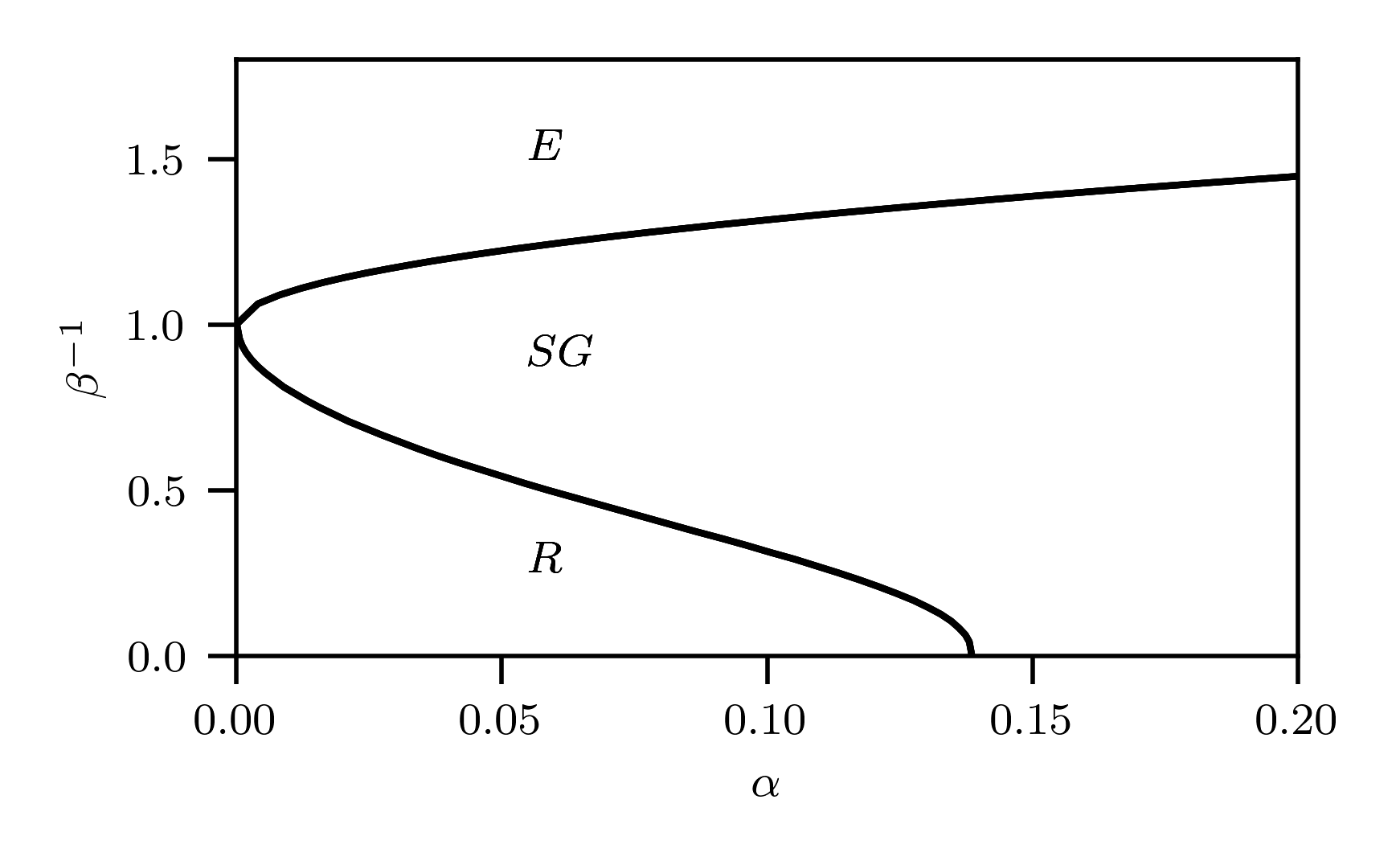}
\includegraphics[scale=0.8]{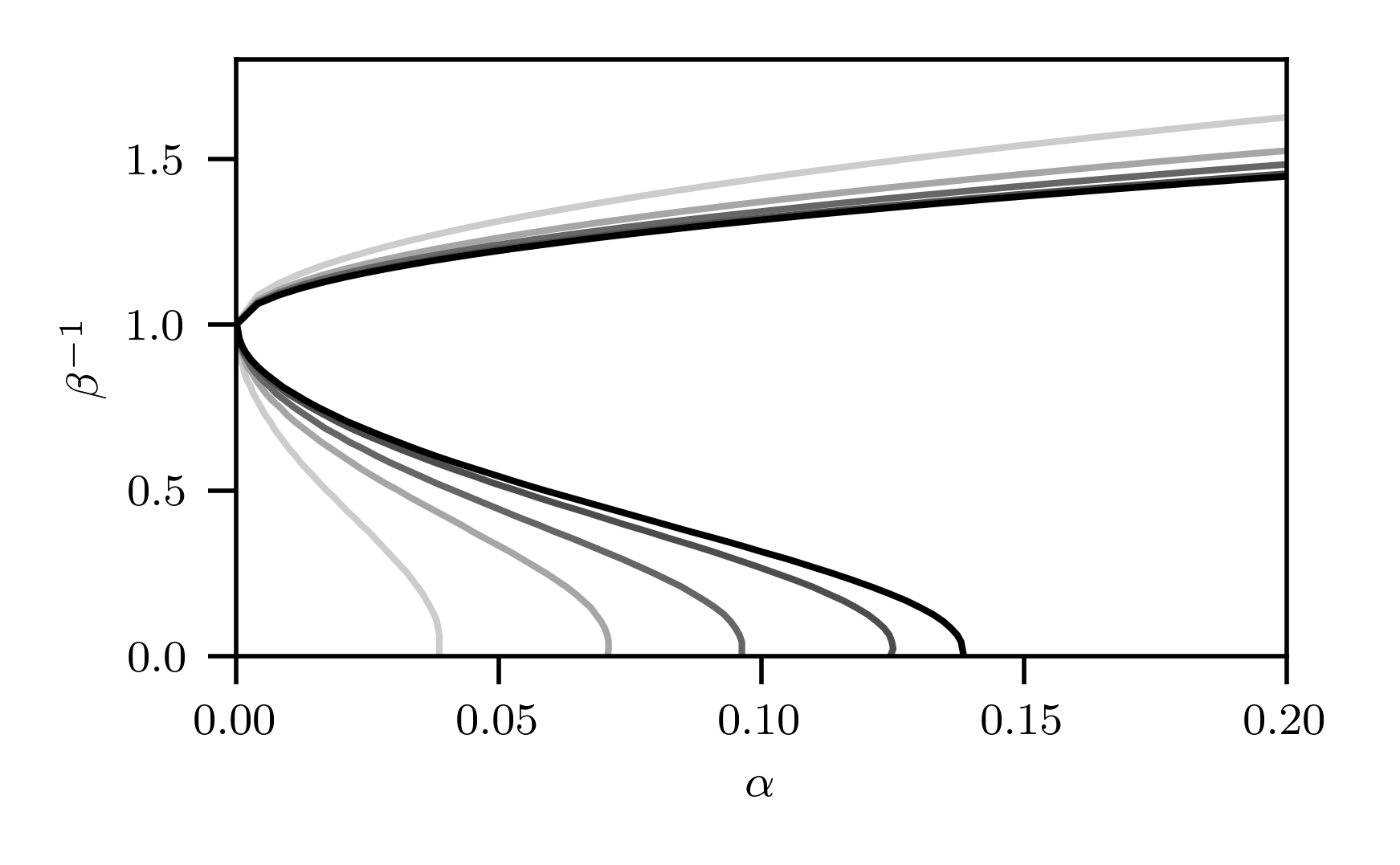}
    \caption{Left: Phase diagram of the Hopfield model. Three regions corresponding to qualitatively different behaviors of the system are highlighted: ergodic (E, where fast noise prevails), spin-glass (SG, where slow noise prevails) and retrieval [R, where (a close neighbourhood of) each pattern $
\boldsymbol \xi^{\mu}$ plays as an attractor for the neural configuration and consequently the system can perform pattern recognition as an associative memory]. In particular, in the retrieval region, if a noisy example of a pattern, say $\boldsymbol \eta^{\mu}$, is inputted to the network (namely the neuron configuration is initialized as $\boldsymbol \sigma = \boldsymbol \eta^{\mu}$), the latter reconstructs the original pattern (namely $\boldsymbol \sigma$ spontaneously relaxes (close) to $\boldsymbol \xi^{\mu}$). 
Right: Phase diagram of the unsupervised Hopfield model. The darker the lines, the better the quality $r$ of the supplied dataset ($r = 0.4, 0.5, 0.6, 1$).
As the quality of the dataset improves, the retrieval region expands, and when the dataset quality saturates to $1$ (black line) the phase diagram recovers the one in the left panel. Here $M=40$ and analogous results are found by retaining $r$ constant and increasing $M$.
}
    \label{fig:diagramma}
\end{figure}

Despite the expression \eqref{eq:hebb} is often named Hebbian {\em learning}, the above model has little to share with machine learning as there are no real learning processes underlying. These could however be introduced with minimal modification of Eq.~\eqref{eq:hebb}, as we are going to explain. Let us treat each pattern $\boldsymbol \xi^{\mu}$ as an \emph{archetype} and use it to generate a training set of $M$ \emph{examples} for each archetype. Denoting with $\boldsymbol \eta^{\mu,a} \in \{-1, +1\}^N$ the $a$-{th} example of the $\mu$-{th} archetype, we can write two generalizations of the above Hebbian rule, namely
\begin{eqnarray}
\label{eq:J_unsup}
J_{ij}^{(unsup)} &=& \frac{1}{NM}\sum_{\mu=1}^K \sum_{a=1}^M \eta_i^{\mu,a}\eta_j^{\mu,a},\\  
\label{eq:J_sup}
    J_{ij}^{(sup)} &=&\frac{1}{NM^2}\sum_{\mu=1}^K \Big(\sum_{a=1}^M \eta_i^{\mu,a} \Big) \Big(\sum_{a=1}^M \eta_j^{\mu,a}\Big),
\end{eqnarray}
where, in the first expression there is no teacher that knows the labels and can cluster the examples archetype-wise as it happens in the second scenario, this is why the two generalizations are associated to, respectively, unsupervised and supervised protocols \cite{EmergencySN,prlmiriam}.
\newline
In order to build our dataset $\{ \boldsymbol \eta^{\mu,a} \}_{a=1,...,M}^{\mu=1,...,K}$, we generate $M$ randomly-perturbed copies of each archetype, interpreted as {\em examples} and whose entries $(i,\mu,a)$ are described by
\begin{equation} \label{eq:Bernoulli}
    \mathbb{P}(\eta_{ i}^{\mu,a}|\xi^\mu_i) = \frac{1-r}{2} \delta_{\eta_{ i}^{\mu,a},-\xi_{ i}^{\mu}} + \frac{1+r}{2} \delta_{\eta_{ i}^{\mu,a},\xi_{ i}^{\mu}}, 
\end{equation}
in such a way that $r \in [0,1]$ assesses the training-set \emph{quality}, that is, as $r \rightarrow 1$ the example matches perfectly the archetype, whereas for $r \to 0$ an example is, in the average, orthogonal to the related archetype. \\
A natural question is thus wondering the existence of a threshold $M_\otimes$ beyond which the network can certainly infer the archetypes that gave rise to a newly experienced example.
A schematic representation to figure out how the learning process of the archetype works in this kind of network is provided in Fig.~\ref{fig:cartoon}. 

The unsupervised, pairwise Hopfield model supplied with this kind of dataset has been investigated in details in \cite{EmergencySN,prlmiriam}, obtaining a full statistical-mechanics description summarized in the phase diagram reported in Fig.~\ref{fig:diagramma} (right panel). Interestingly, one can see that, as the dataset is impaired (because either $r$ or $M$ is reduced), the retrieval region shrinks.

A useful quantity to assess the overall information content of the dataset $\{ \boldsymbol \eta^{\mu,a} \}_{a=1,...,M}^{\mu=1,...,K}$
is given by $\rho=\frac{1-r^2}{M r^2}$, which in the following shall be referred to as \emph{dataset entropy}. Strictly speaking, $\rho$ is not an entropy, yet here we allow ourselves for this slight abuse of language because, as discussed in \cite{prlmiriam}, the conditional entropy, that quantifies the amount of information needed to describe the original message $\boldsymbol{\xi}^\mu$ given the set of related examples $\{\boldsymbol{\eta}^{\mu,a}\}_{a=1,...,M}$, is a monotonically increasing function of $\rho$.

 \begin{figure}[t] 
    \centering
    \includegraphics[width=14cm]{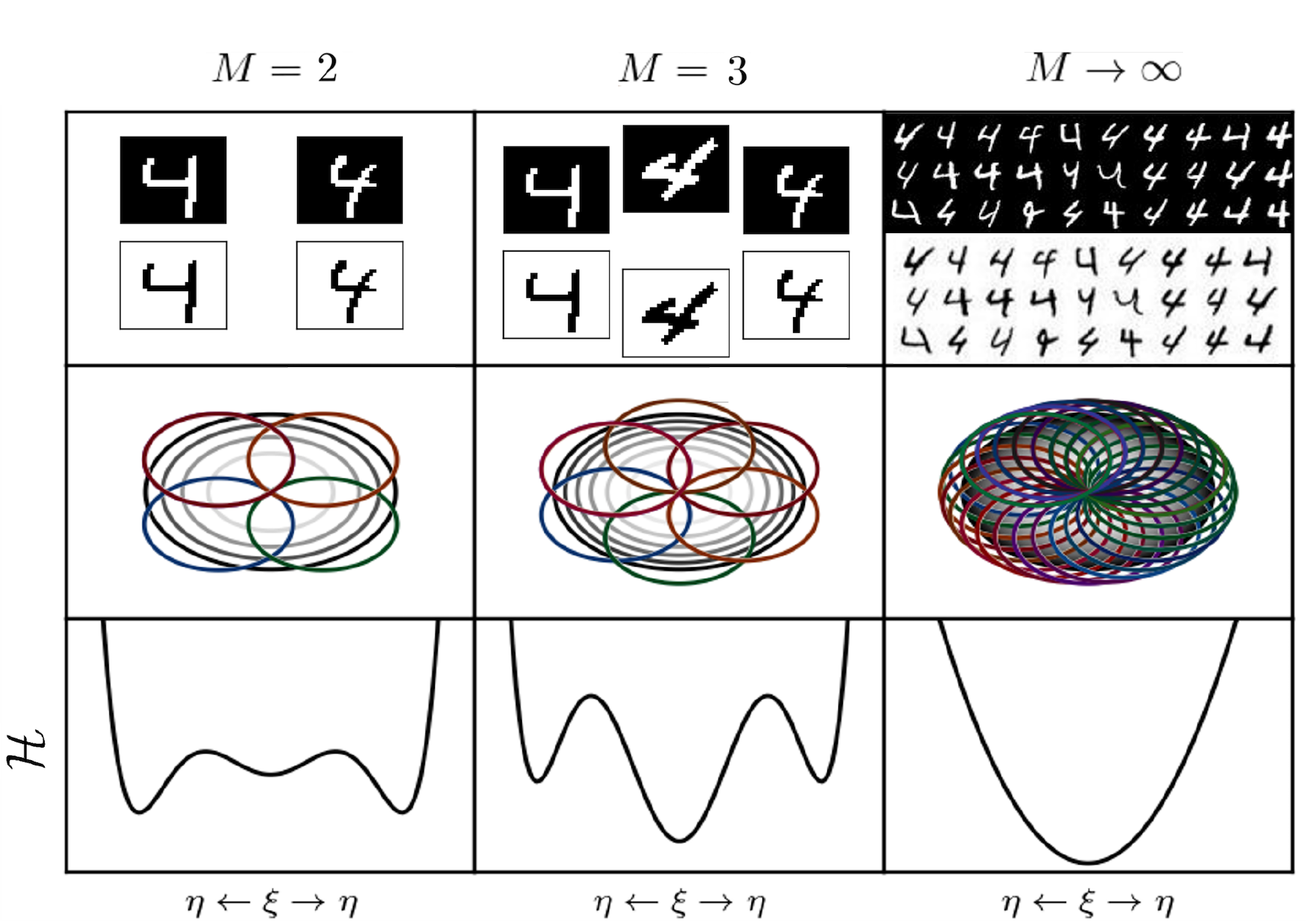}
    \caption{Intuitive representation of the process of learning an archetype. In the upper row we show the neuron configurations corresponding to the supplied examples, in the middle row we schematically depict the attraction basins determined by these examples, and in the lower row we sketch a plausible cross-sectional view of the Hamiltonian function in a fictitious one dimensional representation of the configuration space.
    Going from left to right, as the number of examples $M$ grows, the network learns to generalize from them by constructing a faithful representation of the generic archetype $\boldsymbol \xi$ (that has never been supplied to the network). The network tends to store at first each single example $\boldsymbol \eta^a$ without being able to retrieve the archetype (left column), thus the deepest minima of the Hamiltonian correspond to examples. Then, a minimum, close to $\boldsymbol \xi$, appears and coexists with the other minima (middle column) and, finally, a unique stable minimum corresponding to the archetype emerges (right column). 
    The variation of the energy landscape as $M$ changes depends on the network architecture and on the dataset. 
    }
    \label{fig:cartoon}
\end{figure}

\section{Dense Hebbian neural networks in the unsupervised setting} \label{sec:dense_unsup}

We consider a network with $N$ Ising  neurons $\si \in  \{ -1, + 1 \}$ with $i= {1,\hdots ,N}$, $K$ Rademacher archetypes $\bm \xi^{\mu} \in \{ -1, + 1 \}^N$ and $M$ noisy examples $\bm \eta^{\mu,a} \in \{-1, +1\}^N$ per archetype  $\bm \xi^{\mu}$ with $a= 1,\hdots ,M$ and $\mu =1,\hdots, K$, whose entries are drawn according with, respectively, \eqref{eq:rademacher} and \eqref{eq:Bernoulli}.
\newline
In the network considered here interactions among neurons are $P$-wise and their magnitude is obtained by generalizing \eqref{eq:J_unsup}, as captured by the next
\begin{definition} \label{def:dense_unsup}
The cost-function (or {\em Hamiltonian}) of the dense Hebbian neural network in the unsupervised regime is 
\begin{align}
    \mathcal{H}_{N,K,M,r}^{(P)}(\bm \sigma \vert \bm \eta) =& -\dfrac{1}{\R^{P/2}\,M\,N^{P-1}}\SOMMA{\mu=1}{K}\SOMMA{a=1}{M}\left(\SOMMA{(i_1,\hdots,i_P)}{N,\hdots,N}\eta^{\mu\,,a}_{i_1}\cdots\eta^{\mu\,,a}_{i_P}\sigma_{i_1}\cdots\sigma_{i_P}\right),
    \label{def:H_PHopEx}
\end{align}
where $P$ is the interaction order (assumed as even), $\R=r^2+\frac{1-r^2}{M}$ corresponds to $\mathbb{E}_{\xi}\mathbb{E}_{(\eta \vert \xi)}\left[\sum_a \eta_i^{\mu,a}/(Mr)\right]^2$ (and plays as a normalization factor) and  we also define  $\sum\limits_{(i_1,\hdots,i_P)}^{N,\hdots,N} =\sum\limits_{\underset{i_1\neq\hdots\neq i_P}{i_1,\hdots,i_P}}^{N,\hdots,N}$ (namely the summation in which only terms with all different "$i$" indices are taken into account). Further, the factor $\frac{1}{N^{P-1}}$ in the right-hand side ensures the linear extensiveness of the Hamiltonian in the network size $N$. 
\newline
The related partition function is defined as 
\begin{align}
    \mathcal{Z}^{(P)}_{N,K,M,r,\beta }(\bm \eta) = \sum_{\bm \sigma}^{2^N} \exp \left( -\beta \mathcal{H}^{(P)}_{N,K,M,r}(\bm \sigma \vert \bm\eta)\right) =: \sum_{\bm \sigma}^{2^N} \mathcal{B}^{(P)}_{N,K,M,r, \beta}(\bm \sigma  \vert  \bm\eta),
    \label{eq:Def_orig_Z}
\end{align}
    where $\mathcal{B}^{(P)}_{N,K,M,r, \beta}(\bm \sigma \vert  \bm\eta)$ is referred to as Boltzmann factor.
\newline 
At finite network-size $N$, the quenched statistical  pressure (or free energy\footnote{The free energy $\mathcal F^{(P)}_{N,K,M,r,\beta}$ equals the statistical pressure, a factor $-\beta$ apart, i.e. $\mathcal A^{(P)}_{N,K,M,r,\beta }= - \beta  \mathcal F^{(P)}_{N,K,M,r,\beta}$. Thus, extremizing the former results in the same self-consistency equations for the macroscopic observables that we would obtain by extremizing the latter; in this paper we use mainly the statistical pressure with no loss of generality.} with no loss of generality.) of the model reads as 
    \begin{align}
        \mathcal A^{(P)}_{N,K,M,r,\beta} = \frac{1}{N}\mathbb{E}\log \mathcal{Z}^{(P)}_{N,K,M,r, \beta}(\bm\eta)
        \label{PressureDef_unsup}
    \end{align}
    where $\mathbb{E}=\mathbb{E}_{\xi}\mathbb{E}_{(\eta|\xi)}$ denotes the average over the realization of examples, namely over the distributions \eqref{eq:rademacher} and \eqref{eq:Bernoulli}.
    By combining the quenched average $\mathbb{E}[\cdot]$ and the Boltzmann average
    \begin{equation} \label{omegaNKM}
    \omega[(\cdot)]:= \frac{1}{\mathcal{Z}^{(P)}_{N,K,M,r, \beta}( \bm \eta)} \sum_{\boldsymbol \sigma}^{2^N} ~ (\cdot) ~ \mathcal{B}^{(P)}_{N,K,M,r, \beta}(\bm \sigma \vert  \bm\eta),
    \end{equation}
    possibly replicated over two or more replicas\footnote{A replica is an independent copy of the system characterized by the same realization of disorder, namely by the same realization of the archetypes and examples. Thus, two replicas are sampled from the same distribution $\mathcal B^{(P)}_{N,K,M,r,\beta}(\boldsymbol \sigma| \bm\eta)$.
    Comparing two copies allows us to determine whether slow noise prevails, that is, whether the interference between archetypes and examples prevents the system to retrieve, see Sec.~\ref{sec:solution}}, that is, $\Omega:=\omega \times \omega ...\times \omega$, we get the expectation
       \begin{equation}
    \langle \cdot \rangle := \mathbb{E} \Omega(\cdot).
     \end{equation}
\end{definition}

\begin{remark}
An integral representation of the partition function will be useful in the following numerical computations.
Starting from Eq. \eqref{eq:Def_orig_Z}, we apply the Hubbard-Stratonovich transformation to get

\small
\begin{align} \label{eq:integral}
    &\mathcal{Z}^{(P)}_{N,K,M ,r,\beta} (\bm\eta) =  \sum_{\bm \sigma} \int \prod_{\mu,a} d \tilde{\mu}(z_{\mu,a}) \exp \left[ \sqrt{\dfrac{\beta '}{\R^{P/2} M\,N^{P-1}}} \, \SOMMA{\mu>1}{K}\SOMMA{a=1}{M} \SOMMA{ \ i_{_1},\cdots ,i_{_{P/2}}}{N,\cdots,N}\eta^{\mu\,,a}_{i_1}\cdots\eta^{\mu\,,a}_{i_{P/2}}\:\sigma_{_{i_1}}\cdots\sigma_{_{i_{_{P/2}}}}\,z_{_{\mu\,,a}}\right]
\end{align}
\normalsize
where $d \tilde{\mu} (z_{\mu,a}) = \dfrac{\exp(-z_{\mu,a}^2/2)}{\sqrt{2\pi}} dz_{\mu,a}$ is a Gaussian measure and we posed $\beta'= 2\frac{\beta}{P!}$. Moreover, we have exploited
\begin{equation}
    \begin{array}{lll}
         &\dfrac{P!}{2 N^{P-1}}\SOMMA{(i_1,\cdots i_{P})}{N,\cdots,N}\,\left( \Phi_{i_1}^{\mu} \hdots \Phi_{i_{P}}^{\mu} \right)=\dfrac{1}{2 N^{P-1}}\SOMMA{i_1,\cdots i_{P}}{N,\cdots,N}\,\left( \Phi_{i_1}^{\mu} \hdots \Phi_{i_{P}}^{\mu} \right)+ \mathcal{O}(N^{P/2-1})\,
    \end{array}
\end{equation}
with $\Phi_i^\mu$ is any finite random variable and we have neglected the subleading network-size terms.
\newline
We can think of the above transformation as a mapping between  the original dense Hebbian network and a restricted Boltzmann machine (RBM)  where $K\times M$  hidden neurons $z_{\mu,a}$ (equipped with a Gaussian prior) interact with the $N$ visible neurons $\boldsymbol{\sigma}$ grouped in sets each made of $P/2$ neurons $\sigma_{i_1}\cdots\sigma_{i_{P/2}}$ with weight $\eta_{i_1}^{\mu,a_1} \cdots \eta_{i_{P/2}}^{\mu,a_{P/2}}$. A schematic representation of the dense Hebbian network  and its dual RBM are shown for a simple case in Fig. \ref{fig:network}.
\end{remark}

\begin{figure}[t]
\centering
\begin{tikzpicture}[x=2.0cm,y=1.2cm]

\node[node in,outer sep=0.6] (S-1) at (-3.5,0.2) {$\sigma_{1}$};
\node[node in,outer sep=0.6] (S-2) at (-2.5,0.2) {$\sigma_{2}$};
\node[node in,outer sep=0.6] (S-3) at (-4.0,-1.1) {$\sigma_{3}$};
\node[node in,outer sep=0.6] (S-4) at (-2.0,-1.1) {$\sigma_{4}$};
\node[node in,outer sep=0.6] (S-5) at (-3.5,-2.4) {$\sigma_{5}$};
\node[node in,outer sep=0.6] (S-6) at (-2.5,-2.4) {$\sigma_{6}$};

\draw[line width=1.5pt] (S-1) -- (S-5);
\draw[line width=1.5pt] (S-1) -- (S-2);
\draw[line width=1.5pt] (S-2) -- (S-3);
\draw[line width=1.5pt] (S-3) -- (S-5);

\begin{scope}[on background layer]
\path [fill=lightgray,opacity=0.9,very thin] (S-1.center) to  (S-5.center) 
    to  (S-3.center) to (S-2.center);
    \begin{scope}[on background layer]
    \draw[line width=1.5pt, red] (S-1) -- (S-6);
    \draw[line width=1.5pt, red] (S-2) -- (S-5);
    \draw[line width=1.5pt, red] (S-5) -- (S-6);
        \path [fill=myred!20,opacity=0.6,very thin] (S-1.center) to  (S-2.center) 
    to  (S-5.center) to (S-6.center);
    \begin{scope}[on background layer]
    \draw[line width=1.5pt, violet] (S-6) -- (S-2);
    \draw[line width=1.5pt, violet] (S-4) -- (S-1);
    \draw[line width=1.5pt, violet] (S-4) -- (S-6);
        \path [fill=violet!70,opacity=0.3,very thin] (S-1.center) to  (S-2.center)  to  (S-6.center) to (S-4.center);
    \end{scope}
    \end{scope}
\end{scope}

 \draw[<-]        (-3.8,-0.0)node[left, scale = 0.8] {\large$\dfrac{1}{M}\SOMMA{\mu,a=1}{2,3}\eta_1^{\mu,a}\eta_2^{\mu,a}\eta_3^{\mu,a}\eta_5^{\mu,a}$}   -- (-3.5,-0.7) ;

 \draw[->, violet]        (-2.5,-0.7)   -- (-2.3,0.0) node[right, scale = 0.8]{\large$\dfrac{1}{M}\SOMMA{\mu,a=1}{2,3}\eta_1^{\mu,a}\eta_2^{\mu,a}\eta_4^{\mu,a}\eta_6^{\mu,a}$};

 \draw[->, red]        (-3.0,-1.1)   -- (-3.0,-2.9) node[below, scale = 0.8] {\large$\dfrac{1}{M}\SOMMA{\mu,a=1}{2,3}\eta_1^{\mu,a}\eta_2^{\mu,a}\eta_5^{\mu,a}\eta_6^{\mu,a}$};

\node[node in,outer sep=0.6] (NI-1) at (0,0.5) {$\sigma_{1}$};
\node[node in,outer sep=0.6] (NI-3) at (0,-0.5) {$\sigma_{3}$};
\node[node in,outer sep=0.6] (NI-4) at (0,-1.5) {$\sigma_{4}$};
\node[node in,outer sep=0.6] (NI-5) at (0,-2.5) {$\sigma_{6}$};
    
\node[node hidden] (NO-1) at (1,1.35) {$z_{1,1}$};
\node[node hidden] (NO-2) at (1,1.35-0.8) {$z_{1,2}$};
\node[node hidden] (NO-3) at (1,1.35-1.6) {$z_{1,3}$};
\node[node hidden] (NO-4) at (1,-1.8) {$z_{2,1}$};
\node[node hidden] (NO-5) at (1,-1.8-0.8) {$z_{2,2}$};
\node[node hidden] (NO-6) at (1,-2.6-0.8) {$z_{2,3}$};

\node[draw, line width =0.035cm,dotted,fit=(NO-1) (NO-2) (NO-3)] {};

\node[draw, line width =0.035cm,dotted,fit=(NO-4) (NO-5) (NO-6)] {};

\node[above=9, right=15,align=center] at (NO-1) {$\mu=1$};
\node[below=9, right=15,align=center] at (NO-6) {$\mu=2$};
  
  \draw[connect, line width=1.5pt] (NI-3) -- (NO-4);
  \draw[connect, line width=1.5pt] (NI-4) -- (NO-4);
  \draw[connect, line width=1.5pt] (NI-3) -- (NI-4);
  \draw[connect, line width=1.5pt] (NI-3) -- (NO-3);
  \draw[connect, line width=1.5pt] (NI-4) -- (NO-3);
  \draw[connect, line width=1.5pt] (NI-3) -- (NO-5);
  \draw[connect, line width=1.5pt] (NI-4) -- (NO-5);
  \path (NI-4) --++ (NI-5) node[midway,scale=0.8] {$\vdots$};
  \path (NI-1) --++ (NI-3) node[midway,scale=0.8] {$\vdots$};

\begin{scope}[on background layer]
    \path [fill=myorange,opacity=0.6,very thin] (NI-3.center) to  (NI-4.center) 
        to  (NO-5.center);
    \draw[-, myorange]          (1.4,-2.15)node[right, scale = 0.8] {\large$\eta_3^{2,2}\eta_4^{2,2}$}  -- (0.6,-2.2);

    \path [fill=mygreen,opacity=0.6,very thin] (NI-3.center) to  (NI-4.center) 
        to  (NO-3.center);
    \draw[-, mydarkgreen]        (0.5,-0.9)   -- (1.4,-0.5) node[right, scale = 0.8] {\large$\eta_3^{1,3}\eta_4^{1,3}$};

    \path [fill=cyan,opacity=0.6,very thin] (NI-3.center) to  (NI-4.center) 
        to  (NO-4.center);
    \draw[-, cyan!60!black]        (0.5,-1.6)   -- (1.4,-1.1) node[right, scale = 0.8] {\large$\eta_3^{2,1}\eta_4^{2,1}$};
\end{scope}
  \node[right,scale=0.9] at (-1.05,-1.3)
    {$\Longleftrightarrow$};

    \node[above=60, right=17,align=center] at (NI-1) {RBM};
\node[below=18, align=center] at (NI-5) {visible};
\node[below=130,align=center] at (NO-3) {hidden};
\node[above=25, right=15,align=center] at (S-1) {NN};
\end{tikzpicture}
\caption{Representation of a dense unsupervised Hopfield network (NN, left) and its dual Restricted Boltzmann Machine (RBM, right), with $N=6$, $K=2$, $M=3$ and $P=4$. Different groups of interacting neurons are depicted in different colors.
As far as the RBM concerns, it is built with a visible layer made of $N$ binary variables $\{\sigma_i\}_{i=1,\hdots,6}$ and a hidden layer made of $K\times M$ Gaussian neurons $\{z_{\mu,a}\}^{a=1,2,3}_{\mu=1,2}$. In particular, any $z_{\mu,a}$ can interact with sets of $P/2$ (namely, $2$ in this case) visible neurons $\{\sigma_i, \sigma_j\}$ whose strength of interaction is $\eta_i^{\mu,a}\eta_j^{\mu,a}$. In the NN, the neurons interact $4-$wise and the coupling strength for any set of variables $\{\sigma_i, \sigma_j,\sigma_k,\sigma_l\}$ is $\frac{1}{M}\sum\limits_{\mu=1}^{K}\sum\limits_{a=1}^{M}\eta_i^{\mu,a}\eta_j^{\mu,a}\eta_k^{\mu,a}\eta_l^{\mu,a}$.}
\label{fig:network}
\end{figure}
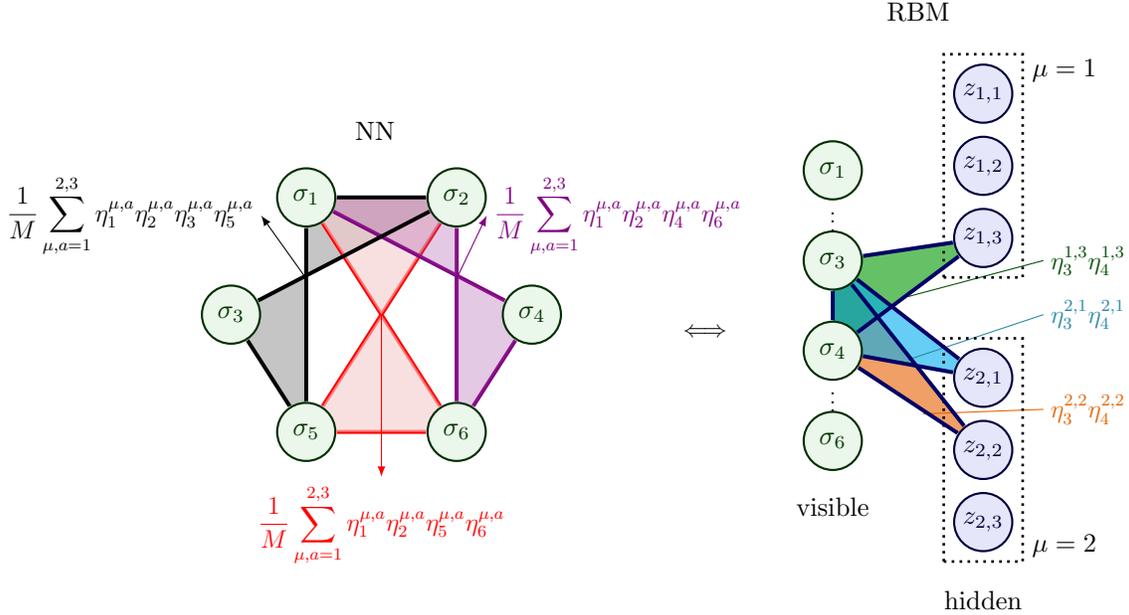

In our analytical investigation we leverage the asymptotic limit for the system size $N$, which shall be performed retaining the network load $\alpha_{b}$ finite, as specified by the following
  \begin{definition}
    In the thermodynamic limit $N \to \infty$, the load is defined as
\begin{equation}
\lim_{N\to +\infty} \frac{K}{N^{b}} =: \alpha_{b} < \infty
\label{eq:carico_TDL}
\end{equation}
with $b \leq P-1$\footnote{The case $b>P-1$ is known to lead to a black-out scenario \cite{Baldi,Bovier} not useful for computational purposes and shall be neglected here.}.  We then distinguish between
the so-called {\em high-load regime}, corresponding to $b = P-1$, namely to an amount of storable patterns that scales with the networks size as  $N^{P-1}$, and a so-called {\em low-load regime} corresponding to $b < P-1$. As we will deepen, the resulting slower scaling for the amount of storable patterns allows for mitigating the effects of possible additive noise affecting synaptic strengths (see Sec. \ref{sec:ultra-noise}). 

Further, the quenched statistical pressure in the thermodynamic limit is denoted as
\begin{equation} \label{eq:statpress_LTD}
\mathcal A^{(P)}_{\alpha_{b}, M, r, \beta}  = \lim_{N \to \infty} \mathcal A^{(P)}_{N,K,M, r,\beta} .
\end{equation}
\end{definition}

In order to further simplify the notation it is convenient to introduce a $P$-independent load denoted as $\gamma$ and defined by
\begin{equation}
    \alpha_{P-1}=\gamma \dfrac{2}{P!}.
    \label{eq:alphaPP-1}
\end{equation}
We can notice that, as long as $P$ is fixed, assuming that $\alpha_{P-1} < \infty$ also means that $\gamma < \infty$. 

\par\medskip
We want to study the model defined in \ref{def:dense_unsup} looking at its learning and retrieval capabilities and, specifically, we aim to find out the thresholds for these capabilities to emerge. 
In other words, given a training dataset made of $M \times K$ examples, each codified by a binary vector of size $N$ and characterized by a quality $r$, and given a set of $N$ binary neurons that interact $P$-wisely, we aim at answering the following questions: 
\begin{itemize}
\item which is the minimum number of examples to be supplied to the network to ensure that it is able to infer the related archetypes and thus correctly generalize afterwards? (Note that we address this question while the network is handling simultaneously all the $K \times M$ archetypes.)

\item how many archetypes the network can learn and what happens if we load the network with a larger amount of information?

\item can we account for training flaws in this system and, if so, how robust is the resulting pattern recognition capability of the network with respect to this kind of noise?
\end{itemize}

\begin{figure}[t]
    \centering
    \includegraphics[scale=0.42]{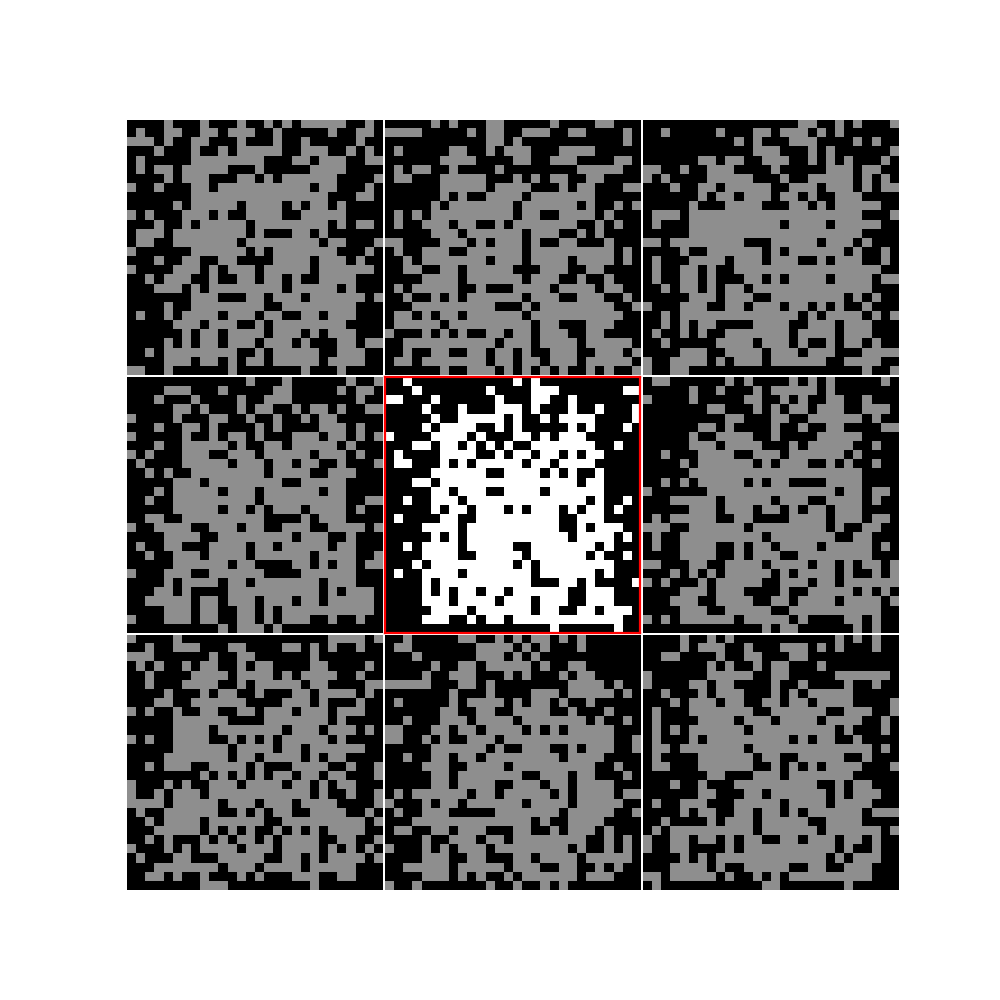}
    \includegraphics[scale=0.42]{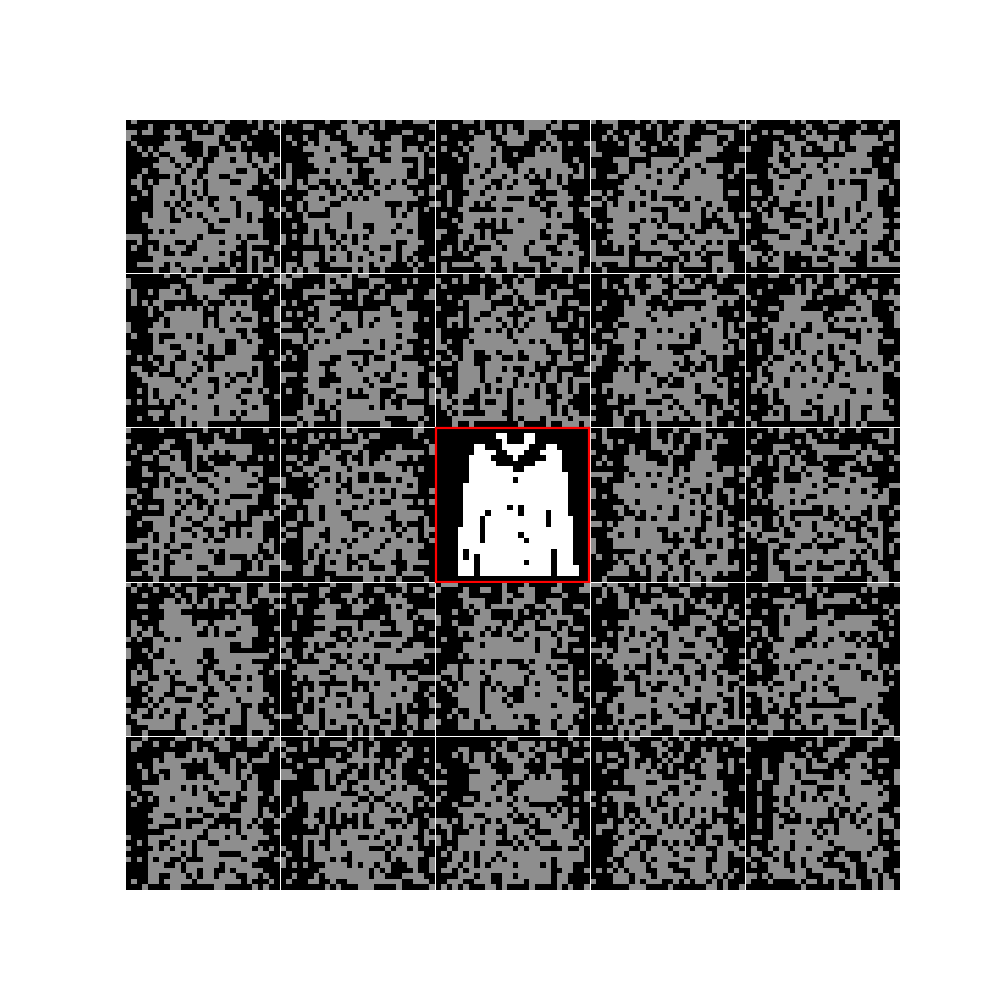}
        \caption{We consider a dense neural network with a degree of interactions $P=4$, and we take just one picture of a coat from the Fashion-MNist dataset \cite{xiao2017fashion}, thresholding the gray-scale values to obtain a binary representation of the image.
    Then, we generate other nine uncorrelated Rademacher archetypes, to reach a amount of patterns $K=10$. Then,
    we produce $M$ noisy examples for each archetype with quality level $r=0.5$ by flipping each pixel with probability as in \eqref{eq:Bernoulli}.
    We focus on the cases where $M=8$ (left) and $M=24$ (right), and show the examples related to the Fashion-MNist coat, for these examples the white pixels are shown in gray. In both plots, they surround the final output of the network, that is the central image in the red box. The original picture of the coat is not shown but differs from the $M=24$ output by just two pixels.
    We notice that, among these two plots, solely in the one with $M=24$ the network is capable to correctly reconstruct the pattern starting from its noisy versions.}
    \label{fig:giacchetto}
\end{figure}

\begin{figure}[t] 
\begin{center}
\begin{tabular}{ccc}
\captionsetup[subfigure]{labelformat=empty}
\subfloat[
]{\includegraphics[width = 0.9in]{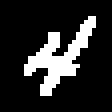}} &
\captionsetup[subfigure]{labelformat=empty}
\subfloat[
]{\includegraphics[width = 0.9in]{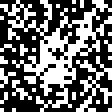}} &
\captionsetup[subfigure]{labelformat=empty}
\subfloat[
]{\includegraphics[width = 0.9in]{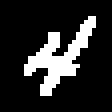}} 
\end{tabular}\\
\vspace{-3mm}
\begin{tabular}{ccc}
\captionsetup[subfigure]{labelformat=empty}
\subfloat[(a)
pattern]{\includegraphics[width = 0.9in]{1a}} &
\captionsetup[subfigure]{labelformat=empty}
\subfloat[(b) 
input \\ configuration]{\includegraphics[width = 0.9in]{1b.png}} &
\captionsetup[subfigure]{labelformat=empty}
\subfloat[(c) 
pattern \\ reconstruction]{\includegraphics[width = 0.9in]{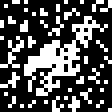}} 
\end{tabular}
\caption{Comparison between the retrieval capabilities exhibited by a dense network ($P=4$, upper line) and by the pair-wise  Hopfield model ($P=2$, lower line). We built both the networks with $N=784$ Ising neurons and we chose a picture from the MNist dataset (i.e., the number $4$). Then, we generated other $35$ independent Rademacher archetypes, in order reach $K=36$; for each archetype the network unsupervisedly experienced $M=80$ examples at a level characterized by a quality $r=0.375$. In the panels in the first column (a) we report the archetype, in the middle column (b) we report a noisy example inputted to the network and in the last column (c) we show the ultimate network's reconstruction where it shines that, while the Hopfield model fails, the dense network correctly performs pattern recognition.}
\label{fig:mnist444}
\end{center}
\end{figure}

As we will see, the answers to the first two points highlight a conflicting role of the interaction order $P$: on the one hand, by increasing $P$ the number of examples required for a sound training grows exponentially ($\propto 1/(P r^{2P})$), on the other hand, the number of storable archetypes also grows exponentially ($\propto N^{P-1}$).
Furthermore, to address the last question, we can introduce a supplementary, additive noise $\omega$ to be applied to the couplings $J_{i_1 i_2 ... i_P}$ that mimics possible flaws occurring during the training and we show that there is an interplay between this kind of noise and the load: if we can afford a downgrade in terms of load (i.e., $b < P-1$), then the network can work even in the presence of extensive noise (i.e., $\omega$ scaling with $N$) on the weights. We anticipate that these features stem from, respectively, the vast available resources ( the $K$ archetypes are allocated in a tensor made of $N^{P}$ elements) and from the redundancy generated when employing over-sized resources.

These concepts are in part visualized in Figs.~\ref{fig:giacchetto}-\ref{fig:mnist444}.
Indeed, in Fig. \ref{fig:giacchetto} 
we show that,
without a sufficient number $M$ of examples, the network is incapable of generalizing an archetype starting from noisy versions of it.
Moreover, it can happen that dense networks are able, from a noisy initial configuration, to recall the reference pattern better than their non-dense counterparts, as evidenced in Fig.~\ref{fig:mnist444}, even if the number of examples given to the network and all the other parameters are equal.

\par\medskip
To make the above statements quantitative, we need a set of observables to assess the retrieval ability of the network, therefore we state the following
\begin{definition}
\label{def:orderparam}
The order parameters of the dense unsupervised Hebbian neural network introduced in Def. \ref{def:dense_unsup} are
\begin{eqnarray}
\label{eq:def_m}
         m_{\mu}&:=&\dfrac{1}{N}\SOMMA{i=1}{N}\xi_i^{\mu}\sigma_i \\
         \label{eq:def_n}
         n_{\mu,a}&:=&\dfrac{r}{\R}\dfrac{1}{N}\SOMMA{i=1}{N}\eta_i^{\mu\,,a}\sigma_i,\\
         \label{eq:def_q}
         q_{lm}&:=&\dfrac{1}{N}\SOMMA{i=1}{N}\sigma_i^{(l)}\sigma_i^{(m)},
\end{eqnarray}
for $\mu=1, \hdots , K$ and $a=1, \hdots , M$. 
\end{definition}

Note that the Mattis magnetization $m_{\mu}$, already defined in \eqref{eq:Mattis}, quantifies the alignment of the network configuration $\boldsymbol \sigma$ with the archetype $\boldsymbol \xi^{\mu}$, $n_{\mu,a}$ quantifies the alignment of the network configuration with the example $\boldsymbol \eta^{\mu,a}$, and $q_{lm}$ is the standard two-replica overlap between the replicas $\boldsymbol \sigma^{(l)}$ and $\boldsymbol \sigma^{(m)}$. 

\section{Cost and Loss functions} \label{sec:loss}
Before proceeding with the investigation of the model, it is worth examining whether the order parameters introduced in the statistical-mechanics context to measure the ability of the system to learn and retrieve archetypes from examples display any connection with the quantities usually employed in the machine-learning field.
There, one typically introduces a {\em loss function} $\mathcal{L}$, namely a positive-definite function that maps any weight setting onto a real number representing some ``cost" associated with that setting; during training the weights are tuned in such a way that $\mathcal{L}$ is lowered and it reaches zero if and only if the network has learnt. Therefore, the goal of the learning stage is to vary the weights  with some algorithm (e.g., contrastive divergence, back-propagation) in order to minimize  $\mathcal{L}$ as the various examples are provided to the network.

In the present case we want the system to learn to reconstruct archetypes from examples and the weights where the learnt information is allocated are the Hebbian couplings $\boldsymbol J$. Following an iterative procedure analogous to those adopted in a machine-learning context, we should prepare the system in some initial configuration $(\boldsymbol \sigma^{(0)}, \boldsymbol J^{(0)})$. Next, we should let the neurons (which are the fast degrees of freedom) relax to some $\boldsymbol \sigma^{(\textrm{eq})}_{\boldsymbol J^{(0)}}$, then evaluate the performance by some $\mathcal{L}^{(0)}:=\mathcal{L}(\boldsymbol \sigma^{(\textrm{eq})}_{\boldsymbol J^{(0)}})$ and modify couplings (e.g., via gradient descent) as $\boldsymbol J^{(0)} \to \boldsymbol J^{(1)}$ in such a way that $\mathcal{L}^{(1)} \leq \mathcal{L}^{(0)}$, and so on so forth up to sufficiently small values of the loss function.
For the present task, focusing on the $\mu$-th pattern, we envisage the following loss function 
\begin{equation}
\mathcal{L}_{\mu}^{(n)}:= \frac{1}{4N^2} \Vert \boldsymbol \xi^{\mu} + \boldsymbol \sigma^{(\textrm{eq})}_{\boldsymbol J^{(n)}} \Vert^2 \cdot \Vert \boldsymbol \xi^{\mu} - \boldsymbol \sigma^{(\textrm{eq})}_{\boldsymbol J^{(n)}} \Vert^2,
\end{equation}
in such a way that $\mathcal{L}_{\mu}^{(n)} \geq 0$, and it reaches zero if and only if the system is retrieving the marked pattern (or its inverse, by gauge invariance).
The loss function defined above can be recast in terms of the Mattis overlaps as 
\begin{equation}
\mathcal{L}_{\mu}^{(n)} = [1 + m_{\mu}^{(n)}] [1 - m^{(n)}_{\mu}],
\end{equation}
where $m_{\mu}^{(n)}:=\frac{1}{N}\boldsymbol \sigma^{(\textrm{eq})}_{\boldsymbol J^{(n)}} \cdot \boldsymbol \xi^{\mu}$,
in such a way that the retrieval region in the phase diagram also highlights the set of values for the control parameters where $\mathcal{L}_{\mu}^{(n)}$ is vanishing.

Further, we notice that the cost function of the the $P$-spin Hopfield model
can be written as
\begin{equation}
\label{eq:H_L}
\mathcal H_{N,K}^{(P)}(\boldsymbol \sigma | \boldsymbol \xi)=-\frac{N}{P!}\sum_{\mu=1}^{K} m_{\mu}^P = -\frac{N}{P!}\sum_{\mu=1}^{K}\left(1-\mathcal L_{\mu}^{*}\right)^{\frac{P}{2}},
\end{equation}
where $\mathcal L_{\mu}^{*}$ is the loss function evaluated for the generic configuration $\boldsymbol \sigma$. We should also keep in mind that we are considering a learning process where the available dataset is $\{\boldsymbol \eta^{\mu,a} \}_{\mu=1,...,K}^{a=1,...,M}$ with $\mu$ undisclosed. The most natural way to recover that framework is by replacing, in \eqref{eq:H_L}, archetypes with examples and then averaging over the latter; by doing so we recover the Hamiltonian \eqref{def:H_PHopEx}.

Now, in the noiseless limit $\beta \to \infty$, the system spontaneously relaxes to configurations corresponding to the lowest energy which ensure that $\mathcal L_{\mu}^{*}=0$.
In order to see that only one of the losses that are summed up in the previous equation is minimised, and that
the network will not attempt to minimise each of them at the same
time, we can evaluate the average energy for the whole class of possible
retrieval states. The most probable candidate states for retrieval
are given by linear combinations of $n$-patterns:
\begin{equation} \label{eq:configu_spu}
\sigma_{i}=\mathrm{sign}\left(\sum_{k=1}^{n}\xi_{i}^{\mu_k}\right),
\end{equation}
their  Mattis overlap is
%
\begin{equation}
m_{\mu_{\ell}}=\frac{1}{N}\sum_{i=1}^{N}\xi_{i}^{\mu_{\ell}}
\mathrm{sign}\left(\xi_{i}^{\mu_{\ell}}+\sum_{\substack{k=1 \\ k \neq \ell}}^{n}\xi_{i}^{\mu_k}\right).
\end{equation}
Averaging over pattern realization, for any $k \leq n$ we have
\begin{equation}
\mathbb{E}_{\bm\xi}[m_{\mu_k}]=\int\frac{dz}{\sqrt{2\pi}}\exp\left(-\frac{z^{2}}{2}\right)\mathrm{sign}\left(1+\sqrt{n-1}z\right)=\mathrm{erf}\left(\frac{1}{\sqrt{2(n-1)}}\right),
\end{equation}
while $\mathbb{E}_{\bm\xi}[m_{\mu_k}]=0$ for any $k>n$.
We now estimate the expected energy for configurations like \eqref{eq:configu_spu} obtaining:
\begin{equation} 
 \mathcal H_{N,K,M}^{(P)}(\boldsymbol \sigma | \boldsymbol \xi)=-\frac{nN}{P!}\left[\mathrm{erf}\left(\frac{1}{\sqrt{2(n-1)}}\right)\right]^{P}
\end{equation}
Since this expression is a strictly increasing function of $n$,
the only stable retrieval states are those with $n=1$, thus their
Mattis overlap will tend to $1$, showing indeed that the network minimised
only one of the $\mathcal L^*_{\mu}$ losses at a time.

\section{Analytical findings} \label{sec:solution}
In this section we solve the dense unsupervised Hopfield model introduced in Definition \ref{def:dense_unsup}, specifically, we obtain an explicit expression for its  quenched statistical pressure (i.e., the free energy) in terms of the order parameters of the theory.  Then, by extremizing the free energy w.r.t. these order parameters we obtain a set of self-consistency equations for the latter whose inspection allows us to obtain the phase diagrams of the model. To this aim we exploit Guerra's interpolation technique \cite{guerra_broken} which allows us to get the free-energy explicitly. However, as we will see,
some adjustments to the standard protocol are in order in the estimate of the noise distribution, which, unlike pairwise networks, can not be directly considered as a Gaussian random variable. The core problem is that the distributions of the post-synaptic potentials are not Gaussian here, hence standard universality of spin-glass noise \cite{CarmonaWu, Genovese, Longo} does not apply straightforwardly. 
To overcome this obstacle, we apply the Central Limit Theorem (CLT) in order to estimate it as a single Gaussian variable.
\newline
Before proceeding, we highlight that, as standard (see e.g., \cite{Coolen}) and with no loss of generality, in the following we will focus on the ability of the network to learn and retrieve the first archetype $\boldsymbol \xi^1$. Thus, in the next expression, the contribution corresponding to $\mu=1$ shall be split from all the others and interpreted as the {\em signal} contribution, while the remaining ones make up the slow-noise contribution impairing both learning and retrieval of $\boldsymbol \xi^1$, namely starting from Eq. \eqref{eq:Def_orig_Z}, we apply the functional-generator technique\footnote{We add the term $J \sum_i \xi_i^1 \si$ to generate the expectation of the Mattis magnetization $m_1$: the latter emerges by evaluating the derivative w.r.t. $J$ of the quenched statistical pressure at $J=0$.} to get
\begin{align} \label{eq:part_S_N}
    \mathcal{Z}^{(P)}_{N,K,M ,r,\beta} (\bm\eta) =& \lim_{J \rightarrow 0} \mathcal{Z}^{(P)}_{N,K,M,r,\beta}(\bm\xi^1,\bm\eta;J) \notag \\
    =& \lim_{J \rightarrow 0} \sum_{\bm \sigma}  \exp \left[ J \sum_{i=1}^N \xi_i^1 \si + \dfrac{\beta '}{2\,\R^{P/2} M\,N^{P-1}}\SOMMA{a=1}{M}\left(\sum_{i=1}^N\eta_{_{i}}^{a,1}\sigma_{_{i}}\right)^{^P} \right. \notag 
    \\
    &\left.+\dfrac{\beta 'P!}{2\R^{P/2} N^{P-1}} \, \SOMMA{\mu>1}{K} \SOMMA{ (i_{_1},\cdots ,i_{_{P}})}{N,\cdots,N}\left(\dfrac{1}{M}\SOMMA{a=1}{M} \eta^{\mu\,,a}_{i_1}\cdots\eta^{\mu\,,a}_{i_{P}}\right)\sigma_{_{i_1}}\cdots\sigma_{_{i_{_{P}}}}\right].
\end{align}

{Focusing only on the noise terms in round brackets in \eqref{eq:part_S_N}
we can apply the CLT and approximate it with a Gaussian variable with suitable first and second momenta. Therefore we can recast this term as follows   
\begin{equation}
    \label{eq:con_V}
    \left(\dfrac{1}{M}\SOMMA{a=1}{M}\eta^{\mu\,,a}_{i_1}\cdots\eta^{\mu\,,a}_{i_{P}}\right)\sim r^P\sqrt{1+\rho_P}\lambda^\mu_{i_1,\hdots,i_P}\;\;\;\mathrm{with}\;\;\;\lambda^\mu_{i_1,\hdots,i_P}\sim\mathcal{N}(0,1)
\end{equation}
where
$\rho_P=\dfrac{1-r^{2P}}{Mr^{2P}}$.
Remarkably, this reasoning shows that these dense networks exhibit the universality of the quenched noise \cite{CarmonaWu,Genovese,Longo}, namely we can approximate the overall field experienced by a neuron (i.e., the post-synaptic potential) as a random Gaussian field.
}

Now, plugging \eqref{eq:con_V} into \eqref{eq:part_S_N} we reach a useful expression for the partition function for the unsupervised dense Hebbian neural network as
\begin{align} \label{eq:vera_Z}
    \mathcal{Z}^{(P)}_{N,K,M, r,\beta} (\bm\xi^1,\bm\eta; J) =&  \sum_{\bm \sigma}  \exp \left[ J \sum_{i=1}^N \xi_i^1 \si + \dfrac{\beta '}{2\R^{P/2}  M\,N^{P-1}}\SOMMA{a=1}{M}n_{1,a}^{^P} \right. \notag \\
    &\left.+\dfrac{\beta 'P!\sqrt{1+\rho_P}}{2(1+\rho)^{P/2} \,N^{P-1}}\SOMMA{\mu>1}{K}\left(\SOMMA{(i_{_1},\cdots, i_{_{P}})}{N,\cdots,N}\lambda^{\mu}_{i_1,\hdots,i_P}\:\sigma_{_{i_1}}\cdots\sigma_{_{i_{_{P}}}}\right)\right]
\end{align}
where we exploit the relation $\R = r^2(1+\rho)$.

\par\medskip
We now proceed by applying Guerra's interpolation.
The underlying idea behind this technique is to introduce a generalized free-energy which interpolates between the original one (which is the target of our investigation but we are not able to address it directly) and a simple one (which we can solve exactly). The latter is typically a one-body model mimicking the original one: the fields acting on neurons are chosen to exhibit statistical properties that simulate those experienced by neurons in the original model and due to the effect of other neurons.
We thus find the solution of the simple model and we propagate the obtained solution back to the original model by the fundamental theorem of calculus. In this last passage we assume RS (\emph {vide infra}), namely, we assume that the order-parameter fluctuations are negligible in the thermodynamic limit: this property makes the integral in the fundamental theorem of calculus analytical. Let us proceed by steps and give the next definitions 
\begin{definition}
\label{def:part_Interpolante_RS}
Given the interpolating parameter $t \in [0,1]$, the constants $A, \ \psi \in \mathbb{R}$ to be set a posteriori, and the i.i.d. standard Gaussian variables $Y_i\sim \mathcal{N}(0,1)$ for $i=1, \hdots , N$, the interpolating partition function is given as 

\begin{equation}
\begin{array}{lll}
     \mathcal{Z}^{(P)}_{N, K, M, r,\beta}&(\boldsymbol{\xi}^1,\boldsymbol{\eta}; J, t)  \coloneqq  \SOMMA{\boldsymbol \sigma}{}   \mathcal B_{N,K, M, \beta}^{(P)} 
 (\boldsymbol{\sigma} |\boldsymbol{\xi}^1, \boldsymbol{\eta}; J, t ).
     \label{def:partfunct_GuerraRS}
\end{array}
\end{equation}
%
%
where $B_{N,K, M, r, \beta}^{(P)} $ is the related Boltzmann factor reads as
\begin{equation}
\begin{array}{lll}
\mathcal B^{(P)}_{N,K, M, r,\beta} &
 (\boldsymbol{\sigma} |\bm\xi^1,\boldsymbol{\eta}; J, t )
	\coloneqq\exp{\Bigg[}J \sum_{i=1}^N \xi_i^1 \si+\dfrac{t \beta 'N}{2M}\left(1+\rho \right)^{P/2}\SOMMA{a=1}{M}n_{1,a}^{^P} +\psi(1-t)N\SOMMA{a=1}{M}n_{1,a}
        \\\\
        &+\sqrt{t}\dfrac{\beta 'P!\sqrt{1+\rho_P}}{2(1+\rho)^{P/2} N^{P-1}}\SOMMA{\mu>1}{K}\SOMMA{(i_{_1},\cdots, i_{_{P}})}{N,\cdots,N}\lambda^{\mu}_{i_1,\hdots, i_P}\sigma_{_{i_1}}\cdots\sigma_{_{i_{_{P}}}}+\sqrt{1-t}A\SOMMA{i=1}{N}Y_i\sigma_{i}\Bigg] ;
	\end{array}
	\end{equation}
A generalized average follows from this generalized measure as
\beq
	\omega_{t} [(\cdot)] \coloneqq  \frac{1}{\mathcal{Z}^{(P)}_{N, K, M, r,\beta}(\boldsymbol{\xi}^1,\boldsymbol{\eta}; J, t) } \, \sum_{\boldsymbol \sigma} ~ (\cdot) ~   \mathcal B^{(P)}_{N,K, M, r,\beta} 
 (\boldsymbol{\sigma} |\bm\xi^1,\boldsymbol{\eta}; J, t )
	\eeq
	and
\beq
\langle (\cdot)   \rangle_{t}  \coloneqq \mathbb E \{ \omega_{t} [( \cdot)] \},
\eeq
where the expectation $\mathbb E$ is now  meant  over any $\lambda^\mu_{i_1, \hdots , i_{P}}$ and $Y_i$ too.
\newline

The interpolating quenched statistical pressure related to the partition function (\ref{def:partfunct_GuerraRS}) is introduced as
\begin{equation}
\mathcal{A}^{(P)}_{N,K,M,r ,\beta}(J, t) \coloneqq \frac{1}{N} \mathbb{E} \left[  \ln \mathcal{Z}^{(P)}_{N,K, M ,r,\beta}(\boldsymbol{\xi}, \boldsymbol{\eta}; J, t)  \right],
\label{hop_GuerraAction}
\end{equation}
%
and, in the thermodynamic limit,
\begin{equation}
\mathcal{A}_{\alpha_{b},M, r, \beta}^{(P)}(J, t) \coloneqq \lim_{N \to \infty} \mathcal{A}^{(P)}_{N, K, M , r, \beta}(J,t).
\label{hop_GuerraAction_TDL}
\end{equation}

Of course, by setting $t=1$ we recover the original model: the interpolating pressure recovers the original one (\ref{PressureDef_unsup}), that is $\mathcal A_{N,K,M, r, \beta }^{(P)} (J) = \mathcal{A}^{(P)}_{N,K, M, r, \beta}(J, t=1 )$, and analogously for the partition function, the standard Boltzmann measure and the related averages.
\end{definition}

As anticipated, the following analytical results are obtained under the RS hypothesis, namely assuming that, in the thermodynamic limit, the distribution of the generic order parameter $X$ is centered at its expectation value w.r.t. the interpolating measure $\bar X := \langle X \rangle_t$ with vanishing fluctuations for all t, that is,  
\begin{equation} \label{eq:RS}
\lim_{N \to \infty}  \langle (X- \bar{X}) \rangle_t =0.
\end{equation} 
%
Although this assumption is not fulfilled by this kind of systems (at least not everywhere in the space of control parameters), it is usually adopted as it yields only small quantitative corrections and, further, a full replica-symmetry-breaking theory for these systems is still under construction (see e.g., \cite{Crisanti-RSB,Steffan-RSB,AABO-JPA2020,AAAF-JPA2021,Albanese2021}).

\par \medskip


We now proceed to determine the self-consistency equations for the order parameters by extremizing the quenched statistical pressure; to this aim it is mathematically convenient to take the thermodynamic limit and split the discussion in two cases: the high-storage regime $b=P-1$ (corresponding to the highest load allowed \cite{Baldi,Bovier,AFMJMP2022}) and the low-storage regime $b<P-1$; as stressed above, in both cases, we shall consider only even values of $P$ and, specifically, $P \geq 4$\footnote{The case $P=2$ corresponds to the unsupervised Hopfield model treated in \cite{EmergencySN}; the assumption $P \geq 4$ is used in the proof of Theorem \ref{P_quenched} in Appendix \ref{app:proofP-1}.}.

\subsection{High-load regime}\label{ultrastorage}
In this subsection we present the main analytical result obtained in the case where $K/N^{P-1}$ remains finite in the thermodynamic limit, that is $\alpha_{P-1}$ is finite and non-vanishing, see \eqref{eq:carico_TDL}. 
\begin{proposition}
\label{P_quenched}
In the thermodynamic limit ($N\to\infty$), under the RS assumption \eqref{eq:RS}, the quenched statistical pressure for the unsupervised, dense neural-network described by \eqref{eq:vera_Z} set in the high-load regime $b = P-1$ reads as
\begin{equation}
\label{eq:pressure_GuerraRS}
\begin{array}{lll}
     \mathcal{A}_{\gamma,M, r, \beta}^{(P)}(J) 
     =& \mathbb{E}\left \{ \ln{2\cosh{\left[J\xi^1+\beta '\dfrac{P}{2}\n^{P-1} (1+\rho)^{P/2-1}\hat{\eta}+Y \sqrt{\gamma\dfrac{\b \,^2(1+\rho_P)}{(1+\rho)^{P}} \dfrac{P}{2} \q^{^{P-1}}}\right]}}\right\}
         \\\\
         &-\dfrac{\beta '}{2}(P-1)(1+\rho)^{P/2}\n^P+\gamma\dfrac{\b\,^2(1+\rho_P)}{4 (1+\rho)^{P}}(1-P\q^{P-1}+(P-1)\q^P).
\end{array}
\end{equation}
with $\mathbb{E}= \mathbb{E}_\xi\mathbb{E}_{(\eta|\xi)}\mathbb{E}_Y$, $\etaM:=\frac{1}{rM}\SOMMA{a=1}{M}\eta^{1,a}$ and 
$\bar n$ and $\bar q$ fulfill the following self-consistency equations 
\begin{equation}
    \begin{array}{lll}
         \n=\dfrac{1}{1+\rho}\mathbb{E}\left\{\tanh{\left[\beta '\dfrac{P}{2}\n^{P-1} (1+\rho)^{P/2-1}\hat{\eta}+Y \sqrt{\gamma\dfrac{\b \,^2(1+\rho_P)}{(1+\rho)^{P}} \dfrac{P}{2} \q^{^{P-1}}}\right]}\etaM\right\},
         \\\\
         \q=\mathbb{E}\left\{\tanh^{\2}{\left[\beta '\dfrac{P}{2}\n^{P-1} (1+\rho)^{P/2-1}\hat{\eta}+Y \sqrt{\gamma\dfrac{\b \,^2(1+\rho_P)}{(1+\rho)^{P}} \dfrac{P}{2} \q^{^{P-1}}}\right]}\right\}.

    \end{array}
    \label{eq:High_store_self_n_q}
\end{equation}
 
Furthermore, considering the auxiliary field $J$ linked to $\bar{m}$ as $\bar{m}= \nabla_J \mathcal{A}_{\gamma,M, r, \beta}^{(P)}(J) |_{J=0}$, we have
\begin{align}
\label{eq:High_store_self_m}
    \m=\mathbb{E}\left \{\tanh{\left[\beta '\dfrac{P}{2}\n^{P-1} (1+\rho)^{P/2-1}\hat{\eta}+Y \sqrt{\gamma\dfrac{\b \,^2(1+\rho_P)}{(1+\rho)^{P}} \dfrac{P}{2} \q^{^{P-1}}}\right]}\xi^1\right\}.
    \end{align}
\end{proposition}

For the proof we refer to Appendix \ref{app:proofP-1}.

\subsection{Low-load regime}
In this subsection we present the main analytical finding obtained by setting $b<P-1$ in \eqref{eq:carico_TDL}.

\begin{proposition}
\label{P_quenched_aminP-1}
In the thermodynamic limit ($N \to \infty$), under the RS assumption \eqref{eq:RS}, the quenched statistical pressure for the unsupervised, dense neural-network described by \eqref{eq:vera_Z} set in the low-load regime $b < P-1$ reads as
\begin{equation}
\small
\label{eq:pressure_GuerraRS_aminP-1}
\begin{array}{lll}
    \mathcal{A}_{0, M, \beta, r}^{(P)}(J)  =& \mathbb{E}\left \{ \ln{2\cosh{\left[J\,\xi^1+\beta '\dfrac{P}{2}\n^{P-1}(1+\rho)^{P/2-1} \etaM\right]}}\right \}-\dfrac{\beta '}{2}(P-1)(1+\rho)^{P/2}\n^P.
\end{array}
\end{equation}
with $\mathbb{E}=\mathbb{E}_\xi\mathbb{E}_{(\eta|\xi)}$ and $\bar n$ fulfills the following self-consistency equation
\begin{equation}
    \begin{array}{lll}
         \n=\dfrac{1}{1+\rho}\mathbb{E}\left\{\tanh{\left[\beta '\dfrac{P}{2}\n^{P-1}(1+\rho)^{P/2-1} \etaM\right]}\etaM\right\}.
    \end{array}
    \label{eq:self_aminP-1}
\end{equation}
	%
Furthermore, considering the auxiliary field $J$ linked to $\bar{m}$ as $\bar{m}= \nabla_J \mathcal{A}_{0,M, \beta, r}^{(P)}(J) |_{J=0}$ we have
\begin{align}
 \label{eq:m_self_aminP-1}
    \m=\mathbb{E}\left\{\tanh{\left[\beta '\dfrac{P}{2}\n^{P-1}(1+\rho)^{P/2-1} \etaM\right]}\xi^1\right\}.
\end{align}
\end{proposition}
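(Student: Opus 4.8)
The plan is to run Guerra's interpolation of Definition~\ref{def:part_Interpolante_RS} while exploiting the drastic simplification produced by the low-load hypothesis $b<P-1$: since $(K-1)M\sim N^{b}M=o(N^{P-1})$, the slow-noise block of \eqref{eq:vera_Z} — the piece carried by the Gaussian variables $z_{\mu,a}$, together with the counterterm $-K\V\beta'N^{1-P/2}/(2\R^{P/2})$ — involves a number of summands growing strictly slower than $N^{P-1}$, so after dividing by $N$ it cannot survive the thermodynamic limit. Hence in \eqref{def:partfunct_GuerraRS} we set $A=B=C=0$, and, as the interpolated slow-noise term is itself subextensive for every $t$, the Gaussian $z$-integrals factor out and contribute nothing in the limit. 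We are thus left with a reference model that is a pure neuron one-body system, the interpolating exponent reducing to $J\sum_i\xi_i^1\sigma_i+\tfrac{t\beta'N}{2M}(1+\rho)^{P/2}\sum_{a}n_{1,a}^{P}+\psi(1-t)N\sum_{a}n_{1,a}$, where we used $\R=r^{2}(1+\rho)$ and $\psi\in\mathbb{R}$ is fixed a posteriori.

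Concretely, I would proceed as follows. \emph{(i)} Pass to the gauge $s_i:=\xi_i^{1}\sigma_i\in\{\pm1\}$, under which $n_{1,a}=\tfrac{r}{\R}\tfrac1N\sum_i\chi_i^{1,a}s_i$ (cf.\ \eqref{eq:def_n}) and the generator becomes $J\sum_i s_i$; at $t=0$ the spins $s_i$ are independent, the $i$-th feeling the field $J+\psi\tfrac{rM}{\R}\bar\chi_i$ with $\bar\chi_i:=\tfrac1M\sum_a\chi_i^{1,a}$, so that $\mathcal{A}^{(P)}_{N,K,M,\beta,r}(J,0)=\E_\chi\ln 2\cosh\!\big(J+\psi\tfrac{rM}{\R}\chiM\big)$ by i.i.d.\ factorization over the sites (the paper's $\chiM$ being a generic such empirical mean). \emph{(ii)} Differentiate $\mathcal{A}^{(P)}_{N,K,M,\beta,r}(J,t)$ in $t$: the signal term contributes $\tfrac{\beta'}{2M}(1+\rho)^{P/2}\sum_a\langle n_{1,a}^{P}\rangle_t$ and the $\psi$-term contributes $-\psi\sum_a\langle n_{1,a}\rangle_t$. \emph{(iii)} Invoke the RS ansatz \eqref{eq:RS}: by exchangeability of the $M$ examples each $n_{1,a}$ concentrates on a common $\n$, so (the $n_{1,a}$ being bounded) $\langle n_{1,a}^{P}\rangle_t\to\n^{P}$ and $\langle n_{1,a}\rangle_t\to\n$, and the $t$-derivative collapses to the $t$-independent constant $\tfrac{\beta'}{2}(1+\rho)^{P/2}\n^{P}-M\psi\n$. \emph{(iv)} Fix $\psi=\tfrac{\beta' P}{2M}\n^{P-1}(1+\rho)^{P/2}$, the value that makes the $t=0$ field coincide with $J+\beta'\tfrac{P}{2}\n^{P-1}(1+\rho)^{P/2-1}\tfrac{\chiM}{r}$ — indeed $\tfrac{rM}{\R}\psi=\beta'\tfrac{P}{2}\n^{P-1}(1+\rho)^{P/2-1}/r$, again from $\R=r^{2}(1+\rho)$. \emph{(v)} Integrate the sum rule $\mathcal{A}(J,1)=\mathcal{A}(J,0)+\int_0^1\partial_t\mathcal{A}\,dt$: with this $\psi$ one has $M\psi\n=\tfrac{\beta' P}{2}(1+\rho)^{P/2}\n^{P}$, so the integral of the $t$-derivative equals $-\tfrac{\beta'}{2}(P-1)(1+\rho)^{P/2}\n^{P}$, which is precisely the second term of \eqref{eq:pressure_GuerraRS_aminP-1}, while $\mathcal{A}(J,0)$ gives the first; this proves \eqref{eq:pressure_GuerraRS_aminP-1}.

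The self-consistency \eqref{eq:self_aminP-1} then follows by extremizing \eqref{eq:pressure_GuerraRS_aminP-1} over $\n$ — equivalently, by imposing $\n=\langle n_{1,a}\rangle$ in the reference one-body measure, where $\langle n_{1,a}\rangle=\tfrac{r}{\R}\E_\chi[\chi^{1,a}\tanh(\cdots)]$. Using $\E_\chi[\chi^{1,a}g(\chiM)]=\E_\chi[\chiM\,g(\chiM)]$ (exchangeability of the $a$-index) and $r/\R=1/(r(1+\rho))$, the $\chiM$ combines with the prefactor to yield exactly the weight $\chiM/r$ and the factor $1/(1+\rho)$ of \eqref{eq:self_aminP-1}. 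Finally \eqref{eq:m_self_aminP-1} is obtained from $\m=\nabla_J\mathcal{A}^{(P)}_{M,\beta,r,\alpha_b}(J)\big|_{J=0}$, which merely strips the $\chiM/r$ weight from the argument of the hyperbolic tangent.

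I expect the only genuinely delicate points to be two. First, making rigorous the claim that the subextensive slow-noise block contributes nothing in the limit: a bare count of summands is suggestive but not conclusive, and one really needs a bound uniform in $\boldsymbol\sigma$ — e.g.\ a second-moment/concentration estimate on $\sum_{\mu>1,a}\big(\tfrac1N\sum_i\xi_i^\mu\chi_i^{\mu,a}\sigma_i\big)^{P/2}z_{\mu,a}$ — though this is far easier than the large-deviation plus stability analysis that is unavoidable in the high-load regime of Proposition~\ref{P_quenched}, precisely because here that block is \emph{not} extensive. Second, the bookkeeping of the Bernoulli averages: since the $M$ examples of $\boldsymbol\xi^{1}$ share the same archetype, the corruptions $\chi^{1,a}_i$ enter the effective single-site field only through the empirical mean $\chiM$, and one must track normalizations carefully so that the RS decoupling delivers precisely the weight $\chiM/r$ and the prefactor $1/(1+\rho)$ of \eqref{eq:self_aminP-1} rather than a mis-scaled variant. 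Everything else is the standard telescoping of Guerra's sum rule, rendered analytic by the RS assumption \eqref{eq:RS}.
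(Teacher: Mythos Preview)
Your proposal is correct and follows essentially the same route as the paper, which simply states that the proof is analogous to that of Proposition~\ref{P_quenched} and omits it: Guerra's interpolation with the choice $\psi=\beta'\tfrac{P}{2M}(1+\rho)^{P/2}\n^{P-1}$, the RS assumption to kill the fluctuation terms in the $t$-derivative, and the fundamental theorem of calculus. The only organizational difference is that you set $A=B=C=0$ and argue the slow-noise block is subextensive from the outset, whereas the paper's template (Appendix~\ref{app:proofP-1}) would carry the full interpolation through to \eqref{eq:pressure_GuerraRS_Nfinito} and then observe that, for $K=\alpha_b N^{b}$ with $b<P-1$, every noise-related term there vanishes as $N\to\infty$; both lead to the same result.
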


The proof is analogous to the case $b =P-1$ (see Appendix \ref{app:proofP-1}), therefore we will  omit it. 

\par \medskip

Note that, in this low-load regime, we are left with one single order parameter that measures the degree of order in the system, much as like in the Curie-Weiss model \cite{Barra0,Jean0}. In fact, here $\bar q =0$ and this indicates a loss of slow-noise or of ``glassiness'' in the network, analogously to what happens for the pairwise Hopfield model in the low-storage regime $\lim_{N \to \infty}K/N=0$. However, in this dense generalization, this regime deserves a particular attention because, as we will see in Sec.~\ref{sec:ultra-noise}, a relatively small load can release some resources to handle possible supplementary noise due, for instance, to flaws underlying storing \cite{AgliariDeMarzo}.  
We anticipate that, in that case,
we ultimately recover self-consistence equations similar to those obtained in high-load ($b=P-1$), but the noise term, instead of stemming from the load, will be related to this additional disturbance. 

\subsection{Additive noise in low-load regime} \label{sec:ultra-noise}

As mentioned in the previous section and shown numerically in the next one, by increasing the interaction order $P$ among neurons, the storage capacity increases arbitrarily ($K\sim N^{P-1}$ -- \textit{high-load regime}). However, our model assumes that the coupling tensor $\boldsymbol J$ is devoid of flaws, whereas, in general, the communication among neurons can be disturbed hence affecting the synaptic processes (e.g., see \cite{BarraPRLdetective,Battista2020}). It is then natural to question if unsupervised dense neural networks described by \eqref{eq:vera_Z} are robust versus this kind of noise too.

   Recalling the Hamiltonian \eqref{def:H_PHopEx}  we can write
   \begin{align}
    \mathcal H^{(P)}_{N,K,M}(\bm\sigma|  \bm \eta) =
    &- \SOMMA{(i_1,\hdots, i_P)}{N,\hdots,N}J_{i_{1}\cdots i_{P}} \sigma_{i_{1}}\cdots\sigma_{i_{P}}  \notag\\
    &=-\dfrac{1}{\R^{P/2}\,M\,N^{P-1}}\SOMMA{\mu=1}{K}\SOMMA{a=1}{M}\SOMMA{(i_1,\hdots, i_P)}{N,\hdots,N}\eta^{\mu,a}_{i_{1}}\cdots \eta^{\mu,a}_{i_{P}} \sigma_{i_{1}}\cdots\sigma_{i_{P}},
    \end{align}
    where we outlined the entry of the coupling tensor $\bm J$. 
   Then, following \cite{AgliariDeMarzo}, we model the supplementary noise, by introducing an additional, random contribution as
   \begin{equation} \label{eq:J_noise}
   J_{i_1\cdots i_P} \to  \tilde{J}_{i_1\cdots i_P} = \eta^{\mu,a}_{i_1}\cdots\eta^{\mu,a}_{i_P}+w \,\tilde\eta_{i_1\cdots i_P}^{\mu,a},\\
   \end{equation}
with $w \in \mathbb R$ and $\tilde\eta_{i_1\cdots i_P}^{\mu,a} \sim_{\textrm{iid}} \mathcal N(0,1)$.

We investigate the effects of such a noise on the retrieval capabilities of the system and the existence of upper bounds for the amount of noise that the system can tolerate without loosing its ability to play as an associative memory.
 
As shown in \cite{AgliariDeMarzo}, if we can afford a downgrade in terms of load (i.e. $b < P-1$),
we can consider the presence of extensive synaptic noise that grows algebraically with the network size $N$, namely 
\begin{align}
\label{eq:noise}
    w=\tau\,N^{\delta}, ~ \textrm{with} ~~ \tau \in \mathbb R ~~ \textrm{and} ~~ \delta \in \mathbb R^+.
\end{align}

In fact, following the same path presented in the first part of this section, including the noise defined in \eqref{eq:J_noise} and \eqref{eq:noise} yields self-consistency equations for the order parameters that display the same expression found in the high-storage regime \eqref{eq:self_aminP-1}-\eqref{eq:m_self_aminP-1}, as long as we replace $\beta '$ with $\tau\beta '$ in the noise part, namely 
\begin{equation}
\begin{array}{lll}
         \n=\dfrac{1}{1+\rho}\mathbb{E}\left\{\tanh{\left[\beta '\dfrac{P}{2}\n^{P-1} (1+\rho)^{P/2-1}\etaM+Y \beta '\tau\sqrt{\gamma\dfrac{(1+\rho_P)}{(1+\rho)^{P}} \dfrac{P}{2} \q^{^{P-1}}}\right]\etaM}\right\},
         \\\\
         \q=\mathbb{E}\left\{\tanh{}^{\2}{\left[\beta '\dfrac{P}{2}\n^{P-1} (1+\rho)^{P/2-1}\etaM+Y \beta '\tau\sqrt{\gamma\dfrac{(1+\rho_P)}{(1+\rho)^{P}} \dfrac{P}{2} \q^{^{P-1}}}\right]}\right\},
         \\\\
         \m=\mathbb{E}\left\{\tanh{\left[\beta '\dfrac{P}{2}\n^{P-1} (1+\rho)^{P/2-1}\etaM+Y \beta '\tau\sqrt{\gamma\dfrac{(1+\rho_P)}{(1+\rho)^{P}} \dfrac{P}{2} \q^{^{P-1}}}\right]}\xi^1\right\}.
    \end{array}
\end{equation}

Then, one can show that, if we have an extensive noise \eqref{eq:noise} with $\delta < \dfrac{P-1-b}{2}$, the noise contribution in the hyperbolic tangent in the previous equations is vanishing and we recover the low-load scenario. In other words, there is an interplay between the load (ruled by $b$), the interaction order (ruled by $P$) and the supplementary noise (ruled by $\delta$). Thus, when one of these is enhanced, the others must be overall suitably downsized if we want to preserve the retrieval capability of the system.

\par \medskip
Before concluding we stress that the results obtained in this subsection are not influenced by the dataset parameters $M$ and $r$; this implies that we can not leverage either the quality or the quantity of the dataset to mitigate the effects of this supplementary noise.

\subsection{Low-entropy datasets in the high-load regime}
\label{sec:datalimit_unsup}
\label{betainftyunsup}\label{ssec:ground}

As explained in Sec.~\ref{definitions}, the parameter $\rho = (1-r^2)/(Mr^2)$ quantifies the amount of information needed to describe the original message
$\boldsymbol\xi^\mu$ given the set of related examples $\{\boldsymbol\eta^{\mu,a}\}^{a=1,...,M}$. In this section we focus on the case $\rho \ll 1$ that corresponds to a low-entropy dataset or, otherwise stated, to a high-informative dataset.
The advantage of this analysis is that, under this condition, we obtain a relation between $\bar n$ (a natural order parameter of the model) and $\bar m$ (a practical order parameter of the model)\footnote{It is worth recalling that the model is supplied only with examples -- upon which $\{n^{\mu,a}\}$ are defined -- while it is not aware of archetypes  -- upon which $\{m^{\mu}\}$ are defined. The former constitute natural order parameters and, in fact, the Hamiltonian $\mathcal H^{(P)}_{N,K,M, r}$ in \eqref{def:H_PHopEx} can be written in terms of the example overlaps. The latter are practical order parameters through which we can assess the capabilities of the network.}, thus, the self-consistency equation for $\bar n$ can be recast into a self-consistency equation for $\bar m$ and its numerical solution versus the control parameters allows us to get the phase diagram for the system more straightforwardly. 

As explained in Appendix \ref{app:proofP-1}, we start from the self-consistency equations found in the high-storage regime \eqref{eq:High_store_self_n_q}-\eqref{eq:High_store_self_m} and we exploit the CLT to write $\etaM\sim 1+\lambda\sqrt{\rho}$. In this way we reach the simpler expressions
\begin{eqnarray}
    (1+\rho)\n&=&\m +\beta ' \dfrac{P}{2}\rho\left(1+\rho\right)^{P/2-1}(1-\q)\n^{^{P-1}},\label{eq:n_Mgrande}
    \\
    \q&=&\mathbb{E}_{_Z}\left[\tanh{}^{\2}{ g(\beta, Z, \bar{n})}\right],\label{eq:n_of_M_unsup}
    \\
    \m&=& \mathbb{E}_{_Z}\left[\tanh{ g(\beta, Z, \bar{n})}\right],\label{eq:m_Mgrande}
\end{eqnarray}
where 
 \begin{align}
    &g(\beta, Z, \bar{n})=\beta '\dfrac{P}{2}\n^{^{P-1}}(1+\rho)^{P/2-1}+\beta 'Z\sqrt{\rho\dfrac{P^2}{4}\n^{^{2P-2}}(1+\rho)^{P-2}+\gamma\dfrac{(1+\rho_P)}{(1+\rho)^{P}} \dfrac{P}{2} \q^{^{P-1}}\;}\;
    \label{eq:g_of_unsuper_n}
 \end{align}
and $Z \sim \mathcal{N}(0,1)$ is a standard Gaussian variable. \\

Focusing on the argument of the hyperbolic tangent \eqref{eq:g_of_unsuper_n}, we can split it into three parts: the first one represents the amplification of the signal; the second one reflects the use of perturbed version of the retrieved pattern and not the pattern themselves; the third one is the noise linked to the presence of the other patterns. 

Further, in the retrieval region, where $1-\q$ is vanishing, as long as $\rho \ll 1$, we can truncate the right-hand-side of \eqref{eq:n_Mgrande} into $\n (1+\rho)\sim \m$.
This leads to significant advantages in the computation time required to get a numerical solution of the self-consistency equations.
In fact, by using $\n(1+\rho) = \m$, in the argument of hyperbolic tangent
, we get
\begin{align}
    &g(\beta, Z, \bar{m})=\Tilde{\beta}\dfrac{P}{2}\m^{^{P-1}}+\Tilde{\beta} Z\sqrt{\rho\left(\dfrac{P}{2}\m^{^{P-1}}\right)^2+\gamma (1+\rho_P) \dfrac{P}{2} \q^{^{P-1}}\;},
    \label{eq:g_of_unsuper_m_mod}
\end{align}
%
where we put
\begin{equation} \label{eq:tildebeta}
\tilde{\beta} = \frac{\beta'}{(1+\rho)^{\frac{P}{2}}} =  
\frac{2 \beta}{P!} \frac{1}{(1+\rho)^{\frac{P}{2}}}.
\end{equation}
The consequent, remarkable reward of this truncation consists in retaining only two of the three self-consistency equations, namely only the ones for $\q$ and $\m$, while the resulting error by this truncation is numerically small, as checked in Fig.~\ref{fig:truncated} where we plot $\n$ versus $r$ for different values of the parameters and compare the outcomes obtained with and without the truncation. 

\begin{remark}
     We stress that here we are focusing only on the low entropy dataset limit because in the \textit{high entropy scenario}, i.e. $\rho \gg 1$, the \eqref{eq:n_Mgrande} will reduce to
    \begin{equation}
        \n = \tilde\beta \dfrac{P}{2}\rho(1+\rho)^{P-2}(1-\q)\n^{P-1},
    \end{equation}
    whose solutions are $\n=0$ or
    \begin{equation}
        \n = \dfrac{1}{(1+\rho)}\left(\tilde\beta \dfrac{P}{2}\rho(1-\q)\right)^{-\frac{1}{P-2}}\,.
    \end{equation}
    Replacing this expression in \eqref{eq:g_of_unsuper_n} we get a signal term that reads as 
    \begin{align}
    \tilde\beta\dfrac{P}{2}(1+\rho)^{P-1}\n^{P-1}\underset{\rho\gg1}{\sim} \rho^{-\frac{P-1}{P-2}}.
    \end{align}
    Therefore, the signal term in \eqref{eq:g_of_unsuper_n} will be strongly suppressed and the retrieval process will no longer be possible. 
\end{remark}

\begin{figure}[t]
    \centering
    \includegraphics[width=15cm]{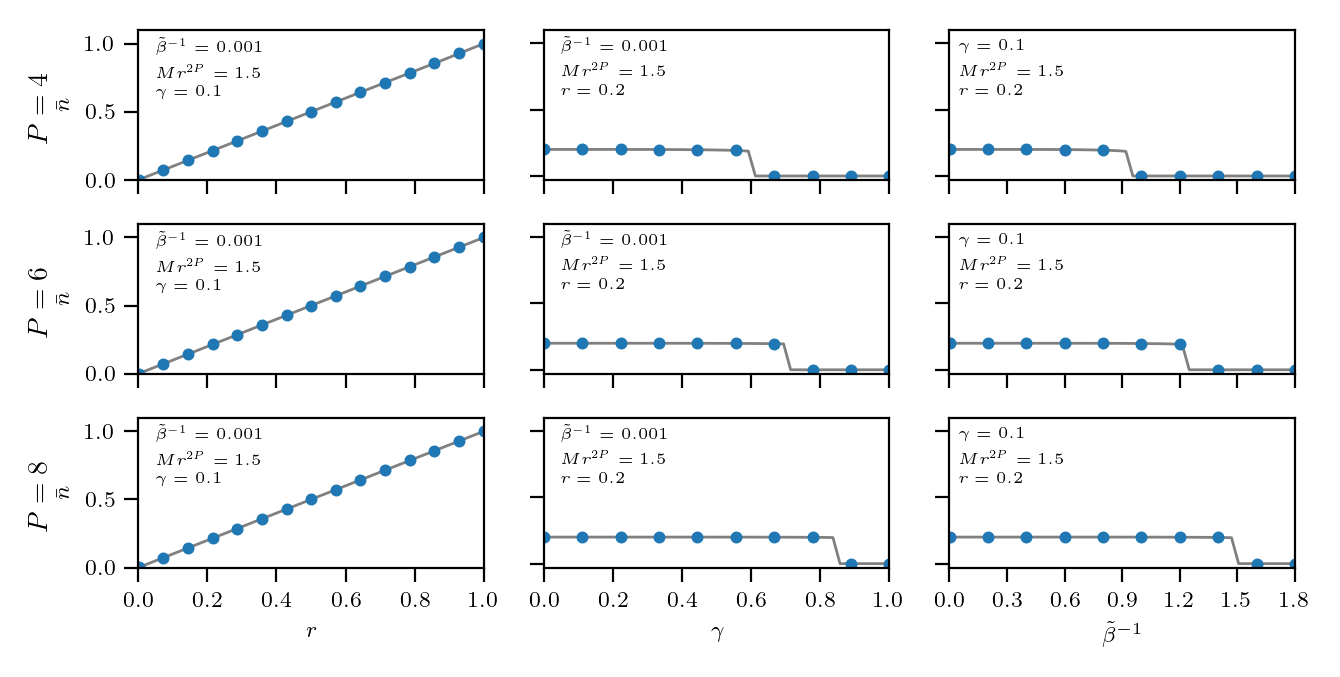}
    \caption{We compute $\n$ solving numerically \eqref{eq:n_Mgrande} for different values of $P$, $\tilde\beta$, $\gamma$ and $M$, as reported in each panel, and we plot it versus $r$. We compare the results obtained using the exact expression of $\n$ (blue dots) and those obtained using the approximated one (namely $\n(1+\rho)=\m$, solid grey line).  We notice that there is completely agreement between the exact expression and the approximated one, whatever the value of $P$ considered.} 
    \label{fig:truncated}
\end{figure}

\medskip
We now further handle the Eqs.~\eqref{eq:n_Mgrande} by computing their zero-temperature limit. 
As detailed in the Appendix \ref{app:proofP-1}, by taking the limit $\beta \to \infty$ in Eqs.~\eqref{eq:n_of_M_unsup} and \eqref{eq:m_Mgrande} we get
\begin{equation}\label{GroundZero}
    \begin{array}{lll}
         \m = \mathrm{erf}\left(\dfrac{P}{2}\dfrac{\m^{P-1}}{G}\right),\;\;\;\;\q=1,
          \\\\
         G=\sqrt{2\left[\rho\left(\dfrac{P}{2}\m^{^{P-1}}\right)^2+\gamma(1+\rho_P) \dfrac{P}{2} \right]\,\;}.
    \end{array}
\end{equation}

\section{Numerical findings}

\begin{figure}[t]
    \centering
    \includegraphics[scale=0.4]{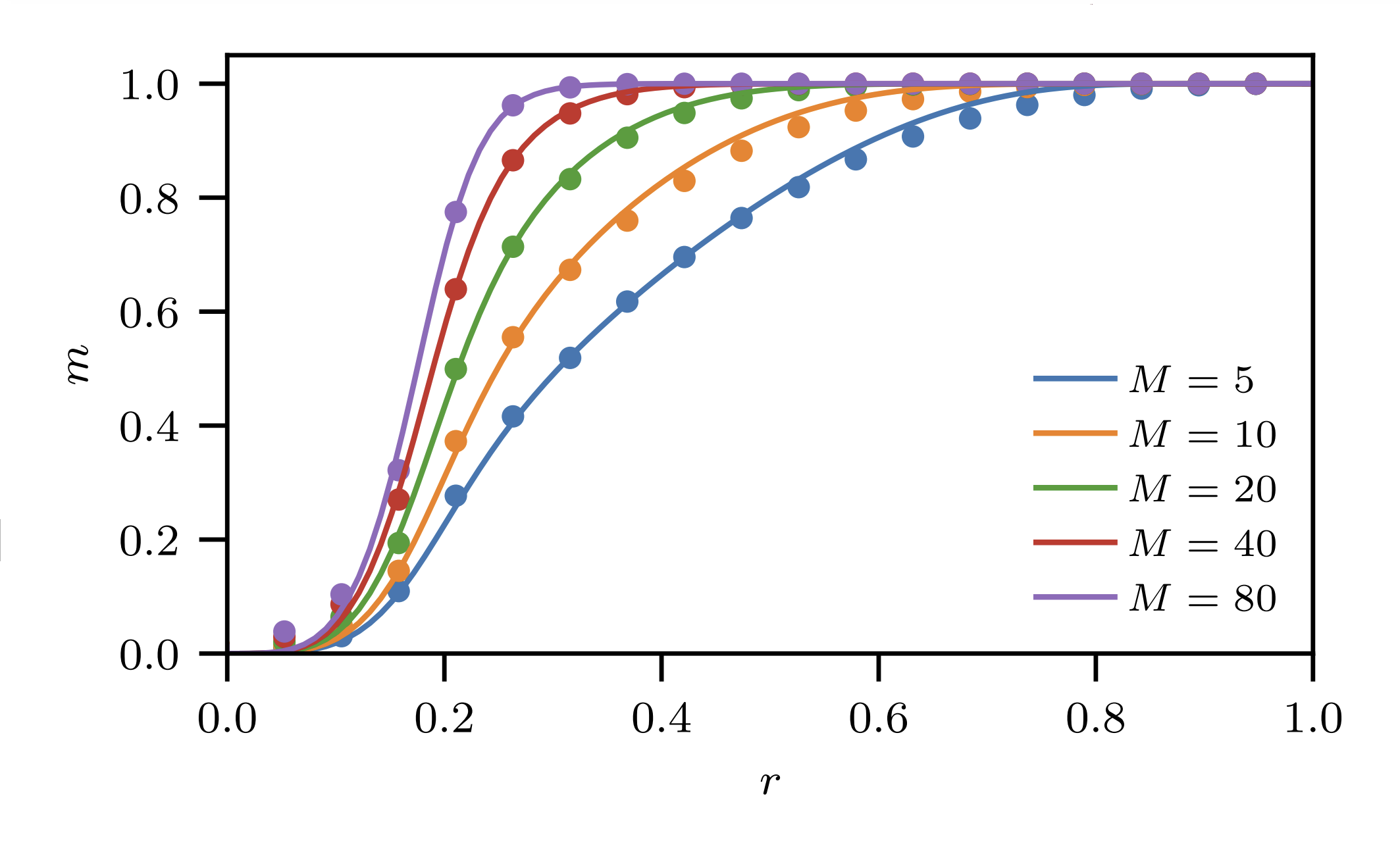}
    \includegraphics[scale=0.4]{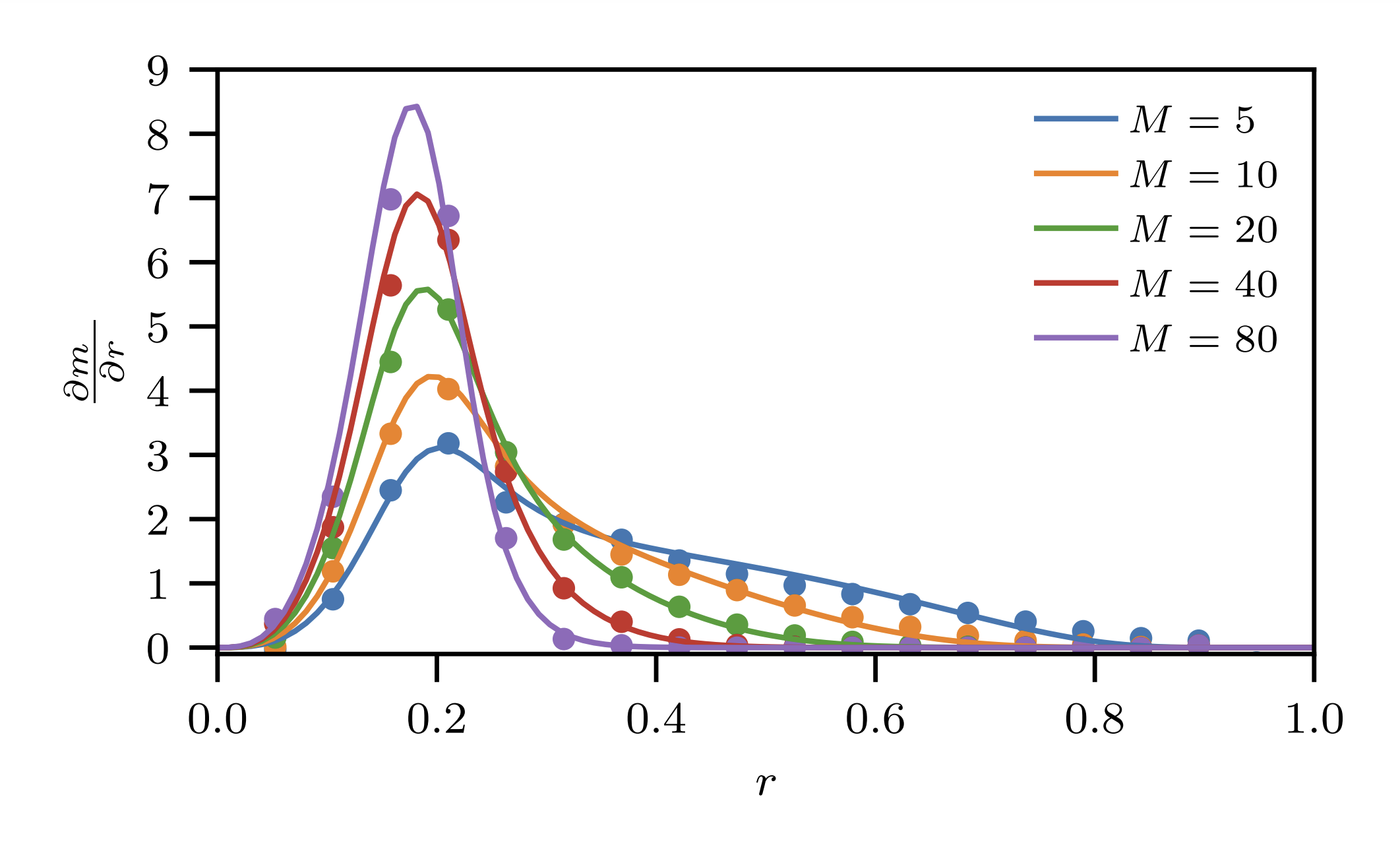}
    \caption{Comparisons between observables evaluated by MC simulations equipped with Plefka's dynamic (lines) and stability analysis (dots, see \eqref{eq:explicit}). 
    The number of examples $M$ varies as specified by the legend, while the number of neurons and patterns and are kept fixed at $N=6000, \ K=100$. As a consequence, the load is fixed below the critical value, $\gamma<\gamma_c$. 
    In particular, we report the  archetype magnetization $m$ and its susceptibility $\partial_{r}m$ at various training-set sizes $M$ by making the noise $r$ in the training set vary from $0$,  where all the example are pure random noise, to $1$, where there is no difference among examples and archetype.
    We note that in the small noise limit $r\to1$ the network always perfectly retrieves the archetype as expected, whereas, for $r\to 0$, no retrieval is possible.}
    \label{fig:s2n_unsup}
\end{figure}

\begin{figure}[t]
    \centering
    \includegraphics[scale=0.75]{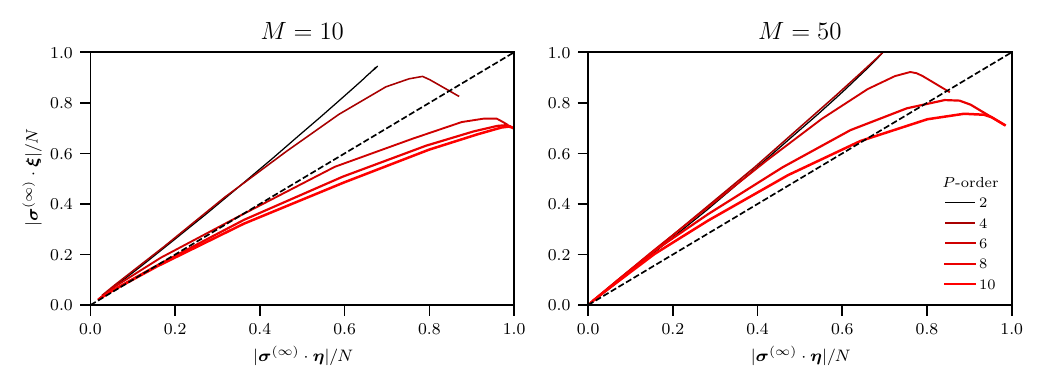}
\caption{Analysis of the capacity of the network to reconstruct an archetype generalizing from corrupted versions of it. The dataset is generated by $K=10$ Rademacher archetypes, each of size $N=784$, whence $M=10$ (left panel) and $M=50$ (right panel) examples are built for each archetype by setting $r=0.7$.  Then, we let the system relax from an initial configuration $\boldsymbol \sigma^{(0)}$ -- chosen as a corrupted version of one of the examples, say $\boldsymbol \eta^a$ -- to the thermalized configuration $\boldsymbol \sigma ^{(\infty)}$ by Plefka's dynamics. We determine the overlap between $\boldsymbol \sigma ^{(\infty)}$ and $\boldsymbol \eta^a$, as well as the overlap between $\boldsymbol \sigma ^{(\infty)}$ and the archetype $\boldsymbol \xi$. These quantities are then averaged over different initializations. Notice that these overlaps, i.e., the normalized scalar products, provide a measure of resemblance between the involved vectors; for instance, the Hamming distance between $\boldsymbol \xi$ and $\boldsymbol \sigma ^{(0)}$ is nothing but $\frac{1}{2} (N - \boldsymbol \xi \cdot \boldsymbol \sigma ^{(0)}$). 
The dashed line represents the identity and plays a reference: above this line the system has escaped from the attraction basin of the example $\boldsymbol \eta^a$ and has moved closer to the archetype. 
We notice that, as the interaction degree $P$ increases (see the legend), the attractivity of the archetype is impaired.
If we want that the curve remains above the threshold as $P$ increases, the number of examples has to be increased accordingly.  }\label{fig:comparison_unsup}
\end{figure}

In this section we present some results useful to check the effective performance of the network. First, we use the stability analysis to find an explicit expression for $\m$ in the noiseless limit also via this path. Next, we estimate the minimum number of examples, as a function of $P, r$, and $\gamma$, that we need for a successful retrieval. Finally, we show some outcomes obtained by MC simulations, concerning the reconstruction of the archetypes and critical load.

\label{sec:num}
\subsection{Stability analysis and Monte Carlo simulations} \label{sec:MC}
\label{subsec:S2N}

In this section we carry on a stability analysis in the noiseless limit: 
 we  suppose that the network is in a retrieval configuration, say $\bm \sigma = \bm \xi^1$ without loss of generality, we evaluate the local field $h_i(\bm \xi^1)$ acting on the generic neuron $\sigma_i$, and check that $h_i(\bm \xi^1)\si >0$ is satisfied for any $i=1, \hdots, N$; this condition ensures the stability of the retrieval configuration.

\par\medskip
We start by rearranging the cost function (\ref{def:H_PHopEx}) exploiting the mean-field nature of the model, namely 
\begin{equation}
    -\beta\mathcal{H}_{N,K,M,r}^{(P)}(\boldsymbol{\sigma} \vert \bm \eta)=\SOMMA{i=1}{N}h_i(\boldsymbol \sigma)\sigma_i
\end{equation}
where the local field $h_i(\boldsymbol{\sigma})$ acting on the $i$-th spin is 
\begin{align}
    h_i(\boldsymbol{\sigma})=& \dfrac{1}{ \R^{P/2}M N^{P-1}}\SOMMA{\mu=1}{K}\SOMMA{a=1}{M}\SOMMA{(i_2,\cdots,i_P)\neq i}{N,\cdots,N} \eta^{\mu, a}_{i_1}...\eta^{\mu,a}_{i_P}\sigma_{i_2}...\sigma_{i_P} .
\end{align}

Calling $O^{(n)}$ the $n$-th iteration of the MC Markov chain scheme regarding the generic observable $O$ and starting by a Cauchy condition where the neurons are aligned with the first pattern, i.e. $\boldsymbol \sigma^{(0)}=\boldsymbol \xi^1$, we update the neural configuration as
\begin{equation}
    \sigma^{(n+1)}_i=\sigma^{(n)}_i\mathrm{sign}\left[\tanh{\left(\sigma_i^{(n)}h_i^{(n)}(\boldsymbol{\sigma}^{(n)})\right)}+\Gamma_i\right] \;\;\mathrm{with}\;\;\Gamma_i\sim \mathcal{U}[-1;+1]
\end{equation}
and, performing the zero fast-noise limit $\beta \to\infty$, we have
\begin{equation}
    \sigma^{(n+1)}_i=\sigma^{(n)}_i\mathrm{sign}\Big[\sigma_i^{(n)}h_i^{(n)}\big(\boldsymbol{\sigma}^{(n)}\big)\Big].
\end{equation}
The one-step MC approximation for the magnetization is then
\begin{equation}\label{NoHinton}
    m_1^{(2)}:=\dfrac{1}{N}\SOMMA{i=1}{N}\xi_i^1\sigma_i^{(2)}=\dfrac{1}{N}\SOMMA{i=1}{N}\mathrm{sign}\left(\xi_i^{1}h_i^{(1)}(\boldsymbol{\xi}^{1})\right),
\end{equation}
and, in the thermodynamic limit $(N \to \infty)$ the argument of the sign function in the r.h.s. of Eq. (\ref{NoHinton}) can be approximated, by the CLT\footnote{Again, we have a sum of variables that are not Gaussian, but whose momenta are vanishing fast enough with $N$ to make the CLT appliable. However, unlike the case discussed in Sec.~\ref{sec:solution} , the overall sum is mathematically more treatable (because here the variables $z_{\mu,a}$ are missing) and we can estimate directly first and second moments.}, as $\xi_i^1 h_i^{(1)} \sim  \mu_1 + z_i \sqrt{\mu_2 - \mu_1^2}$, where $z_i \sim \mathcal{N}(0,1)$, and 
\begin{eqnarray}
&\mu_1& \coloneqq \mathbb{E}_{\xi}\mathbb{E}_{(\eta|\xi)}\left[\xi_i^{1}h_i^{(1)}(\boldsymbol{\xi}^{1})\right]\label{eq:s2n_mu1}
\\
&\mu_2& \coloneqq \mathbb{E}_{\xi}\mathbb{E}_{(\eta|\xi)}\left\{\left[ h_i^{(1)}(\boldsymbol{\xi}^{1})\right] ^2\right\}.\label{eq:s2n_mu2}
\end{eqnarray}
Then, recalling 
$$
\int_{-\infty}^{+\infty}   \frac{dz}{\sqrt{2 \pi}} e^{-\frac{z^2}{2}} \mathrm{sign} \left(\mu_1 + z \sqrt{\mu_2 - \mu_1^2}\right) = \mathrm{erf}\left(\dfrac{\mu_1}{\sqrt{2(\mu_2-\mu_1^2)}}\right), 
$$
for $N\gg 1$,  we get
\begin{equation}
    m_1^{(2)} \sim \mathrm{erf}\left(\dfrac{\mu_1}{\sqrt{2(\mu_2-\mu_1^2)}}\right).
    \label{eq:s2n_M}
\end{equation}

As reported in Appendix \ref{app:momenta},
first and second momenta of $\xi_i^1 h_i^{(1)}(\bm \xi^1)$ read as
\begin{eqnarray}
\mu_1&=& \dfrac{1}{(1+\rho)^{P/2}}
\\
\mu_2&=& \left(\dfrac{1}{(1+\rho)^{P/2}}\right)^2\Big[\alpha_{P-1}(P-1)!(1+\rho_{_P})+1+\rho\Big]
\end{eqnarray}
where, we introduced $\rho_{_P}:=\frac{1-r^{2P}}{M r^{2P}}$ as a generalization of the dataset entropy $\rho=\frac{1-r^2}{M r^2}$ .

Using the $P$-independent load (see Eq. \eqref{eq:alphaPP-1}), we can write 
\begin{equation}
    \mu_2-\mu_1^2=\left(\dfrac{1}{(1+\rho)^{P/2}}\right)^2\left[\dfrac{2}{P}\gamma\left(1+\rho_{_P}\right)+\rho\right],
\end{equation}
and we obtain the following explicit expression for the one-step MC magnetization
\begin{equation} \label{eq:explicit}
        m_1^{(2)}\sim \mathrm{erf}
        \Big \{ \Big [  \dfrac{4\gamma}{P}(1+\rho_{_P})+ 2 \rho \Big]^{-\frac{1}{2}}\Big \}.
\end{equation}
Thus, we can recast the condition determining if the network successfully retrieves one of the archetypes by requiring that this one-step MC magnetization is larger than  $\mathrm{erf}(\Theta)$ where $\Theta\in\mathbb{R}^+$ is a tolerance level, thus we obtain 
\begin{equation}
    \dfrac{1}{\sqrt{2\Big[\gamma\dfrac{2}{P}(1+\rho_{_P})+\rho\Big]}}>\Theta.
    \label{eq_S2N_stability_condition}
\end{equation}
Otherwise stated, in order to retrieve (under a confidence level $\Theta$) a given archetype starting from a perturbed versions of it, one has to fulfil the following condition 
\begin{equation}
    1>2\Theta^2\left[\rho+\gamma \dfrac{2}{P}(1+\rho_{_P})\;\right].
    \label{eq:M_RS_aP-1}
\end{equation}
 Setting the confidence level $\Theta=1/\sqrt{2}$, which corresponds to the condition 
 \begin{align}
 \mathbb{E}_{\xi}\mathbb{E}_{(\eta|\xi)}[\xi_i^1h_i^{(1)}(\boldsymbol \xi^1)]>\sqrt{\mathrm{Var}[\xi_i^1h_i^{(1)}(\boldsymbol \xi^1)]}
 \end{align} 
 (namely we have a non-null magnetization in \eqref{eq:s2n_M}), the previous relation determines a lower bound for $M$ that is denoted as $M_\otimes(r,P,\gamma)$.
%
%
%
%
%
\subsection{Critical load and bounds for the dataset size}

\begin{figure}
    \centering
    \includegraphics[width=7.5cm]{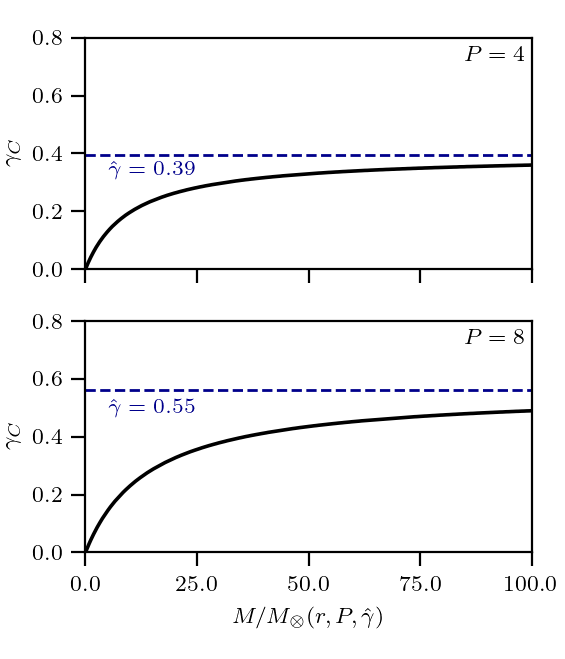}
    \includegraphics[width=7.5cm]{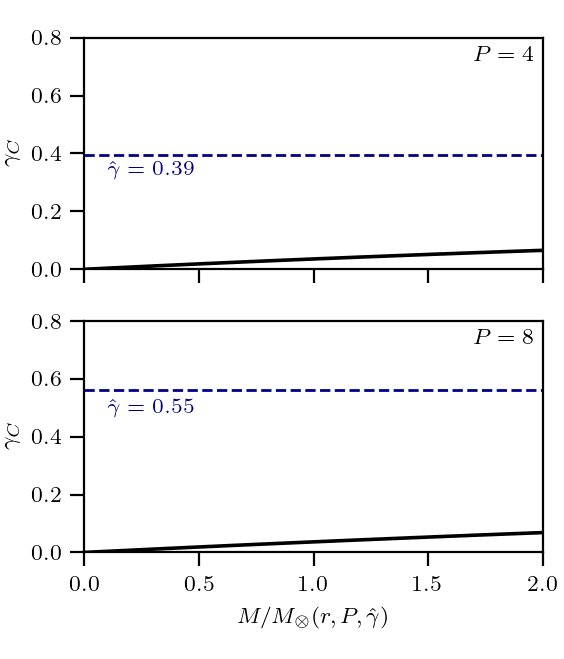}
    \caption{We numerically solve the self equations in the $T\to 0$ limit, namely \eqref{GroundZero}, for $r=0.2$. In these four panels we plot the critical load $\gamma_c$ vs the order of magnitude number of examples $M$ w.r.t. $M_\otimes (r, P, \hat{\gamma})$, where $\hat\gamma$ is the critical load of the standard dense Hebbian network with the same interaction order, for different degrees of interaction $P=4,8$, at work with the simpler storing protocol. We notice that $\gamma_c$ increases with $M$ and, for $M \gg M_{\otimes}(r, P, \gamma=\hat\gamma)$, it saturates to $\hat\gamma$ \cite{EmergencySN}, that is represented by the horizontal dashed line.  
    }
    \label{fig:my_label5}
\end{figure}

\begin{figure}[t]
    \centering
    \includegraphics[width = 15cm]{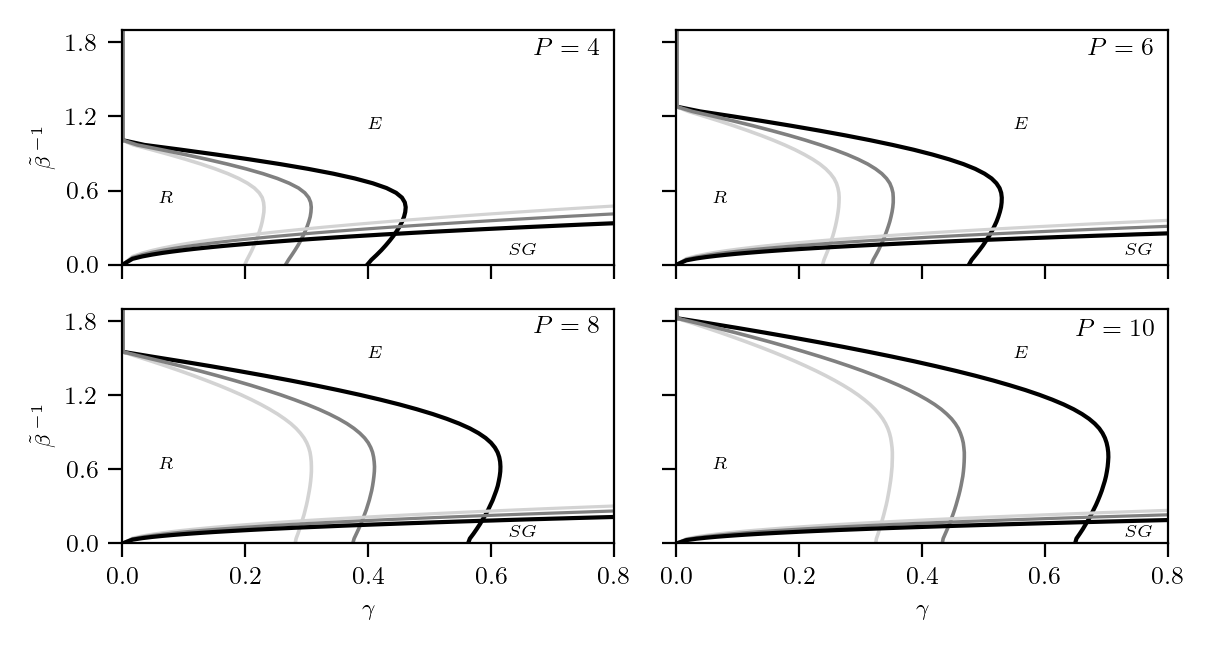}
    \caption{Each panel represents the phase diagram of the dense Hebbian networks trained -with no supervision- at different values of $P$, as specified. In any case we set $r=0.2$ and we compare outcomes for   $M=M_\otimes(r,P,\gamma)$ (light gray) and $M=2M_\otimes(r,P,\gamma)$ (dark gray), where
    $M_\otimes(r,P,\gamma)$ is given by the expression in equation \eqref{eq:loadM}. We notice that, as $P$ increases, the transition lines approach those of the dense Hebbian storage limit (black solid line). Moreover, the instability region, caused by the overlap between retrieval and spin-glass regions, decreases as $P$ increases. Note that the recess of the maximal storage as the temperature goes to zero is a signature of replica symmetry breaking, that is not addressed here (see \cite{Albanese2021}).}
    \label{fig:my_label6}
\end{figure}

We now discuss a few special cases for $M_\otimes (r, P , \gamma)$ under the assumption $r \ll 1$. 
In the low-load regime $\gamma=0$, the expression in \eqref{eq:M_RS_aP-1} becomes 
\begin{equation}
    M>\left(\dfrac{1}{\sqrt{2}}\right)^{\2}\dfrac{(1-r^{^2})}{r^{\2}}\sim \dfrac{1}{2}\dfrac{1}{r^2}\Longrightarrow M_\otimes(r,P,0)= \dfrac{1}{2r^{\2}}
\end{equation}
where the last equality holds for $r \ll 1$.
\newline
If $\gamma \neq 0$ and $P=2$, i.e. Hopfield classic case, as shown in \cite{AgliariDeMarzo}
\begin{equation}
    M>\Theta^{\2}\left(\dfrac{(1-r^{^2})}{r^{\2}}+\gamma\dfrac{1}{r^{\4}}\;\right)\sim \dfrac{1}{2} \gamma \dfrac{1}{r^4}\Longrightarrow M_\otimes(r,2,\gamma)= \gamma\dfrac{1}{2r^{\4}}.
\end{equation}
Finally, if $\gamma \neq 0$ and $P>2$, we have
\begin{equation}
\small
\begin{array}{lll}
\label{eq:loadM}
    M>&\left(\dfrac{1}{\sqrt{2}}\right)^{\2}\left(\dfrac{(1-r^{^2})}{r^{\2}}+\gamma \dfrac{2}{P}\dfrac{1}{r^{2P}}\;\right)\sim \dfrac{1}{2}\gamma\dfrac{2}{P} \dfrac{1}{r^{2P}} \Longrightarrow M_\otimes(r,P,\gamma)=  \gamma\dfrac{1}{P}\dfrac{1}{r^{2P}}.
    \end{array}
\end{equation}
This result implies that we need a larger number of examples if we use dense networks w.r.t. Hopfield pairwise networks, further, we stress that -whatever the case- we always end up with power laws thresholds for learning relating the critical amount of examples to the dataset noise.

We notice that if in \eqref{eq:g_of_unsuper_n} we set either $r \to 1$ and $M\to 1$ (i.e., we give the original patterns and not the examples, see Eq. \eqref{eq:Bernoulli}, and, so, we have $n_{1,a}=m_1$ in Def. \eqref{def:orderparam}), or $M\gg M_\otimes(r,P,\gamma)$ (i.e., we give the network a very large number of examples), we recover dense Hebbian neural network at work with the simpler storing protocol.


\par\medskip
The results obtained analytically in this section are corroborated by numerical simulations and the related outputs are collected in Figs.~\ref{fig:s2n_unsup}-\ref{fig:my_label6}. We stress that, to avoid the computational-expensive updating of the synaptic tensor in \eqref{def:H_PHopEx}, we implemented a Plefka's dynamics in our MC scheme. This is an effective dynamics that allows us to keep track of the evolution of the network's order parameters at the level of their mean values: we refer to Appendix \ref{app:plefka} for more details.
Let us now comment these numerical results. \\
A corroboration of the goodness of Plefka's dynamics is given in Fig.~\ref{fig:s2n_unsup}, where we show a comparison between MC simulation with Plefka's dynamics and stability analysis. \\ 
Figure \ref{fig:comparison_unsup} shows evidence that, for a given choice of $N, K, M$, and $r$, when the degree of interaction $P$ increases, the network is no longer able to generalise the archetype from the examples. This suggests that, if we want to retain the reconstruction capabilities, the number of examples $M$ should scale with $P$, as showed in \eqref{eq:loadM}.
\newline
Tying in with this speech, in Fig. \ref{fig:my_label5} we plot the number of examples $M$ w.r.t. the critical load $\gamma_c$, for different values of $P$. We recall that the critical load $\gamma_c$  is the load beyond which a black-out scenario emerges, namely $\lim\limits_{\gamma\to\gamma_c^{-}}\m\neq 0$ and $\lim\limits_{\gamma\to \gamma_c^{+}} \m= 0$. The black dotted line represents the critical load of the Hopfield dense neural network as a reference. We can see that, when $M$ is chosen following the prescription in \eqref{eq:loadM}, we can reach the performances of Hopfield dense network. \\
Finally, in Fig. \ref{fig:my_label6} we show the phase diagrams in the space $(\tilde \beta, \gamma)$ for different values of $P$ (each corresponding to a different panel). Interestingly, as $M$ increases, the retrieval zone gets wider.

\section{Conclusion and outlooks} \label{sec:conclusions}

In this paper we investigated the information processing capabilities of dense Hebbian networks  endowed with couplings stemmed by an unsupervised protocol for learning. In order to have a mathematically  tractable theory, this is developed for random structureless datasets. 
The network is made of $N$ neurons that interact in groups of $P$ units with a strength encoded by the synaptic tensor $\boldsymbol J^{(unsup)}$, whose generic entry (retaining only the leading order) reads as
\begin{equation}
J^{(unsup)}_{i_1i_2...i_P} \sim \frac{1}{ M N^{P-1}} \sum_{\mu=1}^K \sum_{a=1}^M \eta_{i_1}^{\mu,a} \eta_{i_2}^{\mu,a}...\eta_{i_P}^{\mu,a},
\end{equation}
where $\{\boldsymbol \eta^{\mu} \}_{a=1,...,M}^{\mu=1,...,K}$ is the dataset available, made of $K$ subsets of examples (labeled by $a$) referred to $K$ unknown archetypes. The quality of the dataset, namely how ``far'' these examples are, in the average, from the related archetype, is ruled by $r$ (such that by setting $M=1$ and $r=1$ we recover the standard dense Hebbian network under the simpler storage prescription\cite{Baldi, Bovier,Albanese2021}).

Hereafter we summarize the main outcomes of our work. 
As far as general neural network's theory is concerned
\begin{enumerate}
\item The dense  Hebbian network under the simpler storage prescription is well-known to be able to store a number of patterns that grows as $K \sim N^{P-1}$. This {\em high-load regime}, is preserved when the standard Hebbian coupling is replaced by the unsupervised Hebbian coupling. One can still introduce a load $\alpha_{P-1} = \lim_{N \to \infty} \frac{K}{N^{P-1}}$ and determine a critical value beyond which a black-out scenario emerges: notably, this value does not depend on the dataset properties and it is solely a network's characteristic.   

\item For a correct learning -and subsequent retrieval- of the archetype, there exists a threshold value $M_{\otimes}$ to overcome and this scales as $M_{\otimes} \propto 1/(P \, r^{2P})$. Thus, when using these dense machines in the unsupervised regime, one can actually reconstruct up to $K \sim N^{P-1}$ archetypes only under the condition of a suitably large number of required examples available. In other words, increasing the number of retrievable patterns by a factor $N$ (which means increasing the interaction order of one unit) requires an amplification in the number of examples per archetype of about $1/r^2$.
Increasing the dataset size beyond $M_{\otimes}$ leads to a wider retrieval region, namely to a larger critical load and to a larger critical temperature. The large cost in terms of available data is a peculiarity of the unsupervised regime,  in fact, in supervised dense networks, $M_{\otimes}$ does not scale with $P$, see \cite{super}. 

\item There is another intriguing feature displayed by dense networks that is preserved in the unsupervised regime. Indeed, the
reconstruction is feasible also in the presence of an extensive noise affecting its coupling and yielding to a signal-to-noise ratio increasing algebraically with $N$. Again, to mitigate the effects of this noise one has to move to a low-load regime in order to generate redundancy in the information allocated in the coupling tensor.

\end{enumerate}
As far as the computational and mathematical technicalities are concerned
\begin{enumerate}
\item in this dense scenario, the post-synaptic potential does not have a Gaussian shape and standard techniques (e.g. replica trick, interpolation approaches) do not work straightforwardly, however, it is possible to adapt them by applying the CLT and restoring an effective Gaussian framework whose validity is corroborated by numerical simulations.
\item as $P$ grows, the synaptic tensors become very expensive to evaluate (and prohibitive to  be updated during learning dynamics): to overcome this problem, we adapted the Plefka's approximation to the case, resulting in a remarkable speed up of the simulations, yet preserving an extremely good accuracy in the results.
\end{enumerate}
Overall these technical extensions are of broad generality and can be applied to several other neural networks.



\appendix

\section{Proof of Proposition \ref{P_quenched}} 
\label{app:proofP-1}

In order to prove Proposition \ref{P_quenched},  
we need the following
\begin{lemma} 
The $t$ derivative of the interpolating quenched pressure \eqref{hop_GuerraAction} is given by 
\begin{equation}
    \begin{array}{lll}
         \dfrac{d \mathcal{A}^{(P)}_{N,K,M, r, \beta}(t)}{d t}\coloneqq&  \dfrac{\beta '}{2M}(1+\rho)^{P/2}\SOMMA{a=1}{M}\left(\l n_{1,a}^{^P}\r_t-\dfrac{2 M\,\psi}{\beta '(1+\rho)^{P/2}}\l n_{1,a}\r_t \right)
         \\\\
        & -\dfrac{A^{\2}}{2}\Big(1-\l q_{12} \r_t\Big) +\dfrac{\b\,^2(1+\rho_P)}{4 (1+\rho)^{P}}\dfrac{K P!}{2N^{P-1}}  \Big(1 -\l q_{12}^{^{P}}\r_t\Big).
    \end{array}
    \label{eq:streaming_RS_Guerra}
\end{equation}
\normalsize
where we use $\rho_P=\frac{1-r^{2P}}{Mr^{2P}}$.
\end{lemma}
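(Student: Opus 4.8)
The plan is to run the standard ``sum rule'' computation: differentiate the interpolating pressure in $t$ and dispose of the Gaussian noise terms via Wick/Stein integration by parts. From \eqref{hop_GuerraAction} one has $\tfrac{d}{dt}\mathcal{A}^{(P)}_{N,K,M,\beta,r}(t)=\tfrac1N\mathbb{E}\,\omega_t\!\big(\partial_t \mathcal S_t\big)$, where $\mathcal S_t$ is the exponent of the interpolating Boltzmann factor in \eqref{def:partfunct_GuerraRS} (differentiation under $\mathbb{E}$ and under the $z$-integral is legitimate, the integrand being smooth and uniformly dominated on compact subsets of $(0,1)$, with the endpoints handled by continuity). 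The $t$-derivative of $\mathcal S_t$ is a sum of six pieces, one per $t$-dependent block: the signal block $\tfrac{\beta' N}{2M}(\R/r^2)^{P/2}\sum_a n_{1,a}^P$, the one-body magnetisation block $-\psi N\sum_a n_{1,a}$, the quenched-noise block with prefactor $\tfrac{1}{2\sqrt t}\sqrt{\beta'\V/(\R^{P/2}MN^{P-1})}$, the two $\sqrt{1-t}$ one-body blocks that produce prefactors $-\tfrac{1}{2\sqrt{1-t}}A$ and $-\tfrac{1}{2\sqrt{1-t}}B$, and the quadratic block $-\tfrac C2\sum_{\mu>1,a}z_{\mu,a}^2$; the constant $-\tfrac{\beta'}{2\R^{P/2}}K\V N^{1-P/2}$ is $t$-independent and contributes nothing.

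First I would dispatch the three ``regular'' pieces (no explicit external Gaussian): the signal block gives $\tfrac{\beta'}{2M}(1+\rho)^{P/2}\sum_a\langle n_{1,a}^P\rangle_t$ using $\R/r^2=1+\rho$; the $\psi$-block gives $-\psi\sum_a\langle n_{1,a}\rangle_t$ (these two assemble into the first bracket of \eqref{eq:streaming_RS_Guerra}); the $C$-block gives $-\tfrac C2\tfrac1N\big\langle\sum_{\mu>1,a}z_{\mu,a}^2\big\rangle_t=-\tfrac C2\tfrac{KM}{N}\langle p_{11}\rangle_t$ once $p_{11}=\tfrac1{KM}\sum_{\mu,a}z_{\mu,a}^2$ is recognised (and $K-1\sim K$). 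The remaining three pieces carry the singular factors $1/\sqrt t$ or $1/\sqrt{1-t}$; here I apply Gaussian integration by parts on $\lambda^{\mu,a}_{i_1\cdots i_{P/2}}$, on $J_{\mu,a}$, and on $Y_i$ respectively. Differentiating the Boltzmann factor with respect to the relevant Gaussian reproduces precisely the matching $\sqrt t$ or $\sqrt{1-t}$, so the singular prefactor cancels, and each IBP spawns a diagonal term $\omega_t(X^2)$ and a replica term $-\omega_t(X)^2$; duplicating the system lets me rewrite the latter as $-\langle X^{(1)}X^{(2)}\rangle_t$.

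What remains is bookkeeping. Using $\sigma_i^2=1$ and $(\sigma_{i_1}\cdots\sigma_{i_{P/2}})^2=1$, the diagonal terms collapse: in the $\lambda$-channel $\sum_{i_1,\ldots,i_{P/2}}1=N^{P/2}$ multiplies $\langle z_{\mu,a}^2\rangle_t$, in the $Y$-channel $\sum_i 1=N$; for the replica terms I carry out $\sum_{i_1,\ldots,i_{P/2}}\prod_{k}\sigma^{(1)}_{i_k}\sigma^{(2)}_{i_k}=(Nq_{12})^{P/2}$ and $\sum_i\sigma^{(1)}_i\sigma^{(2)}_i=Nq_{12}$, and collect $\sum_{\mu>1,a}z^{(1)}_{\mu,a}z^{(2)}_{\mu,a}=KM\,p_{12}$, $\sum_{\mu>1,a}z_{\mu,a}^2=KM\,p_{11}$. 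The $J$-channel then yields $-\tfrac{A^2}{2N}KM(\langle p_{11}\rangle_t-\langle p_{12}\rangle_t)$, the $Y$-channel $-\tfrac{B^2}{2}(1-\langle q_{12}\rangle_t)$, and the $\lambda$-channel $+\tfrac{\beta'\V}{2\R^{P/2}N^{P/2}}K(\langle p_{11}\rangle_t-\langle p_{12}q_{12}^{P/2}\rangle_t)$; summing the six contributions reproduces \eqref{eq:streaming_RS_Guerra}. The main obstacle is exactly this power-counting in the $\lambda$-channel: carrying the prefactor $\sqrt{\beta'\V/(\R^{P/2}MN^{P-1})}$ squared through the IBP and checking that the $N^{P/2}$ from the $(P/2)$-fold index sum, the $N^{1-P}$ from the coupling normalisation and the overall $1/N$ combine to leave exactly $N^{-P/2}$ in front of that term (so it survives the limit only at $b=P-1$); the Gaussian IBP itself, and the verification that the excluded diagonal/self-interaction index terms are subleading, are routine.
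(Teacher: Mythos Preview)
Your proposal is correct and follows essentially the same route as the paper: both differentiate the interpolating pressure directly, identify the signal/$\psi$/$C$ pieces by inspection, and then apply Stein's lemma to the three Gaussian channels $J_{\mu,a}$, $Y_i$, $\lambda^{\mu,a}_{i_1\cdots i_{P/2}}$ to produce the $D_1,D_2,D_3$ terms; your power-counting in the $\lambda$-channel and the $K-1\sim K$ simplification are exactly what the paper does (implicitly).
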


\begin{proof}
Deriving Eq. \eqref{hop_GuerraAction} with respect to $t$, we get 
\begin{equation}
\begin{array}{lll}
     \dfrac{d \mathcal{A}^{(P)}_{N,K,M, r,\beta}(t)}{d t}&=& \dfrac{1}{N} \mathbb{E} \dfrac{1}{\mathcal{Z}_{N, K,M, r,\beta }^{(P)}} \SOMMA{\bm \sigma}{} \mathcal B^{(P)}_{N, K,M, r,\beta}  \left[ \frac{\beta^{'}N}{2M}(1+\rho)^{P/2}\sum_{a=1}^M n_{1,a}^P - \psi N\sum_{a=1}^M n_{1,a} \right.
     \\\\
     &&\left.  - \dfrac{1}{2\sqrt{1-t}}A\SOMMA{i=1}{N}Y_i\sigma_1 +\dfrac{1}{2} \dfrac{\beta^{'} P!\sqrt{(1+\rho_P)}}{2t\,(1+\rho)^{P/2}N^{P}}\SOMMA{\mu\geq 2}{K}\SOMMA{(i_{_1},\cdots, i_{_{P}})}{N,\cdots,N}\lambda^{\mu}_{i_1\cdots i_P}\sigma_{_{i_1}}\cdots\sigma_{_{i_{_{P}}}}\right]
     \\\\
     &&= \dfrac{\beta^{'}}{2M}(1+\rho)^{P/2}\SOMMA{a=1}{M} \langle n_{1,a}^P \rangle_t - \psi \SOMMA{a=1}{M} \langle n_{1,a} \rangle_t +D_1+D_2.
\end{array}
\label{eq:proof_streaming_Guerra}
\end{equation}
Now, using Stein's lemma\footnote{This lemma, also known as Wick's theorem, applies to standard Gaussian variables, say $J \sim \mathcal N(0, 1)$, and states that, for a generic
function $f(J)$ for which the two expectations $\mathbb{E}\left( J f(J)\right)$ and $\mathbb{E}\left( \partial_J f(J)\right)$ both exist, then
\begin{align}
    \label{eqn:gaussianrelation2}
    \mathbb{E} \left( J f(J)\right)= \mathbb{E} \left( \frac{\partial f(J)}{\partial J}\right).
\end{align}
} on the random variables $Y_i$ and $\lambda^{\mu}_{i_1 \hdots i_{P}}$,
we may rewrite the last two terms of \eqref{eq:proof_streaming_Guerra} as

\begin{equation}
\begin{array}{lll}
     D_1&= 
    -\displaystyle{\frac{1}{2N\sqrt{1-t}}} B\SOMMA{i=1}{N} \mathbb{E} \partial_{Y_i} \left[ \frac{1}{\mathcal{Z}_{N, K,M, r, \beta }^{(P)}} \sum_{\bm \sigma} \mathcal B^{(P)}_{N, K,M, r,\beta}  \sigma_i \right]&=-\dfrac{A^{\2}}{2}\Big(1-\l q_{12} \r_t\Big),
\end{array}
\label{eq:D2_GuerraRS}
\end{equation}
\begin{equation}
\begin{array}{lll}
     D_2&=  \displaystyle{ \frac{\beta^{'}\sqrt{(1+\rho_P)}}{4t\,(1+\rho)^{P/2}  N^{P+1}}} \SOMMA{\mu\geq 2}{K}\SOMMA{(i_{_1},\cdots, i_{_{P}})}{N,\cdots,N} \mathbb{E} \partial_{\lambda^{\mu}_{i_1,\cdots,i_P}} \left[ \frac{1}{\mathcal{Z}_{N, K,M, r, \beta }^{(P)}} \sum_{\bm \sigma}\mathcal B^{(P)}_{N, K,M, r,\beta}  \sigma_{_{i_1}}\cdots\sigma_{_{i_{_{P}}}} \right]
    \\\\
    \label{eq:D3_GuerraRS}
    &=\dfrac{\b\,^2 (1+\rho_P)}{4 (1+\rho)^{P}}\dfrac{KP!}{2N^{P-1}}  \Big(1 -\l q_{12}^{^{P}}\r_t\Big).
\end{array}
\end{equation}
\normalsize
Rearranging together \eqref{eq:D2_GuerraRS} and \eqref{eq:D3_GuerraRS} we obtain the thesis.
\end{proof}

\begin{assumption}
\label{ass: RSassumption}
	As a consequence of the RS assumption, for the generic order parameter $X$, being $\Delta X \coloneqq  X - \bar{X}$, the deviation w.r.t. the expectation value, then $$\langle (\Delta X)^2 \rangle_t \xrightarrow[]{N\to\infty}0$$
	and, clearly, the RS approximation also implies that, in the thermodynamic limit, $\langle \Delta X \Delta Y \rangle_t \to 0$ for any generic pair of order parameters $X,Y$. Moreover in the thermodynamic limit, we have $\langle (\Delta X)^k \rangle_t \rightarrow 0$ for $k \geq 2$.
\end{assumption}
Hereafter, in order to lighten the notation, we will drop the subscript $t$.
%
In the following we can use the relation
\begin{equation}
 \begin{array}{lll}
     \langle x^P \rangle -P\,\bar{x}^{P-1}\langle x \rangle &=& -(P-1) \bar{x}^{P}+ \SOMMA{k=2}{P} \begin{pmatrix}P\\k\end{pmatrix} \langle (x-\bar{x})^k \rangle \bar{x}^{P-k} ,
\end{array}  
\label{eq:RS_pq_Potenziali}
\end{equation}
for any order parameter $x$ with equilibrium value $\bar{x}$, which is computed straightforwardly by Newton's binomial \cite{GuysAlone}. 

Using these relations, if we fix the constants $\psi, A$, appearing in the interpolating partition function introduced in Definition \ref{def:part_Interpolante_RS}, as
\begin{equation}
    \begin{array}{lll}
         &\psi=\beta '\dfrac{P}{2M}(1+\rho)^{P/2}\n^{P-1} , 
         \
         &A^2=\dfrac{\b \,^2(1+\rho_P)}{(1+\rho)^{P}} \dfrac{P}{2} \dfrac{K P!}{2 N^{P-1}}\q^{^{P-1}}\,,
    \end{array}
\end{equation}
we can rewrite the derivative of the interpolating pressure w.r.t. $t$ as 
\begin{equation}
\begin{array}{lll}
     \dfrac{d \mathcal{A}_{N,K,M, r, \beta}^{(P)}(t)}{d t}&\coloneqq&  -\dfrac{\beta '}{2}(P-1)(1+\rho)^{P/2}\n^P+\dfrac{\b\,^2(1+\rho_P)}{4 (1+\rho)^{P}}\dfrac{K P!}{2N^{P-1}}(1-P\q^{P-1}+(P-1)\q^P) \\\\
     &+& \dfrac{\b}{2M}(1+\rho)^{P/2}\SOMMA{a=1}{M} \SOMMA{k=2}{P} \begin{pmatrix}P\\k\end{pmatrix} \langle (n_{1,a}-\bar{n})^k \rangle \bar{n}^{P-k}\\\\
     &+&\dfrac{\b\,^2(1+\rho_P)}{4 (1+\rho)^{P}}\dfrac{K P!}{2N^{P-1}}\SOMMA{k=2}{P} \begin{pmatrix}P\\k\end{pmatrix} \langle (q_{12}-\bar{q})^k \rangle \bar{q}^{P-k}.
\end{array}
\label{eq:streaming_RS_Guerra2}
\end{equation}
\normalsize



\begin{proof}

\textit{(of Proposition \ref{P_quenched})} Exploiting the fundamental theorem of calculus, we can relate $\mathcal{A}_{N, K,M, \beta , r}^{(P)}(t=1)$ and $\mathcal{A}_{N, K,M, \beta , r}^{(P)}(t=0)$ as

\begin{equation}
    \mathcal{A}_{N, K,M, \beta , r}^{(P)}=\mathcal{A}_{N, K,M, r, \beta}^{(P)}(t=1)=\mathcal{A}_{N, K,M, r, \beta}^{(P)}(t=0)+\int\limits_0^1\, \partial_s \mathcal{A}_{N, K,M, r, \beta}^{(P)}(s
)\Big|_{s=t}\,dt.
\label{eq:F_T_Calculus}
\end{equation}
\normalsize
We have just computed the derivative w.r.t. $t$, which is \eqref{eq:streaming_RS_Guerra2}; all we need is to recover the one-body term:
\begin{equation}
\begin{array}{lll}
     \mathcal{A}_{N, K,M, r,\beta}^{(P)}(t=0)=\mathbb{E}&\left\{ \ln 2\cosh\left[J\xi^1+\beta '\dfrac{P}{2}\n^{P-1} (1+\rho)^{P/2-1}\resub\etaM \right. \right. \\\\
     &\left. \left.+Y \sqrt{\dfrac{\b \,^2(1+\rho_P)}{(1+\rho)^{P}} \dfrac{P}{2} \dfrac{K P!}{2 N^{P-1}}\q^{^{P-1}}}\right]\right\}.
    \end{array}
    \label{eq:one_body_GuerraRS_Nfinito}
\end{equation}
Putting \eqref{eq:streaming_RS_Guerra2} and \eqref{eq:one_body_GuerraRS_Nfinito} in \eqref{eq:F_T_Calculus}, we find 
\begin{equation}
\label{eq:pressure_GuerraRS_Nfinito}
\begin{array}{lll}
     &\mathcal{A}_{N,K,M, r, \beta}^{(P)} = \mathbb{E}\left\{ \ln{2\cosh{\left[J\xi^1+\beta '\dfrac{P}{2}\n^{P-1} (1+\rho)^{P/2-1}\resub\etaM+Y \sqrt{\dfrac{\b \,^2(1+\rho_P)}{(1+\rho)^{P}} \dfrac{P}{2} \dfrac{K P!}{2 N^{P-1}}\q^{^{P-1}}}\right]}}\right\}
         \\\\
         &-\dfrac{\beta '}{2}(P-1)(1+\rho)^{P/2}\n^P+\dfrac{\b\,^2(1+\rho_P)}{4 (1+\rho)^{P}}\dfrac{K P!}{2N^{P-1}}(1-P\q^{P-1}+(P-1)\q^P)+ \displaystyle\int\limits_0^1 V_{N,M}(t)dt.
\end{array}
\end{equation}
where $\mathbb{E}=\mathbb{E}_\xi\mathbb{E}_{(\eta|\xi)}\mathbb{E}_Y$, $\etaM=\frac{1}{rM}\SOMMA{a=1}{M}\eta^{1,a}$ and the potential $V_{N,M}(t)$ is
\begin{equation}
\begin{array}{lll}
    V_{N,M}(t) =&    \dfrac{\b(1+\rho)^{P/2}}{2M}\SOMMA{a, k=1}{M, P}  \begin{pmatrix}P\\k\end{pmatrix} \langle (n_{1,a}-\bar{n})^k \rangle \bar{n}^{P-k}+\dfrac{\b\,^2(1+\rho_P)K P!}{8 (1+\rho)^{P}N^{P-1}}\SOMMA{k=2}{P} \begin{pmatrix}P\\k\end{pmatrix} \langle (q_{12}-\bar{q})^k \rangle \bar{q}^{P-k}
\end{array}
\end{equation}

Now, we know that for  $b=P-1$ we have $K =\alpha_{P-1} N^{P-1} + O(N^{P-1-\epsilon})$, $\epsilon >0$; thus, neglecting the lower terms, we have
\begin{equation}
\label{eq:pressure_GuerraRS_NN}
\begin{array}{lll}
     \mathcal{A}_{\alpha_{P-1},M, r, \beta}^{(P)} =& \mathbb{E}\left\{ \ln{2\cosh{\left[J\xi^1+\beta '\dfrac{P}{2}\n^{P-1} (1+\rho)^{P/2-1}\resub\etaM+Y \sqrt{\alpha_{P-1}\dfrac{P!}{2}\dfrac{\b \,^2(1+\rho_P)}{(1+\rho)^{P}} \dfrac{P}{2} \q^{^{P-1}}}\right]}}\right\}
         \\\\
         &-\dfrac{\beta '}{2}(P-1)(1+\rho)^{P/2}\n^P+\alpha_{P-1}\dfrac{P!}{2}\dfrac{\b\,^2(1+\rho_P)}{4 (1+\rho)^{P}}(1-P\q^{P-1}+(P-1)\q^P).
\end{array}
\end{equation}

Finally, we maximise the statistical pressure in \eqref{eq:pressure_GuerraRS_NN} w.r.t. the order parameters and we find 
	\begin{equation}
    \begin{array}{lll}
         \n=\dfrac{1}{1+\rho}\mathbb{E}\left\{\tanh{\left[\beta '\dfrac{P}{2}\n^{P-1} (1+\rho)^{P/2-1}\resub\etaM+Y \sqrt{\alpha_{P-1}\dfrac{P!}{2}\dfrac{\b \,^2(1+\rho_P)}{(1+\rho)^{P}} \dfrac{P}{2} \q^{^{P-1}}}\right]\etaM}\right\},
         \\\\
         \q=\mathbb{E}\left\{\tanh{}^{\2}{\left[\beta '\dfrac{P}{2}\n^{P-1} (1+\rho)^{P/2-1}\resub\etaM+Y \sqrt{\alpha_{P-1}\dfrac{P!}{2}\dfrac{\b \,^2(1+\rho_P)}{(1+\rho)^{P}} \dfrac{P}{2} \q^{^{P-1}}}\right]}\right\}.
         \label{eq:n_selfRS}
    \end{array}
\end{equation}
Putting the definition of P-independent load from \eqref{eq:alphaPP-1} in \eqref{eq:n_selfRS} we reach the thesis. 
\end{proof}


\begin{corollary}
In the large dataset limit, in the high-storage regime, the RS self-consistency equations can be expressed as
\begin{align}
    \n&= \dfrac{\m}{1+\rho} +\beta ' \dfrac{P}{2}\rho(1+\rho)^{P/2-2}(1-\q)\n^{^{P-1}}.\notag
    \\
    \q&=\mathbb{E}_{_Z}\left[\tanh{}^{\2}{ g(\beta, Z, \bar{n})}\right].\label{eq:n_of_M_unsup_app}
    \\
    \m&= \mathbb{E}_{_Z}\left[\tanh{ g(\beta, Z, \bar{n})}\right].\notag
\end{align} 
where 
 \begin{align}
    &g(\beta, Z, \bar{n})=\beta '\dfrac{P}{2}\n^{^{P-1}}(1+\rho)^{P/2-1}+\beta 'Z\sqrt{\rho\dfrac{P^2}{4}\n^{^{2P-2}}(1+\rho)^{P-2}+ \alpha_{P-1}\dfrac{P!}{2}\dfrac{(1+\rho_P)}{(1+\rho)^{P}} \dfrac{P}{2} \q^{^{P-1}}\;}\;.
    \label{eq:g_of_unsuper_n_app}
 \end{align}
\end{corollary}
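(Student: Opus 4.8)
The plan is to start from the replica-symmetric self-consistency equations \eqref{eq:High_store_self_n_q}--\eqref{eq:High_store_self_m} of Proposition \ref{P_quenched} and to exploit the large-dataset limit $M\to\infty$ acting on the empirical mean $\chiM=\frac1M\sum_{a=1}^M\chi_a$. Since the $\chi_a$ are i.i.d.\ Bernoulli variables with $\mathbb E[\chi_a]=r$ and $\mathrm{Var}(\chi_a)=1-r^2$, the rescaled average $\chiM/r$ has unit mean and variance $\rho=(1-r^2)/(Mr^2)$, so the central limit theorem gives, in distribution, $\chiM/r\simeq 1+\sqrt{\rho}\,\lambda$ with $\lambda\sim\mathcal N(0,1)$. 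This is the only place where the dataset size enters the argument of the hyperbolic tangent, hence substituting this Gaussian surrogate into \eqref{eq:High_store_self_n_q}--\eqref{eq:High_store_self_m} is the whole content of the ``large dataset'' reduction.

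First I would insert $\chiM/r\to 1+\sqrt{\rho}\,\lambda$ into the arguments of the hyperbolic tangents. The signal term splits into the deterministic piece $\b\frac P2\n^{P-1}(1+\rho)^{P/2-1}$ and a fluctuating piece $\b\frac P2\n^{P-1}(1+\rho)^{P/2-1}\sqrt{\rho}\,\lambda$; being an independent centred Gaussian, the latter merges with the noise term carried by $Y$, and using that a sum of two independent centred Gaussians is again Gaussian with added variances one recovers exactly the effective field $g(\beta,Z,\n)$ of \eqref{eq:g_of_unsuper_n_app}, with a single $Z\sim\mathcal N(0,1)$ and variance $\rho\,\frac{P^2}{4}\n^{2P-2}(1+\rho)^{P-2}+\alpha_{P-1}\frac P2\frac{\V^{2}}{\R^{P/2}M}\q^{P-1}$. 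For the equations governing $\q$ and $\m$, which carry no residual $\chiM$ factor, this substitution is immediate and yields at once $\q=\mathbb E_Z[\tanh^2 g(\beta,Z,\n)]$ and $\m=\mathbb E_Z[\tanh g(\beta,Z,\n)]$, i.e.\ the second and third identities in the statement.

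The only non-routine step is the equation for $\n$, whose integrand still carries the explicit factor $\chiM/r=1+\sqrt{\rho}\,\lambda$ multiplying the hyperbolic tangent. I would expand the product and treat the term $\sqrt{\rho}\,\mathbb E_\lambda[\lambda\tanh(\cdots)]$ by Gaussian integration by parts (Stein's lemma, already invoked above) performed at fixed $Y$: differentiating the argument of $\tanh$ with respect to $\lambda$ brings down the coefficient $\b\frac P2\n^{P-1}(1+\rho)^{P/2-1}\sqrt{\rho}$ times $1-\tanh^2(\cdots)$, so that $\mathbb E_\lambda[\lambda\tanh(\cdots)]=\b\frac P2\n^{P-1}(1+\rho)^{P/2-1}\sqrt{\rho}\,\mathbb E_\lambda[1-\tanh^2(\cdots)]$. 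Merging $\lambda$ and $Y$ into $Z$ as above and recognising $\mathbb E_Z[\tanh g]=\m$, $\mathbb E_Z[\tanh^2 g]=\q$, the overall prefactor $1/(1+\rho)$ combined with the identity $\tfrac{\rho}{1+\rho}(1+\rho)^{P/2-1}=\rho(1+\rho)^{P/2-2}$ reproduces precisely $\n=\frac{\m}{1+\rho}+\b\frac P2\rho(1+\rho)^{P/2-2}(1-\q)\n^{P-1}$, the first displayed identity. The main obstacle is not conceptual but a matter of bookkeeping: one must verify that the CLT replacement is legitimate to the order retained --- i.e.\ that the higher cumulants of $\chiM$ only generate subleading corrections, consistently with the large-$M$ (equivalently $\rho\ll1$) regime in which the corollary is stated --- and one must keep careful track of the order in which the $\lambda$- and $Y$-averages are performed, since the Stein manipulation must be carried out before the two Gaussians are combined into $Z$.
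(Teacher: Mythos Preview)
Your proposal is correct and follows essentially the same route as the paper's own proof: CLT replacement $\chiM/r\simeq 1+\sqrt{\rho}\,\lambda$, Stein's lemma on the $\lambda$-factor multiplying $\tanh$ in the $\n$-equation, identification of the resulting averages with $\m$ and $\q$, and finally merging the two independent Gaussians $\lambda,Y$ into a single $Z$. Your additional remark that the Stein step must be performed \emph{before} collapsing $\lambda$ and $Y$ into $Z$ is a useful clarification that the paper leaves implicit.
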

\begin{proof}
In large dataset limit we can use the CLT so that we have 
\begin{equation}
\begin{array}{lll}
      \mathbb{E}_{(\eta|\xi)}[\etaM] = \xi^1\,,&\;\;\;\;\;&  \mathbb{E}_{(\eta|\xi)}[(\etaM)^2]-\Big(\mathbb{E}_{(\eta|\xi)}[\etaM]\Big)^2 = \rho (\xi^1)^2
\end{array}
\end{equation}
thus we get
\begin{equation}
    \etaM\sim \xi^1\left(1+\lambda\sqrt{\rho}\right).
    \label{eq:CLT_unsup}
\end{equation}
where $\lambda$ is a standard Gaussian variable $\lambda\sim \mathcal{N}(0,1)$. Now, replacing \eqref{eq:CLT_unsup} in the self-consistency equation for $\bar{n}$ in \eqref{eq:High_store_self_n_q}, applying Stein's lemma and exploiting the self-consistency equations for $\m$ and $\q$ in \eqref{eq:High_store_self_n_q} we get \eqref{eq:n_of_M_unsup_app}. Replacing this new expression of $\n$ in  the argument of the hyperbolic tangent of \eqref{eq:High_store_self_n_q}  and exploiting the parity of the hyperbolic tangent, we can explicitly compute the mean over $\xi$ 

\footnotesize

\begin{align}
    &\n=\dfrac{1}{1+\rho}\mathbb{E}_{\xi, \lambda,Y}\left\{\tanh\left[\beta '\left(\dfrac{P}{2}\n^{^{P-1}}(1+\rho)^{P/2-1}\left(1+\lambda\sqrt{\rho}\right)\xi^1+Y\sqrt{  \dfrac{\alpha_{P-1}P!(1+\rho_P)}{2(1+\rho)^{P}} \dfrac{P}{2} \q^{^{P-1}}\;}\right)\right]\left(1+\lambda\sqrt{\rho}\right)\xi^1\right\}\;\notag
    \\
    &=\dfrac{1}{1+\rho}\mathbb{E}_{\lambda,Y}\left\{\tanh\left[\beta '\left(\dfrac{P}{2}\n^{^{P-1}}(1+\rho)^{P/2-1}\left(1+\lambda\sqrt{\rho}\right)+Y\sqrt{  \alpha_{P-1}\dfrac{P!}{2}\dfrac{(1+\rho_P)}{(1+\rho)^{P}} \dfrac{P}{2} \q^{^{P-1}}\;}\right)\right]\left(1+\lambda\sqrt{\rho}\right)\right\}
 \end{align} 
\normalsize
Now we use the relation
\begin{align}
    \mathbb{E}_{\lambda,Y}[F(a_1+ \lambda a_2+Y a_3)]=\mathbb{E}_{_Z}\left[F\left(a_1+Z\sqrt{a_2^2+a_3^2}\right)\right],
\end{align}
with $\lambda$, $Y$ and $Z$ i.i.d. Gaussian random variables, $\lambda$, $Y$ $Z \sim \mathcal{N}(0,1)$ and we have put $F(a_1+\lambda a_2+Y a_3)= \tanh(a_1+\lambda a_2+Y a_3)$, $a_1=\beta '\dfrac{P}{2}\n^{^{P-1}}(1+\rho)^{P/2-1}$, $a_2=a_1\sqrt{\rho}$, $a_3=\beta '\sqrt{\alpha_{P-1}\frac{P!}{2}\dfrac{\b \,^2(1+\rho_P)}{(1+\rho)^{P}} \frac{P}{2} \q^{^{P-1}}\;}$\\ In this way we can reduce the number of Gaussian averages to a single one and reach the thesis.
\end{proof}

\begin{corollary}
The self-consistency equations of the dense Hebbian neural networks in the unsupervised setting in the high-storage regime  and in null-temperature limit $\beta \to \infty$ are
\begin{equation}
    \begin{array}{lll}
         \m = \mathrm{erf}\left[\dfrac{P}{2}\dfrac{\m^{P-1}}{G}\right],\;\;\;\;\q=1,
          \\\\
         G=\sqrt{2\left[\rho\left(\dfrac{P}{2}\m^{^{P-1}}\right)^2+  \alpha_{P-1}\dfrac{P!}{2}(1+\rho_P) \dfrac{P}{2} \right]\;}.
    \end{array}
\end{equation}
\end{corollary}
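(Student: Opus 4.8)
The plan is to start from the large-dataset, high-storage self-consistency equations of the previous Corollary, i.e.\ \eqref{eq:n_of_M_unsup_app} with the argument \eqref{eq:g_of_unsuper_n_app}, and to push them to the noiseless limit $\beta\to\infty$ termwise. First I would restrict to the retrieval region, where $1-\q$ is vanishing, and to the regime $\rho\ll1$, so that the truncation $\n(1+\rho)\sim\m$ discussed in Sec.~\ref{sec:datalimit_unsup} applies; this recasts the argument of the hyperbolic tangent into the $\m$-dependent form \eqref{eq:g_of_unsuper_m_mod},
\begin{equation}
g(\beta,Z,\m)=\tilde{\beta}\,\dfrac{P}{2}\m^{P-1}+\tilde{\beta}\,Z\sqrt{\rho\Big(\dfrac{P}{2}\m^{P-1}\Big)^{2}+\alpha_{P-1}\,\dfrac{P}{2}\,\dfrac{\V^{2}}{Mr^{2P}}\,\q^{P-1}}\,,
\end{equation}
with $\tilde{\beta}=\beta'/(1+\rho)^{P/2}$ as in \eqref{eq:tildebeta}; since $\beta'=2\beta/P!$, the limit $\beta\to\infty$ is equivalent to $\tilde{\beta}\to\infty$.

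Next I would dispatch the $\q$-equation. For each fixed $Z$ one has $\tanh^{2}g\to1$ as $\tilde{\beta}\to\infty$, unless $g=0$, and $g=0$ forces $Z$ to take the single value $-\tfrac{P}{2}\m^{P-1}\big/\sqrt{\rho(\tfrac{P}{2}\m^{P-1})^{2}+\alpha_{P-1}\tfrac{P}{2}\V^{2}\q^{P-1}/(Mr^{2P})}$, which is a Lebesgue-null event for the standard Gaussian $Z$ as long as the radicand is strictly positive (true whenever $\rho>0$, or the load term does not vanish). Since $\tanh^{2}\le1$, dominated convergence yields $\q\to1$.

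Substituting $\q=1$ into the radicand, the coefficient of $Z$ in $g$ becomes $\tilde{\beta}\,G/\sqrt{2}$ with $G$ exactly as in the statement. For the $\m$-equation, since the positive constant $\tilde{\beta}$ can be pulled out of the sign, $\tanh g\to\mathrm{sign}\!\big(\tfrac{P}{2}\m^{P-1}+Z\,G/\sqrt{2}\big)$ pointwise off the same null set, so dominated convergence gives $\m=\mathbb{E}_{Z}\big[\mathrm{sign}(\tfrac{P}{2}\m^{P-1}+Z\,G/\sqrt{2})\big]$. Evaluating this Gaussian integral through the elementary identity $\int\!\tfrac{dz}{\sqrt{2\pi}}e^{-z^{2}/2}\,\mathrm{sign}(a+bz)=\mathrm{erf}(a/(b\sqrt{2}))$ for $b>0$ — already used in Sec.~\ref{sec:MC}, cf.\ \eqref{eq:s2n_M} — with $a=\tfrac{P}{2}\m^{P-1}$ and $b=G/\sqrt{2}$ gives $\m=\mathrm{erf}(\tfrac{P}{2}\m^{P-1}/G)$, which together with $\q=1$ is exactly \eqref{GroundZero}.

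I expect the main obstacle to be the legitimacy of taking $\beta\to\infty$ only after the thermodynamic limit, the CLT/large-dataset approximation and the retrieval-region truncation have already been carried out: these are successive, independently motivated reductions rather than a single uniform limit, so strictly one should verify that the order of operations is immaterial. A minor related subtlety is the degenerate configuration $\m=0$ with vanishing noise coefficient (so that $G\to0$), which must be excluded for the pointwise convergence of $\tanh g$; beyond this, the argument is routine dominated convergence together with the standard $\mathrm{sign}$-against-Gaussian integral.
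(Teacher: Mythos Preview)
Your argument is correct and reaches the same conclusion as the paper, but by a more direct route. The paper proceeds in the traditional spin-glass fashion: it adds an auxiliary source $\tilde\beta x$ to the argument of the hyperbolic tangent, posits the reparametrisation $\q=1-\delta\q/\tilde\beta$, and observes the generating-function identity $\partial\m/\partial x=\delta\q$, which lets it focus on the $\m$-equation alone before sending $\tilde\beta\to\infty$ and invoking $\tanh\to\mathrm{sign}$. You instead establish $\q\to1$ directly by dominated convergence ($\tanh^{2}g\to1$ off a Gaussian-null set), substitute $\q=1$ into the radicand, and apply dominated convergence a second time to the $\m$-equation.

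What each approach buys: the paper's auxiliary-field/reparametrisation trick is the standard device when one also needs the \emph{rate} at which $1-\q$ vanishes (i.e.\ the finite quantity $\delta\q=\tilde\beta(1-\q)$, which enters subleading corrections or the Almeida--Thouless analysis); here that information is introduced but not actually exploited, so for the bare statement of the Corollary your dominated-convergence argument is both sufficient and cleaner. Your caveats about the order of limits and the degenerate $\m=0$, $G\to0$ configuration are apt but are equally present (and equally unaddressed) in the paper's derivation, so they do not constitute a gap relative to the reference proof.
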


\begin{proof}
For this proof it is convenient to introduce an additional term $\tilde \beta x$ in the argument of the hyperbolic tangent ($g(\Tilde{\beta},Z,\m)$) in  \eqref{eq:g_of_unsuper_n_app}
\begin{equation}
    \begin{array}{lll}
        \q&=&\mathbb{E}_{_Z}\left[\tanh{}^{\2}\left({ \tilde\beta \dfrac{P}{2}\m^{^{P-1}}+\tilde\beta Z\sqrt{\rho\left(\dfrac{P}{2}\m^{^{P-1}}\right)^2+ \alpha_{P-1}\dfrac{P!}{2}(1+\rho_P) \dfrac{P}{2} \q^{^{P-1}}}} + \tilde\beta x\right)\right].
    \\\\
    \m&=& \mathbb{E}_{_Z}\left[\tanh\left({ \tilde\beta \dfrac{P}{2}\m^{^{P-1}}+\tilde\beta Z\sqrt{\rho\left(\dfrac{P}{2}\m^{^{P-1}}\right)^2+    \alpha_{P-1}\dfrac{P!}{2}(1+\rho_P) \dfrac{P}{2} \q^{^{P-1}}}} + \tilde\beta x\right)\right].
    \end{array}
    \label{eq_self_con_x}
\end{equation}
Also, we notice that, as $\tilde \beta \to \infty$, in the previous equations $\q\to 1$, thus in order to correctly perform the limit we introduce the reparametrization 
\begin{equation}
    \q=1-\dfrac{\delta \q}{\tilde\beta}\;\;\;\;\mathrm{as}\;\;\;\;\tilde\beta\to \infty.
    \label{eq_reparam}
\end{equation}
Using \eqref{eq_reparam} in \eqref{eq_self_con_x} we obtain
\begin{equation}
\small
    \begin{array}{lll}
    \m&=& \mathbb{E}_{_Z}\left[\tanh\left({ \tilde\beta \dfrac{P}{2}\m^{^{P-1}}+\tilde\beta Z\sqrt{\rho\left(\dfrac{P}{2}\m^{^{P-1}}\right)^2+ \alpha_{b}
   \alpha_{P-1}\dfrac{P!}{2}(1+\rho_P) \dfrac{P}{2} \,\left( 1-\dfrac{\delta \q}{\tilde\beta} \right)^{^{P-1}}}} + \tilde\beta x\right)\right].
    \\\\
    1-\dfrac{\delta \q}{\tilde\beta}&=&\mathbb{E}_{_Z}\left[\tanh{}^{\2}\left({ \tilde\beta \dfrac{P}{2}\m^{^{P-1}}+\tilde\beta Z\sqrt{\rho\left(\dfrac{P}{2}\m^{^{P-1}}\right)^2+  \alpha_{P-1}\dfrac{P!}{2}(1+\rho_P) \dfrac{P}{2} \,\left( 1-\dfrac{\delta \q}{\tilde\beta} \right)^{^{P-1}}}} + \tilde\beta x\right)\right].
    \end{array}
\end{equation}
\normalsize
Taking advantage of the new parameter $x$, we can recast the last equation in $\delta \q$ as a derivative of the magnetization $\m$ :
\begin{equation}
    \dfrac{\partial\m}{\partial x}=\tilde\beta\left[1-\left(1-\dfrac{\delta\q}{\tilde\beta}\right)\right]=\delta\q.
\end{equation}
Thanks to this correspondence between the self equation for $\m$ and the one for $\delta \q$, we can focus only on the self equation for $\m$ and we can proceed with the  $\tilde\beta\to \infty$. Thus, as $\tilde\beta\to \infty$, we have 
\begin{equation}
    \begin{array}{lll}
         \m = \mathbb{E}_{_Z}\left[\mathrm{sign}\left(\dfrac{P}{2}\m^{^{P-1}}+Z\sqrt{\rho\left(\dfrac{P}{2}\m^{^{P-1}}\right)^2+  \alpha_{P-1}\dfrac{P!}{2}(1+\rho_P) \dfrac{P}{2}\;}\right)\right]
          \\\\
         \q\to 1.
     \label{eq_self_beta_finali}
    \end{array}
\end{equation}
Where we have restored $x$ to zero. Finally, we can rearrange \eqref{eq_self_beta_finali}  using the relation
\begin{equation}   \mathbb{E}_z\mathrm{sign}[b_1+z\,b_2]=\mathrm{erf}\left[\dfrac{b_1}{\sqrt{2}b_2}\right],
\end{equation}
where $b_1=\frac{P}{2}\m^{^{P-1}}$ and  $b_2= \sqrt{\rho\left(\frac{P}{2}\m^{^{P-1}}\right)^2+ 
\alpha_{P-1}\frac{P!}{2}(1+\rho_P) \frac{P}{2} \;}$, hence reaching the thesis. 
\end{proof}

\section{Plefka's expansion of the effective Gibbs potential}
\label{app:plefka}
When dealing with dense networks, MC simulations become prohibitively slow due to the computationally expensive update of their synaptic tensor (see the cost function, Eq. \ref{def:H_PHopEx}) and, clearly, the higher the interaction order $P$ the slower their convergence. To speed up simulations an alternative route where these computations can be avoided should be walked. Implementing Plefka's dynamics within a MC scheme can be a way out as the latter is an effective dynamics that allows us to keep track of the evolution of the network order-parameters at the level of their mean values.
\newline
The purpose of this section is thus to follow the path developed in \cite{Plefka1,Plefka2}, namely we switch from the free energy to the Gibbs potential via Legendre transform, then we compute the expansion of the Gibbs potential that allows us to write effective (coarse grained) dynamics for the mean of the Mattis magnetization and we use such an expression within the update rule of the MC iterations.
\newline 
We split the following analysis in two parts: first, we show the computation in classic dense networks, then we generalize the approach for unsupervised  learning. 

\subsection{Plefka's effective dynamics for Hebbian storing}
\label{app:plefka_hebb}

The system is described by its Hamiltonian
\begin{align}
\label{eq:Hintdense}
  \mathcal{H}_{N,K}^{(P)}(\bm \sigma \vert \bm \xi; \bm z)
  = - \dfrac{1}{\sqrt{ N^{P-1}}} \sum_{\mu} \sum_{i_1 \hdots i_{P/2} } \xi_{i_1}^\mu \hdots \xi_{i_{P/2}}^\mu \sigma_{i_1} \hdots \sigma_{i_{P/2}} z_\mu 
\end{align}
which is the interaction part of the integral representation of the Hamiltonian of the dense Hebbian network in \cite{Albanese2021}. Moreover, $\{z_\mu \}$ is an additional set of real Gaussian variable we have added to describe the model.

In order to use Plefka's dynamic, we introduce a control parameter $\NotAlpha$ and we defined a new Hamiltonian as 
\begin{align}
\label{Halpha}
\mathcal{H}_{N,K}^{(P)}(\boldsymbol{\sigma} \vert \bm \xi; \bm z, \bm h, \tilde{\bm h}, \NotAlpha)
= \NotAlpha \mathcal{H}_{N,K}^{(P)}(\bm \sigma \vert \bm \xi; \bm z)
- \dfrac{1}{\sqrt{N^{P-1} }}\sum_i h_i \sigma_i - \sum_{\mu} \tilde h_{\mu} z_\mu 
\end{align}
in such a way that if $\NotAlpha=0$ we have an Hamiltonian representing non-interacting units, whereas if $\NotAlpha=1$ we have an Hamiltonian representing full-interacting units; for this reason $\NotAlpha$ is referred to as interaction strength. 
Moreover, $\{ h_i \}$ and $\{\tilde{h}_i\}$ are external fields which act, respectively, on $\bm \sigma$ and $\bm z$. 

Using the expression of the modified Hamiltonian \eqref{Halpha} we write down the partition function as
\begin{align}
      &\mathcal{Z}_{N,K,\beta}^{(P)}( \bm \xi ; \bm h,  \tilde{ \bm h}, \NotAlpha )
     = \sum_{\sigma} \int \prod_\mu dz_\mu \sqrt{\frac{\b}{2\pi}} \exp \left( - \NotAlpha \dfrac{\b}{\sqrt{N^{P-1}}} \sum_{\mu} \sum_{i_1 \hdots i_{P/2} } \xi_{i_1}^\mu \hdots \xi_{i_{P/2}}^\mu \sigma_{i_1} \hdots \sigma_{i_{P/2}} z_\mu  \right.\notag \\
    &\left. - \frac{\b}{2} \sum_{\mu} z_\mu^2+\b \sum_{\mu} \tilde h_{\mu} z_\mu +\dfrac{\b}{\sqrt{N^{P-1}}} \sum_i h_i \sigma_i \right).
\end{align}
\normalsize

where $\beta'=2\beta/P!$. 

We now consider the Gibbs potential for this model, that is the Legendre transformation of the free energy constrained w.r.t. the magnetizations $m_i= \langle \si \rangle$ and $\langle z_\mu \rangle$ averaged w.r.t. Boltzmann distribution $\propto \exp^{-\beta H(\NotAlpha)}$;
for our model the Gibbs potential is given by 
\begin{align}
\label{gibbsPotential_dense}
    \mathcal G_{N,K,\beta}^{(P)}( \bm \xi ;\bm z,\bm h,  \tilde{ \bm h}, \NotAlpha )
     = -\frac{1}{\b}\ln  \mathcal Z(\NotAlpha) +  \sum_{\mu} \tilde h_{\mu} \langle z_\mu \rangle + \dfrac{1}{\sqrt{N^{P-1}}}\sum_i h_i m_i.
\end{align}

Now, we shorten the notation as $ \mathcal G _{N,K,\beta}^{(P)}( \bm \xi, ;  \bm z, \bm h,  \tilde{ \bm h}, \NotAlpha ) \equiv \mathcal G(\NotAlpha )$, and expand the
last expression around $\NotAlpha=0$ as 
\begin{align}
     \mathcal G(\NotAlpha)= \mathcal G(0) + \sum_{n=1}^\infty \frac{\NotAlpha^n G^{(n)}}{n!} \ \ \ \  \mathcal G^{(n)}= \left. \frac{\partial^n  \mathcal G(\NotAlpha)}{\partial \NotAlpha^n} \right \vert_{\NotAlpha=0} . 
     \label{eq:gibbexp}
\end{align}
For our computations we stop the expansion to the first order and we start to find all the terms we need. The non-interacting Gibbs potential $\mathcal G(0)$ reads as
\begin{align}
     \mathcal G(0)= -\frac{N}{\beta} \log 2 -\frac{1}{\b} \sum_i \log \cosh \left(\dfrac{\b}{\sqrt{N^{P-1}}} h_i\right) +\sum_{\mu} \tilde h_{\mu} \langle z_\mu \rangle  + \dfrac{1}{\sqrt{N^{P-1}}}\sum_i h_i \langle \sigma_i \rangle -\frac{1}{2} \sum_{\mu} \tilde h_{\mu}^2.
     \label{eq:G0}
\end{align}
If we extremize $\mathcal G(0)$ w.r.t. local fields, namely $h_i$ and $\tilde h_\mu$ for $\mu=1, \hdots, K,\ i=1, \hdots , N$, we find expressions of them read as 
\begin{align}
\label{eq:hi}
    h_i&= \frac{\sqrt{N^{P-1}}}{\b} {\tanh^{-1}}(m_i) \Longrightarrow h_i=\dfrac{\sqrt{N^{P-1}}}{2\b}\ln\left(\dfrac{1+m_i}{1-m_i}\right),\\
    \label{eq_hmu}
    \tilde h_\mu &= \langle z_\mu \rangle;
\end{align}
where we have use the relation $$\tanh^{-1}(x)= \ln \dfrac{1+x}{1-x}.$$ 
Putting \eqref{eq:hi} and \eqref{eq_hmu} in the non-interacting Gibbs potential \eqref{eq:G0} we get
\begin{align}
    \mathcal  G(0)= -\frac{N}{\b} \log 2 +\frac{1}{2\b} \sum_i \left[(1-m_i) \log(1-m_i) + (1+m_i) \log (1+m_i)\right]+\frac{1}{2} \sum_{\mu}  \langle z_\mu \rangle^2.
\end{align}

Now we have found $\mathcal G(0)$, all we need is the first-order contribution which is
\begin{align}
    \frac{\partial  \mathcal G(\NotAlpha)}{ \partial \NotAlpha}\Bigg|_{\NotAlpha=0} = -\dfrac{1}{\sqrt{N^{P-1}}}\sum_{i_1, \hdots, i_{P/2}} \sum_\mu \xi_{i_1}^\mu \hdots \xi_{i_{P/2}}^\mu \langle z_\mu \rangle m_{i_1} \hdots m_{i_{P/2}}.
\end{align}

Therefore the first-order expression of Gibbs potential for the full interacting system, namely for $\NotAlpha =1$ is 
\begin{align}
    \mathcal G(\NotAlpha=1)=&  -\frac{N}{\b} \log 2 +\frac{1}{2\b} \sum_i \left[(1-m_i) \log(1-m_i) + (1+m_i) \log (1+m_i)\right]+\frac{1}{2} \sum_{\mu}  \langle z_\mu \rangle^2\notag \\
    &-\dfrac{1}{\sqrt{N^{P-1}}}\sum_{i_1, \hdots, i_{P/2}} \sum_\mu \xi_{i_1}^\mu \hdots \xi_{i_{P/2}}^\mu \langle z_\mu \rangle m_{i_1} \hdots m_{i_{P/2}}.
    \label{eq:G1}
\end{align}

Extremizing \eqref{eq:G1} w.r.t. $m_i$ and $\langle z_\mu \rangle$, we find the respective self-consistency equations:

\begin{align}
\label{eq:P_a}
    \frac{\partial  \mathcal G}{\partial m_i}=0 \Rightarrow 
    m_i &= \tanh{\left[\b \dfrac{P}{2}\dfrac{1}{\sqrt{N^{P-1}}}\sum_\mu \xi_i^\mu \langle z_\mu \rangle\left(\sum_j \xi_j^\mu m_j \right)^{P/2-1}\right]},
\end{align}

\begin{align}
\label{eq:P_b}
     &\frac{\partial  \mathcal G}{\partial \langle z_\mu \rangle}=0 \Rightarrow 
     \langle z_\mu \rangle =\frac{1}{\sqrt{N^{P-1}}} \left(\sum_i \xi_i^\mu m_i\right)^{P/2}
\end{align}

These equations are then used ``in tandem'' to make the system evolve: starting from an initial configuration $(\boldsymbol \sigma^{(0)},\boldsymbol z^{(0)})$, we evaluate the related $m_i^{(0)}$ and $z_{\mu}^{(0)}$ for any $i$ and $\mu$, and we use them in \eqref{eq:P_a} to get $m_i^{(1)}$, the latter is then used in \eqref{eq:P_b} to get $z_{\mu}^{(1)}$, and we proceed this way, bouncing from \eqref{eq:P_a} to \eqref{eq:P_b}, up to thermalization.
We stress that these equations allow to implement an effective dynamics that avoid the computation of spin configurations. In fact, this coarse grained dynamics only cares about the Boltzmann average of each spin direction, whose behaviour is given by \eqref{eq:P_a}.
Even if at first glance \eqref{Halpha} seems to require three set of auxiliary variables, 
$\{z_{\mu} \}$ , $\{h_i\}$ and $\{\tilde{h}_i\}$,  the extremization of the Gibbs potential at first order fixes the external fields $\{h_i\}$ and $\{\tilde{h}\}$. The gaussian variables $\{z_{\mu} \}$ act  as latent dynamical variables that evolve according to \eqref{eq:P_b}. In such an iterative MC scheme these hidden degrees of freedom are suitably updated in order to effectively retrieve the pattern that constitutes the signal.

\subsection{Plefka's effective dynamics for unsupervised Hebbian learning}
The system is described by the Hamiltonian 
\begin{align}
         \mathcal{H}_{N,K,M, r}^{(P)}(\bm \sigma \vert \bm \eta; \bm z)
         = - \sqrt{\dfrac{1}{N^{P-1}\R^{P/2} M}}\SOMMA{\mu>1}{K}\SOMMA{a=1}{M}\left(  \SOMMA{i_{_1},\cdots, i_{_{P/2}}}{N,\cdots,N}\eta^{\mu\,,a}_{i_1}\cdots\eta^{\mu\,,a}_{i_{P/2}}\sigma_{_{i_1}}\cdots\sigma_{_{i_{_{P/2}}}}\right)\,z_{\mu,a} 
\end{align}
which is the Hamiltonian corresponding of the interacting term in \eqref{eq:integral}. Moreover,  $\{z_{\mu,a} \}$ is an additional set of real variable computed by Gaussian distribution we have added to describe the model.
Mirroring computations in Subsection \ref{app:plefka_hebb}, in order to use Plefka's dynamic, we introduce a control parameter $\NotAlpha$ and we defined a new Hamiltonian as

\begin{align}
     \mathcal{H}_{N,K,M, r}^{(P)}(\bm \sigma \vert \bm \eta; \bm z,\bm  h,  \tilde{ \bm h}, \NotAlpha) = \NotAlpha  \, 
     \mathcal{H}_{N,K,M}^{(P)}(\bm \sigma \vert \bm \eta;\bm z)
     - \sqrt{\dfrac{1}{N^{P-1}\R^{P/2} M}} \sum_i h_i \sigma_i -\sum_{\mu,a} \tilde h_{\mu,a} z_{\mu,a} 
     \label{Halpha_sup}
\end{align}
where $\NotAlpha$ describes the interaction strength. 
We stress that if $\NotAlpha=0$ we have the Hamiltonian of non-interacting terms. Moreover, $\{h_i\}$ and $\{\tilde{h}_{\mu, a}\}$ are external fields who act, respectively, on $\bm \sigma$ and $\bm z$. 

The expression of the modified Hamiltonian \eqref{Halpha_sup} can be used to write down the partition function as
\begin{align}
      &\mathcal{Z}_{N,K,M,r, \beta}^{(P)}( \bm \eta; \bm h,  \tilde{ \bm h}, \NotAlpha )
      =\sum_{\sigma} \int \prod_{\mu,a} dz_{\mu,a} \sqrt{\frac{\bbt}{2\pi}} \exp \left( - \frac{\bbt }{2} \sum_{\mu,a} z_{\mu,a}^2 +\bbt \sum_{\mu,a} \tilde h_{\mu,a} z_{\mu,a}  \right.\notag \\
    &\left.+\dfrac{\bbt}{\sqrt{N^{P-1}r^{2P} M}}\sum_i h_i \sigma_i  - \NotAlpha \dfrac{\bbt}{\sqrt{N^{P-1}r^{2P} M}} \sum_{\mu,a} \sum_{i_1, \hdots , i_{P/2}} \eta_{i_1}^{\mu,a} \hdots \eta_{i_{P/2}}^{\mu,a} \sigma_{i_1} \hdots \sigma_{i_{P/2}} z_{\mu,a} \right)
\end{align}
where we define $\tilde{\beta}$ as in \eqref{eq:tildebeta}. 

The Gibbs potential is defined as the Legendre transformation of the free energy constrained w.r.t. the magnetization $m_i= \langle \si \rangle$ and $\langle z_{\mu,a} \rangle$ averaged w.r.t. the Boltzmann distribution $P(\bm \sigma) \sim \exp^{-\beta H(\NotAlpha)}$  \eqref{gibbsPotential_dense}. We have 
\begin{align}
\label{gibbsPotential}
     &\mathcal G _{N,K,M,r, \beta}^{(P)}( \bm \eta;\bm z,\bm h,   \tilde{ \bm h}, \NotAlpha )
      -\frac{1}{\bbt}\ln  
     \mathcal{Z}_{N,K,M,r, \beta}^{(P)}( \bm \eta; \bm h,   \tilde{ \bm h}, \NotAlpha )
     +  \sum_{\mu} \tilde h_{\mu} \langle z_\mu \rangle + \dfrac{1}{\sqrt{N^{P-1}r^{2P} M}} \sum_i h_i m_i
\end{align}
and we write Plefka's expansion as in \eqref{eq:gibbexp}. Also in this case we stop the expansion at the first order. Thus,
shortening the notation as $ \mathcal G _{N,K,M,r, \beta}^{(P)}( \bm \eta;  \bm z, \bm h, \tilde{\bm h}, \NotAlpha ) \equiv G(\NotAlpha)$, we compute the non-interacting Gibbs potential $\mathcal G(0)$ and the first order contribution $\left.\dfrac{\partial \mathcal{G}(\NotAlpha)}{\partial \NotAlpha}\right\vert_{\NotAlpha=0}$. Let us start from $\mathcal G(0)$. 

\begin{align}
     \mathcal G(0)=& -\frac{N}{\bbt} \log 2 -\frac{1}{\bbt} \sum_i \log \cosh \left(\dfrac{\bbt}{\sqrt{N^{P-1}r^{2P} M}} h_i\right) +\sum_{\mu,a} \tilde h_{\mu,a} \langle z_{\mu,a}\rangle  \notag \\
     &+ \dfrac{1}{\sqrt{N^{P-1}r^{2P} M}}\sum_i h_i \langle \sigma_i \rangle -\frac{1}{2} \sum_{\mu,a} \tilde h_{\mu,a}^2
\end{align}

If we extremize $ \mathcal G(0)$ w.r.t. the local fields, namely $h_i$ and $\tilde{h}_{\mu,a}$ for $i=1, \hdots, N$, $a=1, \hdots , M$, we can find their expressions, that read as 
\begin{align}
    h_i&=\dfrac{\sqrt{N^{P-1}r^{2P} M}}{2\bbt}\log\left(\dfrac{1+m_i}{1-m_i}\right),\\
    \tilde h_{\mu,a} &= \langle z_{\mu,a} \rangle;
\end{align}
While the first-order derivative of Gibbs potential w.r.t. $\NotAlpha$ is 
\begin{align}
    \left.\frac{\partial  \mathcal G(\NotAlpha)}{ \partial \NotAlpha}\right\vert_{\NotAlpha=0} = -\dfrac{1}{\sqrt{N^{P-1}r^{2P} M}}\sum_{i_1, \hdots, i_{P/2}} \sum_{\mu,a} \eta_{i_1}^{\mu,a} \hdots \eta_{i_{P/2}}^{\mu,a} \langle z_{\mu,a} \rangle m_{i_1} \hdots m_{i_{P/2}}.
\end{align}
Therefore, $ \mathcal G(\NotAlpha)$ is rewritten using Plefka's expansion as 
\begin{align}
     \mathcal G(\NotAlpha)=&-\frac{N}{\bbt} \log 2 +\frac{1}{2\bbt} \sum_i \left[(1-m_i) \log(1-m_i) + (1+m_i) \log (1+m_i)\right] \notag \\
    &+\frac{1}{2} \sum_{\mu,a}  \langle z_{\mu,a} \rangle^2 -\dfrac{\NotAlpha}{\sqrt{N^{P-1}r^{2P} M}}   \sum_{\mu,a} \left(\sum_{i}\eta_{i}^{\mu,a} m_{i} \right)^{P/2} \langle z_{\mu,a} \rangle.
\end{align}

To conclude, we can compute the self-consistence equations w.r.t. $m_i$ and $\langle z_{\mu, a}\rangle$ extremizing the first order expression of $\mathcal{G}(\varphi =1)$ read as 
\begin{align}
    &m_i = \tanh{\left[\bbt \dfrac{P}{2}\dfrac{1}{\sqrt{N^{P-1}r^{2P}M}}\sum_{\mu,a} \eta_i^{\mu,a} \langle z_{\mu,a} \rangle\left(\sum_j \eta_j^{\mu,a} m_j \right)^{P/2-1}\right]}, \\
     &\langle z_{\mu,a} \rangle =\frac{1}{\sqrt{N^{P-1}r^{2P}M}} \left(\sum_i \eta_i^{\mu,a} m_i\right)^{P/2}
\end{align}
 
These equations are then used ``in tandem'' to make the system evolve, as explained in the previous subsection.
This iterative MC updating scheme leaves the network free to arrange the hidden degrees of freedom $\{ z_{\mu,a} \}$ in such a way that the mean values of the neurons $m_i$ converge to the correspondent element of the archetype vector, provided that the network is posed in the retrieval region of the phase diagram.

\section{Evaluation of the momenta of the effective post-synaptic potential}
\label{app:momenta}

The purpose of this section is to evaluate the first and second momenta of the expression 
$\xi_i^1 h_i^{(1)}(\boldsymbol \xi^1)$, that are referred to as, respectively, $\mu_1$ and $\mu_2$ and are used in 
Sec.~\ref{sec:MC}; specifically,
\begin{eqnarray}
\small
\mu_1\coloneqq \mathbb{E}_{\xi}\mathbb{E}_{(\eta|\xi)}\left[\xi_i^{1}h_i^{(1)}(\boldsymbol{\xi}^{1})\right]=\dfrac{1}{\R^{P/2}M N^{P-1}}\SOMMA{\mu,a=1}{K,M}\SOMMA{(i_2,\cdots,i_P)}{}\mathbb{E}_{\xi}\mathbb{E}_{(\eta|\xi)}\left[(\xi^1_{i_1}\cdots\xi^{1}_{i_P})\eta^{\mu, a}_{i_1}...\eta^{\mu,a}_{i_P}\right],
\end{eqnarray}
\begin{align}
\begin{array}{llll}
     \mu_2\coloneqq \mathbb{E}_{\xi}\mathbb{E}_{(\eta|\xi)}\left[\lbrace h_i^{(1)}(\boldsymbol{\xi}^{1})\rbrace ^2\right]&=\dfrac{1}{ N^{2P-2}\R^{P}M^2}\SOMMA{\mu,\nu=1}{K}&\SOMMA{a,b=1}{M,M}\SOMMA{(i_2,\cdots,i_P)}{}\SOMMA{(j_2,\cdots,j_P)}{}\mathbb{E}_{\xi}\mathbb{E}_{(\eta|\xi)}\left[\left(\xi^1_{i_2}\xi^1_{j_2}...\xi^1_{i_P}\xi^1_{j_P}\right)\right.
\\\\
&&\left.\eta^{\mu, a}_{i_1}\eta^{\nu, b}_{i_1}\left(\eta^{\mu, a}_{i_2}\eta^{\nu, b}_{j_2}...\eta^{\mu,a}_{i_P}\eta^{\nu,b}_{j_P}\right)\right].
\end{array}
\end{align}

As for $\mu_1$, using $\mathbb{E}_{(\eta|\xi)}[\eta_i^{\mu,a}]=r\xi_i^\mu$,
\begin{eqnarray}
\small
\mu_1=\dfrac{1}{\R^{P/2}M N^{P-1}}\SOMMA{\mu=1}{K}\SOMMA{(i_2,\cdots,i_P)}{}\mathbb{E}_{\xi}\left[Mr^P\left(\xi^1_{i_1}\cdots\xi^{1}_{i_P}\right)\left(\xi^{\mu}_{i_1}...\xi^{\mu}_{i_P}\right)\right],
\end{eqnarray}
since $\mathbb{E}_{\xi}[\xi_i^{\mu}]=0$ the only non-zero terms are those with $\mu=1$ and the expression simplifies into
\begin{eqnarray}
\mu_1
= \dfrac{ r^P}{\R^{P/2} N^{P-1}}\SOMMA{(i_2,\cdots,i_P)}{}\mathbb{E}_{\xi}\left[\left(\xi^1_{i_1}\cdots\xi^{1}_{i_P}\right)^2\right]=\dfrac{1}{(1+\rho)^{P/2}} .
\end{eqnarray}

As for $\mu_2$, 
since $\mathbb{E}_{\xi}\mathbb{E}_{(\eta|\xi)}[\eta^{\mu,a}_{i_1}\eta^{\nu,b}_{i_1}]=r^2\delta^{\mu\nu}$, the only non-zero terms are those where $\mu=\nu$, thus
\begin{align}
\begin{array}{llll}
     \mu_2&= &\dfrac{1}{N^{2P-2}\R^{P}M^2}\SOMMA{\mu=1}{K}\SOMMA{a,b=1}{M,M}\SOMMA{(i_2,\cdots,i_P)}{}\SOMMA{(j_2,\cdots,j_P)}{}\mathbb{E}_{\xi}\mathbb{E}_{(\eta|\xi)}\left[\left(\xi^1_{i_2}\xi^1_{j_2}...\xi^1_{i_P}\xi^1_{j_P}\right)\right.
\\\\
&&\left.\eta^{\mu, a}_{i_1}\eta^{\mu, b}_{i_1}\left(\eta^{\mu, a}_{i_2}\eta^{\mu, b}_{j_2}...\eta^{\mu,a}_{i_P}\eta^{\mu,b}_{j_P}\right)\right]
\\\\
&=&A_{\mu=1}+B_{\mu>1},
\end{array}
\end{align}
\normalsize
where, in the last line, we highlighted the contributions stemming from terms with, respectively, $\mu=1$ ($A_{\mu=1}$) and $\mu>1$ ($B_{\mu >1}$). These are evaluated hereafter:
\begin{align}
\small
\label{eq:amu}
\begin{array}{llll}
    A_{\mu=1}&=&\dfrac{1}{ N^{2P-2}\R^{P}M^2}\SOMMA{\mu=1}{K}\SOMMA{(i_2,\cdots,i_P)}{}\SOMMA{(j_2,\cdots,j_P)}{}\left\{\SOMMA{a=1}{M}\mathbb{E}_{\xi}\mathbb{E}_{(\eta|\xi)}\left[\left(\xi^1_{i_2}\xi^1_{j_2}...\xi^1_{i_P}\xi^1_{j_P}\right)\right.\right.
\\\\
&&\left.\left(\eta^{\mu, a}_{i_2}\eta^{1, a}_{j_2}...\eta^{1,a}_{i_P}\eta^{1,a}_{j_P}\right)\right]+\left.\SOMMA{a\neq b}{M}\mathbb{E}_{\xi}\mathbb{E}_{(\eta|\xi)}\left[\left(\xi^1_{i_2}\xi^1_{j_2}...\xi^1_{i_P}\xi^1_{j_P}\right)\eta^{1, a}_{i_1}\eta^{1, b}_{i_1}\left(\eta^{\mu, a}_{i_2}\eta^{1, b}_{j_2}...\eta^{1,a}_{i_P}\eta^{1,b}_{j_P}\right)\right]\right\}
    \\\\
    &=&\dfrac{1}{ N^{2P-2}\R^{P}M^2}\SOMMA{\mu=1}{K}\SOMMA{(i_2,\cdots,i_P)}{}\SOMMA{(j_2,\cdots,j_P)}{}\mathbb{E}_{\xi}\left[\SOMMA{a=1}{M}r^{2P-2}\left(\xi^1_{i_2}\xi^1_{j_2}...\xi^1_{i_P}\xi^1_{j_P}\right)^2+\SOMMA{a\neq b}{M}r^{2P}\left(\xi_{i_1}^1\xi^1_{i_2}\xi^1_{j_2}...\xi^1_{i_P}\xi^1_{j_P}\right)^2\right]
    \\\\
    &=&\dfrac{r^{2P}}{\R^P M}\left[ r^{-2}+(M-1)\right]=\dfrac{r^{2P}}{\R^P}\left[1+ \dfrac{1-r^2}{r^2M}\right]=\left(\dfrac{1}{(1+\rho)^{P/2}}\right)^2\left(1+\rho\right)\,;
\end{array}
\end{align}
for $B_{\mu>1}$, splitting the case $a=b$ and $a\neq b$, we get
\begin{align}
\begin{array}{llll}
     B_{\mu >1}&=&\dfrac{1}{ N^{2P-2}\R^{P}M^2}\SOMMA{\mu>1}{K}\SOMMA{(i_2,\cdots,i_P)}{}\SOMMA{(j_2,\cdots,j_P)}{}\left\{\mathbb{E}_{\xi}\mathbb{E}_{(\eta|\xi)}\SOMMA{a=1}{M}\left[\left(\xi^1_{i_2}\xi^1_{j_2}...\xi^1_{i_P}\xi^1_{j_P}\right)\right.\right.
\\\\
&&\left.\left(\eta^{\mu, a}_{i_2}\eta^{\mu, a}_{j_2}...\eta^{\mu,a}_{i_P}\eta^{\mu,a}_{j_P}\right)\right]+\mathbb{E}_{\xi}\mathbb{E}_{(\eta|\xi)}\SOMMA{a\neq b}{M}\left(\xi^1_{i_2}\xi^1_{j_2}...\xi^1_{i_P}\xi^1_{j_P}\right)\left.\eta^{\mu, a}_{i_1}\eta^{\mu, b}_{i_1}\left(\eta^{\mu, a}_{i_2}\eta^{\mu, b}_{j_2}...\eta^{\mu,a}_{i_P}\eta^{\mu,b}_{j_P}\right)\right]\Bigg\}
\\\\
&=&\dfrac{1}{ N^{2P-2}\R^{P}M^2}\SOMMA{\mu>1}{K}\SOMMA{(i_2,\cdots,i_P)}{}\SOMMA{(j_2,\cdots,j_P)}{}\mathbb{E}_{\xi}\left\{r^{2P-2}\SOMMA{a=1}{M}\left(\xi^1_{i_2}\xi^1_{j_2}...\xi^1_{i_P}\xi^1_{j_P}\right)\left(\xi^{\mu}_{i_2}\xi^{\mu}_{j_2}...\xi^{\mu}_{i_P}\xi^{\mu}_{j_P}\right)\right.
\\\\
&&+r^{2P}\SOMMA{a\neq b}{M}\left(\xi^1_{i_2}\xi^1_{j_2}...\xi^1_{i_P}\xi^1_{j_P}\right)\left(\xi^{\mu}_{i_2}\xi^{\mu}_{j_2}...\xi^{\mu}_{i_P}\xi^{\mu}_{j_P}\right)\Bigg\}
\end{array}
\end{align}
as far as $\mathbb{E}_\xi(\xi^\mu_i\xi^\mu_j)=\delta_{ij}$, the only non-zero terms are the ones in which the sum over $i$ and $j$ will be equal in pairs:
\begin{align}
\begin{array}{llll}
     B_{\mu >1}&=&\dfrac{(P-1)!}{N^{2P-2}\R^{P}M^2}\SOMMA{\mu>1}{K}\SOMMA{(i_2,\cdots,i_P)}{}\mathbb{E}_{\xi}\left\{\left[r^{2P-2}M+r^{2P}M(M-1)\right]\left(\xi^1_{i_2}...\xi^1_{i_P}\right)^2\left(\xi^{\mu}_{i_2}...\xi^{\mu}_{i_P}\right)^2\right\}
     \\\\
     &=&\dfrac{r^{2P}}{N^{P-1} \R^P}(P-1)!K\left(1+\dfrac{1-r^{2P}}{Mr^{2P}}\right)=\dfrac{r^{2P}}{N^{P-1}\R^P }(P-1)!K\left(1+\rho_{_P}\right),
\end{array}
\end{align}
and, if we 
set $K=\alpha_{P-1}N^{P-1}$ we have  
\begin{align}
\begin{array}{llll}
     B_{\mu >1}=\left(\dfrac{1}{(1+\rho)^{P/2}}\right)^2(P-1)!\alpha_{P-1}\left(1+\rho_{_P}\right).
     \label{eq:bmu}
\end{array}
\end{align}

Putting together \eqref{eq:amu} and \eqref{eq:bmu} we reach the explicit expression of $\mu_2$.


\section{List of symbols (in alphabetical order)}
\begin{itemize}
    \item $\mathcal{A}$ is the statistical quenched pressure (i.e. $\mathcal{A}=-\beta  \mathcal{F}$)
    \item $\alpha_{b}$ is the storage of the network defined as $\alpha_{b}=\lim_{N\rightarrow +\infty} K/N^b$
    \item $\mathcal{B}(\bm \sigma; t)$ is the Boltzmann factor, defined as $\mathcal{B}(\bm \sigma; t)=\exp[\beta {H}(\bm \sigma; t)]$
    \item $\beta \in \mathbb{R}^+$   is  the (fast) noise in the network (such that for $\beta \to 0$ the behavior of the network is a pure random walk while for $\beta \to +\infty$ it is a steepest descent toward the minima)
    \item $\mathbb{E}$ denotes the average over all the (quenched) coupling variables
    \item  $\bm \eta \in \{ -1, +1 \}^{K\times M}$ are the noisy examples, namely  noisy versions of the archetypes $\bm \xi^{\mu}$
    \item $\mathcal{F}$ is the free energy (i.e. $\mathcal{F}= -\beta^{-1}\mathcal{A}$)
    \item $\gamma$ is the $P$ independent part of $\alpha_{b}$, namely $\alpha_{b}= \gamma \frac{2}{P!}$
    \item $\mathcal{H}$ is the cost function (or Hamiltonian) defining the model
    \item $K \in \mathbb{N}$ is the amount of archetypes $\bm \xi$ to learn and retrieve
    \item $N \in \mathbb{N}$ is the amount of neurons in the network, i.e. the network size
    \item $M \in \mathbb{N}$ is the amount of examples per archetype, i.e. the training set
    \item $m_\mu$ is the Mattis magnetization of the archetype $\xi^\mu$ defined as $\frac{1}{N} \sum_i \xi_i^{\mu} \sigma_i$
    \item $\bar{m}$ is the asymptotic value of the Mattis magnetization 
    of the signal, $m$, 
    in the thermodynamic limit, i.e. $\lim\limits_{N \to \infty}P(m_{\mu})=\delta(m
    - \bar{m})$
    \item $n_{a, \mu}$ is the magnetization of the example $\eta^{\mu,a}$ defined as $\frac{1}{N} \sum_i \eta_i^{\mu,a} \sigma_i$
    \item $\bar{n}$ is the asymptotic value of the magnetization 
    of each example related to the signal, $n_{1, a}$,  
    in the thermodynamic limit, i.e.  $\lim_{N \to \infty}P(n_{1,a})=\delta(n_{1,a} - \bar{n})$
    \item $\omega_t(O(\bm \sigma))$ is the generalized Boltzmann measure, namely $\omega_t(O(\bm \sigma))= \frac{1}{\mathcal{Z}_N} \sum_{\bm \sigma} O(\bm \sigma) \mathcal{B}(\bm \sigma; t)$
    \item $w$ is the noise added to synaptic tensor $\bm \eta$ defined as $w = \tau N^{\delta}$ where $\delta=\frac{(P-1)-b}{2}$ 
    \item $P$ is the degree of interaction among neurons in the network (e.g. $P=2$ is the standard pairwise scenario)
    \item $q_{lm}$ is the overlap among two replicas  defined as $\frac{1}{N} \sum_i \sigma_i^{(l)} \sigma_i^{(m)}$ 
    \item $\bar{q}$ is the asymptotic value of $q_{lm}$ in the thermodynamic limit and under the replica symmetric assumption, i.e. $\lim_{N \to \infty}P(q_{lm})=\delta(q_{lm}-\bar{q})$   
    \item $\mathcal{R}$ is defined as $\mathcal{R}=r^2 + \frac{1-r^2}{M}$
    \item $r$ assesses the training set quality such that for $r\rightarrow 1$ the example matches perfectly the archetype whereas for $r=0$ solely noise remains.
    \item $\rho$ quantifies the entropy of the training set, namely $\rho=\frac{1-r^2}{r^2M}$
    \item $\rho_{_P}$ is a generalizaton of $\rho$ defined as $\rho_{_P}=\frac{1-r^{2P}}{r^{2P}M}$
    \item $t \in (0,1)$ is the parameter for Guerra's interpolation: when $t=1$ we recover the original model, whereas for $t=0$ we compute the one-body terms
    \item $\mathcal{Z}$ is the partition function
    \item $\langle O(\bm \sigma) \rangle$ is the generalize average defined as $\langle O(\bm \sigma) \rangle = \mathbb{E} \omega_t(O(\bm \sigma))$
\end{itemize}

\section*{Acknowledgments}

E.A. acknowledges ﬁnancial support from Sapienza University of Rome (RM120172B8066CB0) and from PNRR MUR project n. PE0000013-FAIR.
\newline
A.B. acknowledges ﬁnancial support from Ministero degli Affari Esteri e della Cooperazione Internazionale Italy-Israel (F85F21006230001) and PRIN grant {\em Statistical Mechanics of Learning Machines: from algorithmic and information-theoretical limits to new biologically inspired
paradigms} n. 20229T9EAT.
\newline
L.A. acknowledges financial support from INdAM –GNFM Project (CUP E53C22001930001) and from PRIN grant {\em Stochastic Methods for Complex Systems} n. 2017JFFHS
\newline
D.L. acknowledges INdAM and C.N.R. (National Research Council),  and A.A. acknowledges  UniSalento, both for financial  support via PhD-AI. 
\newline
E.A., L.A., F.A., A.A., A.B acknowledge the stimulating research environment provided by the Alan Turing Institute's Theory and Methods Challenge Fortnights event ``Physics-informed Machine Learning". 

\bibliographystyle{abbrv}
\bibliography{Unsupervised_V5_AB_AllBlack.bib}

\end{document}